\newcommand{\CC}{\mathbb{C}} 
\newcommand{\RR}{\mathbb{R}} 
\newcommand{\QQ}{\mathbb{Q}} 
\newcommand{\HH}{\mathbb{H}}
\newcommand{\ZZ}{\mathbb{Z}}
\newcommand{\A}{\mathcal{A}}
\newcommand{\C}{\mathcal{T}}
\newcommand{\E}{\mathcal{E}}
\newcommand{\F}{\mathcal{F}}
\newcommand{\M}{\mathcal{M}}
\newcommand{\N}{\mathcal{N}}
\newcommand{\W}{\mathcal{W}}
\newcommand{\Leech}{\Lambda}
\newcommand{\theory}{\mathcal{T}}
\newcommand{\modgrp}{{\mathfrak G}_g}
\newcommand{\modgrpnog}{\mathfrak G}
\newcommand{\Co}{\textsl{Co}}	
\newcommand{\cN}{{\mathcal N}}
\newcommand{\eg}{{\cal Z}}
\newtheorem{theorem}{Theorem}
\newtheorem{proposition}[theorem]{Proposition}
\newtheorem{lemma}[theorem]{Lemma}
\newtheorem{corollary}[theorem]{Corollary}
\newtheorem{conjecture}[theorem]{Conjecture}
\newcommand{\SL}{\operatorname{\textsl{SL}}}
\DeclareMathOperator{\Aut}{Aut}
\DeclareMathOperator{\Tr}{Tr}
\newcommand{\gmo}[1]{\Gamma_{\langle-1\rangle}(#1)}
\numberwithin{equation}{section}
\def\be{\begin{equation}}
\def\ee{\end{equation}}
\title{K3 String Theory, Lattices and Moonshine}
\author[1,2]{Miranda C. N. Cheng\thanks{mcheng@uva.nl (On leave from CNRS, France.)}}
\author[3]{Sarah M. Harrison\thanks{sarharr@physics.harvard.edu}}
\author[4,5,6]{Roberto Volpato\thanks{volpato@pd.infn.it}}
\author[5]{Max Zimet\thanks{maxzimet@stanford.edu}}
\date{}
\affil[1]{\small{Korteweg-de Vries Institute for Mathematics, Amsterdam, the Netherlands}}
\affil[2]{\small{Institute of Physics, University of Amsterdam, Amsterdam, the Netherlands}}
\affil[3]{\small{Center for the Fundamental Laws of Nature, 
Harvard University, Cambridge, MA 02138, USA}}
\affil[4]{\small{Dipartimento di Fisica e Astronomia `Galileo Galilei' e INFN sez. di Padova\authorcr
Via Marzolo 8, 35131 Padova (Italy)}}
\affil[5]{\small{Stanford Institute for Theoretical Physics, Department of Physics \authorcr
Stanford University, Stanford, CA 94305, USA}}
\affil[6]{\small{Theory Group, SLAC, Menlo Park, CA 94309, USA\vspace{-15pt}} }
\begin{document}
\maketitle

\abstract{In this paper we address the following two closely related questions. First, we complete the classification of finite symmetry groups of type IIA string theory on $K3 \times \mathbb R^6$, where  Niemeier lattices play an important role.  This extends earlier results by including points in the moduli space with enhanced gauge symmetries in spacetime, or, equivalently, where the world-sheet CFT becomes singular. After classifying the symmetries as abstract groups, we study how they act on the BPS states of the theory. In particular, we classify the conjugacy classes in the T-duality group $O^+(\Gamma^{4,20})$ which represent physically distinct symmetries. Subsequently, we make two conjectures regarding the connection between the corresponding twining genera of $K3$ CFTs and Conway and umbral moonshine, building upon earlier work on the relation between moonshine and the $K3$ elliptic genus.

}
\pagebreak
\tableofcontents

\newpage 

\section{Introduction}
In this paper we study discrete symmetry groups of $K3$ string theory and their action on the BPS spectrum. $K3$ surfaces play an important role in various aspects of mathematics and string theory. For instance, type II string compactifications on $K3\times~T^d \times~\mathbb R^{5-d,1}$ preserve 16 supersymmetries, leading to various exact results regarding the spectrum of BPS states from both the spacetime and world-sheet points of view. In addition, they provide some of the first instances of both holographic duality and a microscopic description of black hole entropy. Geometrically, the Torelli theorem allows for an exact description of the geometric moduli space and makes it possible to analyze the discrete  groups of symplectomorphisms in terms of lattices.  In particular, there is an intriguing connection between $K3$ symmetries and sporadic groups which constitutes the first topic of the current work.

Recall that the sporadic groups are the 26  finite simple groups  that do not belong to any of the infinite families of finite simple groups. Their  exceptional character raises the following questions: Why do they exist? What geometrical and physical objects do they naturally act on? This is one of the reasons why the discovery of (monstrous) moonshine---relating the representation theory of the largest sporadic simple group and a set of canonical modular functions attached to a chiral 2d CFT---is such a fascinating and important chapter in the study of sporadic groups. On the other hand, the relation of other sporadic groups to the ubiquitous $K3$ surface is a surprising result that provides another hint about their true raison d'\^etre.  In this work we will relate  two properties of sporadic groups: moonshine and $K3$ symmetries. 

The connection between $K3$ surfaces and sporadic groups first manifested itself in a celebrated theorem by Mukai \cite{mukai1988finite}, which was further elucidated by Kondo \cite{Kondo}. 
Mukai's theorem established a close relation between the Mathieu group $M_{23}$, one of the 26 sporadic groups, and the symmetries of $K3$ surfaces, in terms of a bijection between (isomorphism classes of) $M_{23}$ subgroups with at least five orbits and (isomorphism classes of) finite groups of $K3$ symplectomorphisms. A generalization of this classical result to ``stringy $K3$ geometry'' was initiated by Gaberdiel, Hohenegger, and Volpato in \cite{K3symm}, using lattice techniques in a method closely following Kondo's proof of the Mukai theorem.  More precisely, the symmetry groups of any non-linear sigma model (NLSM) on $K3$, corresponding to any point in the moduli space \eqref{moduli_space} excepting loci corresponding to singular NLSMs, have been classified in \cite{K3symm}. 
 From the spacetime  (D-branes) point of view, the results of \cite{K3symm} can be viewed as classifying symplectic autoequivalences (symmetries) of derived categories on $K3$ surfaces \cite{huybrechts}. See also \cite{EqK3} for related discussion on symmetries of appropriately defined moduli spaces relevant for curve counting on $K3$.
 The embedding of relevant sublattices of the $K3$ cohomology lattice into the Leech lattice plays an important role in the analysis, and as a result the classification is phrased in terms of subgroups of the automorphism group $\Co_0$ (``Conway zero") of Leech lattice. Recall that there are  24 equivalence classes of 24-dimensional negative-definite even unimodular lattices, called the 24 Niemeier lattices\footnote{Note that this is different from the terminology used in 
\cite{Cheng:2013wca}, where the name ``Niemeier lattice'' is reserved for the twenty-three 24-dimensional negative-definite even unimodular lattices with non-trivial root systems, and hence excludes the Leech lattice.}. All but one of them have root systems of rank 24; these are generated by the lattice vectors of length squared two. The only exception is the Leech lattice, which  has no root vectors. 
 
The first part of the results of the present paper, consisting in a  corollary (Corollary \ref{cor_in_phys_terms}) of  two mathematical theorems (Theorem \ref{t:dirclass} and \ref{t:invclass}), extends this classification to theories corresponding to singular loci in the moduli space of $K3$ NLSMs. 
It is necessary to make use of all  24 Niemeier lattices in order to generalize the analysis to include these singular loci. 
Despite the fact that the type IIA worldsheet theory behaves badly along these loci \cite{Aspinwall:1995zi}, the full type IIA string theory is not only completely well-defined but  also possesses special physical relevance in connection to non-Abelian gauge symmetries. Recall 
 that the spacetime gauge group is enhanced from  $U(1)^{24}$ to some nonabelian group  at these loci, and the ADE type gauge group  is given by the ADE type singularity of the $K3$ surface \cite{Witten:1995ex,Aspinwall:1995zi}. The existence of such loci with enhanced gauge symmetries  in the moduli space, though not immediately manifest from the world-sheet analysis in type IIA, is clear from the point of view of the dual heterotic $T^4$ compactification. In this work we are interested in finite group symmetries which preserve the ${\cal N}=(1,1)$ spacetime supersymmetry from the point of view of  type IIA compactifications.

Apart from these physical considerations, another important motivation to understand the discrete symmetries of general type IIA compactifications on $K3$ surfaces is the following. 
The $K3$ surface--sporadic group connection has  recently entered the spotlight due to the discovery of new moonshine phenomena, initiated by an observation of Eguchi, Ooguri, and Tachikawa (EOT) \cite{EOT}. The $K3$ elliptic genus \eqref{defEG}, a function which counts BPS states of  $K3$ NLSMs and a loop-space index generalizing the Euler characteristic and the $A$-roof genus, is shown to encode an infinite-dimensional graded representation of the largest Mathieu sporadic group $M_{24}$. (Note that the group featured in Mukai's theorem, $M_{23}$, is a subgroup of $M_{24}$ as the name suggests.)
A natural guess is hence that there exists a $K3$ NLSM with $M_{24}$ acting as its symmetry group. However, the classification result of \cite{K3symm} precludes this solution, and one must find an alternative way to explain  Mathieu moonshine. See \S\ref{sec:discussion} for further discussion on this point.

The observation of EOT was  truly surprising  and led to a surge in activity in the study of (new) moonshine phenomena. 
Two of the subsequent developments, regarding umbral and Conway moonshines and their relation to $K3$ NLSMs, motivated the second part of our results which are encapsulated by two conjectures (Conjecture \ref{conj:all_arise_from_moonshine} and \ref{c:manytwin}) and further detailed in appendix \ref{a:classifclasses}.

 The first  development is the discovery of umbral moonshine and its proposed relation to stringy $K3$ geometry. A succinct and arguably the most natural way to describe Mathieu moonshine is in terms of the relation between a certain set of mock modular forms and $M_{24}$. See, for instance, \cite{DMZ} for an introduction on mock modular forms. 
Studying Mathieu moonshine from this point of view \cite{Cheng:2011ay}, it was realized in  \cite{Cheng:2012tq,Cheng:2013wca} that it is but one case of a larger structure, dubbed  umbral moonshine. 
Umbral moonshine consists of a family of 23 moonshine relations corresponding to the 23 Niemeier lattices $N$ with non-trivial root systems: while the automorphism group of a Niemeier lattice dictates the relevant finite group $G_N$ (cf. \eqref{Niemeier_grp}), the root system of the lattice helps determine a unique (vector-valued) mock modular form associated with each conjugacy class of $G_N$. See \S\ref{s:conjectures} for  more detail. One of the umbral moonshine conjectures then states that there exists a natural way to associate a graded infinite-dimensional module with the finite group $G_N$ such that its graded character coincides with the specified mock modular forms. So far, these modules have been shown to exist \cite{Gannon:2012ck,Duncan:2015rga}, although,  with the exception of a special case  \cite{Duncan:2014tya}, their construction is still lacking. While the mock modularity suggests a departure from the usual vertex operator algebra (VOA; or chiral CFT) structure inherent in, e.g., monstrous moonshine, the existence of the generalized umbral moonshine  \cite{Gaberdiel:2012gf,Cheng:2016nto} suggests that certain key features of VOA should nevertheless be present in the modules underlying umbral moonshine. 
 Subsequently, motivated by previous work \cite{Ooguri:1995wj,nikulin2013kahlerian}, the relation between all 23 instances of umbral moonshine and symmetries of $K3$ NLSMs was suggested in  \cite{Cheng:2014zpa} in the form of a proposed relation  \eqref{um_twining} between the umbral moonshine mock modular forms and the $K3$ elliptic genus twined by certain symmetries \eqref{twiningdef}.
  
The second important development, inspired by the close relation between the Conway group $\Co_0$ and  stringy $K3$ symmetries \cite{K3symm}, relates Conway moonshine also to the twined $K3$ elliptic genus \cite{duncan2016derived}. The Conway moonshine module is a chiral superconformal field theory with $c=12$ and symmetry group $\Co_0$, which was first discussed in \cite{frenkel1985moonshine} and further studied in \cite{duncan2007super,Duncan:2014eha}. Using the Conway module, the authors of \cite{duncan2016derived} associate two (possibly coinciding) Jacobi forms to each conjugacy class of $\Co_0$, and conjecture that this set constitutes a complete list of possible $K3$ twining genera. In particular, it was conjectured that one of the two such Jacobi forms arising from Conway moonshine is attached to each symmetry of any non-singular $K3$ NLSM. Note that many, but not all, of the functions arising from umbral moonshine \cite{Cheng:2014zpa} and Conway moonshine \cite{duncan2016derived} coincide. 

As the first part of our results establishes the importance of all 24 Niemeier lattices in the study of symmetries of $K3$ string theory, it is natural to suspect that both umbral and Conway moonshine might play a role in describing the action of these symmetry groups on the (BPS) spectrum of  $K3$ string theory. Note that the CFT is not well-defined at the singular loci of the module space, and hence we restrict our attention to the non-singular NLSMs when we discuss the (twined) elliptic genus. Motivated by the connection between the stringy K3 symmetries and moonshine,  our analysis of world-sheet parity symmetries of NLSMs (see \S\ref{s:parity}), and  results regarding   Landau--Ginzburg orbifolds \cite{cheng2015landau}, in this paper we conjecture (Conjecture \ref{conj:all_arise_from_moonshine}) that the proposed twining genera arising from umbral and Conway moonshine as defined in  \cite{Cheng:2014zpa} and  \cite{duncan2016derived}
 capture all of the possible discrete stringy symmetries of any NLSM in the $K3$ CFT moduli space. Moreover, we conjecture (Conjecture \ref{c:manytwin}) that each of the umbral and Conway moonshine functions satisfying certain basic assumptions (that the symmetry preserves at least a four-plane in the defining 24-dimensional representation) is realized as the physical twining genus of a certain $K3$ NLSM. These conjectures pass a few non-trivial tests. In particular, in this paper we also obtain an almost complete classification of conjugacy classes of the discrete T-duality group $O^+(\Gamma^{4,20})$, as well as a partial classification of the twined $K3$ elliptic genus using methods independent of moonshine. These classification results, summarized in Table \ref{last_table_genera}, are not only of interest on their own but also provide strong evidence for these conjectures which consolidate our understanding of stringy $K3$ symmetries and the relation between $K3$ BPS states and moonshine. 
  
 The rest of the paper is organized as follows. In \S \ref{sec:class}, we classify the 
symmetry groups which arise in type IIA string theory on $K3\times \mathbb R^6$ and preserve the world-sheet $\mathcal N=(4,4)$ superconformal algebra in terms of two theorems. 
This extends the result of \cite{K3symm} to singular points in the moduli space of $K3$ NLSMs. 
In \S\ref{sec:conj} we discuss how these symmetry groups  act on the BPS spectrum of the theory. In particular, we present two conjectures relating the twining genera of NLSMs to the functions which feature in umbral and Conway moonshine. 
In \S \ref{sec:alltwinings} we summarize all the computations of twining genera in physical models that are known so far, including torus orbifolds and Landau-Ginzburg orbifolds, and explain how this data provides evidence for our conjectures.
 Finally, we conclude with a discussion in \S \ref{sec:discussion}. 
A number of appendices include useful information which complements the main text. 
In appendix \ref{app:lats} we summarize some basic facts about lattice theory. The proofs of our main theorems discussed in section \ref{sec:class} can be found in appendix \ref{app:pfs}. In appendix \ref{a:modularity} we present the arguments that we employ in \S\ref{sec:conj} to determine the modular properties of certain twining genera.  In appendix \ref{a:classifclasses} we discuss the method we use to classify distinct $O^+(\Gamma^{4,20})$ conjugacy classes. The result of the classification, as well as the data of the twining genera, are recorded in Table \ref{last_table_genera}.

\section{Symmetries}\label{sec:class}

In this section, we classify subgroups of $O^+(\Gamma^{4,20})$ that pointwise fix a {\it positive four-plane},  a four-dimensional oriented positive-definite subspace of $\Gamma^{4,20} \otimes_{\ZZ} \RR$. 
They have the physical interpretation as groups of supersymmetry-preserving discrete symmetries of type IIA string theory on $K3\times \RR^6$. Alternatively, they can be viewed as the symmetry groups of
 NLSMs on $K3$ surfaces that commute with the 
 $\N=(4,4)$ superconformal algebra and leave invariant the four R-R ground states corresponding to the spectral flow generators.

We will say such $G\subset O^+(\Gamma^{4,20})$ is a subgroup of {\it four-plane preserving type}, and denote the corresponding invariant and co-invariant sublattices by
\be
\Gamma^G:=\{v\in \Gamma^{4,20}\mid g(v)=v\text{ for all } g\in G\}~,~\Gamma_G:=(\Gamma^G)^\perp\cap \Gamma^{4,20}. 
\ee
Note that such a group of four-plane preserving type can in general preserve more than just a four-plane, for instance the trivial group. 

Our result extends \cite{K3symm} by allowing the co-invariant lattice to contain root vectors. Namely, we include those subgroups of four-plane preserving type such that there exists a $v\in \Gamma_G$ with $\langle v,v \rangle =-2$, where  $\langle \cdot, \cdot\rangle $ denotes the bilinear form of the lattice $\Gamma^{4,20}$. 
We say that a positive four-plane is a {\it singular positive four-plane} if it is orthogonal to some root vector. 
Physically, they correspond to type IIA compactifications with  enhanced gauge symmetry, or to singular NLSMs. 
The 23 Niemeier lattices with roots play an important role in the analysis of these singular cases.

\subsection{The Moduli Space}

Let us first review some general properties of NLSMs on K3 (see \cite{Aspinwall:1996mn,Nahm:1999ps}). The moduli space of NLSMs on $K3$ with $\N=(4,4)$ supersymmetry is given by
\be \label{moduli_space}\M=(SO(4)\times O(20))\backslash O^+(4,20)/O^+(\Gamma^{4,20})\ ,
\ee where $(SO(4)\times O(20))\backslash O^+(4,20)$ is the Grassmannian of positive four-planes $\Pi$ within $\RR^{4,20}\cong \Gamma^{4,20}\otimes_\ZZ \RR$, and $\Gamma^{4,20}$ is the even unimodular lattice with signature $(4,20)$. 
This is also the moduli space of type IIA string theory at a fixed finite value of $g_s$. 

The real group $O(4,20) := O(4,20;\RR)$ has four connected components 
\be O(4,20)=O^{++}(4,20)\cup O^{+-}(4,20)\cup O^{-+}(4,20)\cup O^{--}(4,20)\ ,
\ee where the elements of 
\be O^+(4,20)=O^{++}(4,20)\cup O^{+-}(4,20)\ \ee preserve the orientation of positive four-planes.\footnote{Here and in the following, by ``orientation of positive four-planes" we mean each of the two equivalence classes of oriented positive four-planes in $\RR^{4,20}$ modulo $O(4,20)$ transformations connected to the identity.}
 We denote by $O(\Gamma^{4,20})\subset O(4,20)$ the group of automorphisms of the lattice $\Gamma^{4,20}$ and define
 \be O^+(\Gamma^{4,20})=O(\Gamma^{4,20})\cap O^+(4,20)\ .
 \ee

In this work, the lattice $\Gamma^{4,20}$ plays the following roles. Geometrically, $\Gamma^{4,20}$ is the integral cohomology lattice $H^\ast(X,\ZZ)$ with Mukai pairing of a $K3$ surface $X$, and $\Gamma^{4,20}\otimes_\ZZ \RR$ is the real cohomology. Physically, $\Gamma^{4,20}$ is the lattice of D-brane charges, $\Gamma^{4,20}\otimes_\ZZ \RR$ is the space of R-R ground states, and $\Pi\subset \Gamma^{4,20}\otimes_\ZZ \RR$ is the subspace spanned by the four spectral flow generators, i.e. the R-R ground states which furnish a $({\bf 2},{\bf 2})$ representation of the $SU(2)_L\times SU(2)_R$ R-symmetry group. 
From the point of view of the spacetime physics, the choice of a positive four-plane $\Pi$ is given by a choice of the (spacetime) central charge 
 $Z: \tilde H^{1,1}(X,\ZZ) \to \CC$, where $\tilde H^{1,1}(X,\ZZ):= H^{0,0}(X,\ZZ)\oplus H^{1,1}(X,\ZZ)\oplus H^{2,2}(X,\ZZ)$, which determines the mass of supersymmetric D-branes. %

Note that in the existing literature the moduli space is often defined as the quotient of the Grassmannian by the full automorphism group $O(\Gamma^{4,20})$ instead of $O^+(\Gamma^{4,20})$.
 As we explain in more detail in \S \ref{s:parity}, dividing by $O^+(\Gamma^{4,20})$ amounts to distinguishing between NLSMs that are related by world-sheet parity \cite{Nahm:1999ps}. 
Due to the existence of symmetries that act differently on the right- and left-moving states of the NLSM, it is crucial for us to identify $O^+(\Gamma^{4,20})$ instead of $O(\Gamma^{4,20})$ as the relevant group of duality.

\subsection{Symmetry Groups}

 Let us denote by $\C(\Pi)$ the NLSM associated to a given non-singular positive four-plane $\Pi$.  
With some abuse of notation, we will use the same letter for the lattice automorphism $h\in O^+(\Gamma^{4,20})$ and the corresponding duality between the two CFTs $\C(\Pi)$ and $\C(\Pi')$, where $\Pi' := h(\Pi)$. 
 Let  $G$ be the group of symmetries of a non-singular NLSM $\C(\Pi)$ preserving the $\N=(4,4)$ superconformal algebra and  the four spectral flow generators. 
It is shown in \cite{K3symm} that $G$ is given by the largest $O(\Gamma^{4,20})$-subgroup whose induced action on $\Gamma^{4,20}\otimes_\ZZ \RR$ fixes $\Pi$ point-wise, and hence is always a subgroup of $O^+(\Gamma^{4,20})\subset O(\Gamma^{4,20})$.
From the space-time point of view, the group $G$  admits  the alternative interpretation as the spacetime-supersymmetry-preserving discrete symmetry group of a six-dimensional type IIA string theory with half-maximal supersymmetry, away from the gauge symmetry enhancement points in the moduli space. 
More precisely, $G$ is the 
 group of symmetries commuting with all space-time supersymmetries, 
 quotiented by its continuous (gauge) normal subgroup $U(1)^{24}$.

When $\Pi$ is a singular four-plane,
the NLSM $\C(\Pi)$ 
  is not well-defined
  and it is hence meaningless to talk about the symmetry group of the NLSM in this case. 
On the other hand, note that the two alternative definitions of the symmetry group -- $G$ as the point-wise stabilizer of the subspace $\Pi$ and as the discrete symmetry group of type IIA string theory -- can be extended to singular four-planes
without any difficulty. 
  One subtlety, however, is that the two definitions are not equivalent for singular models. 
  Indeed, the pointwise stabilizer group ${\rm Stab}(\Pi)$ of $\Pi$ contains a normal subgroup $W\subseteq {\rm Stab}(\Pi)$ which is the Weyl group corresponding to the set of roots $v\in \Gamma^{4,20}$ orthogonal to $\Pi$. On the other hand, in the type IIA compactification, this Weyl group $W$ is part of the continuous (non-abelian) gauge group, and therefore it is quotiented out in the definition of the 
spacetime discrete symmetry group $G_{\rm IIA}$, i.e.
\be\label{twodefs} G_{\rm IIA}={\rm Stab}(\Pi)/W\ .
\ee
While $G_{\rm IIA}$ is the most interesting group from the point of view of string theory, the group ${\rm Stab}(\Pi)$ admits a more direct mathematical definition. Furthermore, by \eqref{twodefs}, it is straightforward to recover $G_{\rm IIA}$ once ${\rm Stab}(\Pi)$ is known. As a result, we will mainly focus on ${\rm Stab}(\Pi)$ in this section. In terms of the symplectic autoequivalencies of the bounded derived category ${\rm D}^{\rm b}({\rm Coh}(X))$ of coherent sheaves of a $K3$ surface $X$, allowing for the orthogonal complement of $\Pi$ in $\Gamma^{4,20}$ to contain roots amounts to relaxing the stability condition in \cite{MR2376815,huybrechts} to allow for the central charge $Z: \tilde H^{1,1}(X,\ZZ) \to \CC$ to vanish on some $\delta$ with $\delta^2=-2$.

Let us now consider the problem of classifying all $O^+(\Gamma^{4,20})$  subgroups of four-plane fixing type, including those involving singular four-planes. 
Notice that by definition the invariant lattice $\Gamma^G$ has signature $(4,d)$ for some $0\le d\le 20$, and hence the co-invariant lattice $\Gamma_G$ is negative-definite of rank $20-d$.

In \cite{K3symm}, it is shown that if $\Gamma_G$ contains no roots then it can be primitively embedded into the Leech lattice ${\Leech}$ (taken negative definite) 
\be i:\Gamma_G\hookrightarrow {\Leech}
\ee
and that $G$ is isomorphic to a subgroup $\hat G$ of the Conway group $\Co_0\cong O({\Leech})$. More precisely, $\hat G\subset \Co_0$ acts faithfully on $\Lambda_{\hat G}:=i(\Gamma_G)\subset {\Leech}$ and fixes pointwise the orthogonal complement ${\Lambda}^{\hat G} = ({\Lambda}_{\hat G})^\perp\cap {\Leech}$.  

In order to generalize the classification of the symmetry groups $G$ to singular four-planes, we have to consider the case where $\Gamma_G$ contains a root.
It is clear that in this case, lattices with non-trivial root systems---i.e. Niemeier lattices other than the Leech lattice---are necessary for the embedding. 
In fact, in this case the co-invariant lattices can be always embedded into one of the Niemeier lattices, as we show with the following theorem. 

\begin{theorem}\label{t:dirclass}
Let $G$ be  a subgroup of $O^+(\Gamma^{4,20})$ fixing pointwise a sublattice $\Gamma^G$ of signature $(4,d)$, $d\ge 0$. Then there exists a primitive embedding $i$ of the orthogonal complement $\Gamma_G$ into some negative-definite rank 24 even unimodular lattice (Niemeier lattice) $N$ 
\be \label{e:embed} i:\Gamma_G\hookrightarrow N\ .
\ee Furthermore, the action of $\hat G:=iGi^{-1}$ on $i(\Gamma_G)$ extends uniquely to a group of automorphisms of $N$ that fixes pointwise the orthogonal complement of $i(\Gamma_G)$. If $\Gamma_G$ has no roots, then $N$ can be chosen to be the Leech lattice.
\end{theorem}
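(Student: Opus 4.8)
The statement is purely lattice-theoretic, so the plan is to forget the physical and CFT interpretation and work only with the even unimodular lattice $\Gamma^{4,20}$ and its sublattices. Recall that $\Gamma_G$ is negative-definite of rank $20-d$, and that since $\Gamma_G$ and $\Gamma^G$ are mutually orthogonal primitive sublattices of the \emph{unimodular} lattice $\Gamma^{4,20}$, their discriminant forms are anti-isometric, $q_{\Gamma_G}\cong -q_{\Gamma^G}$, with the glue group realised as the graph of an isometry $A_{\Gamma^G}\xrightarrow{\sim} A_{\Gamma_G}$. The idea is to build the Niemeier lattice $N$ as an overlattice of $\Gamma_G\oplus T$, where $T$ is a negative-definite lattice that plays, inside $N$, exactly the role that $\Gamma^G$ plays inside $\Gamma^{4,20}$ — the only change being that the four positive directions of $\Gamma^G$ are traded for negative ones.

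Concretely, I would first reduce the theorem to the following existence statement: there is an even, negative-definite lattice $T$ of rank $4+d$ with discriminant form $q_T\cong q_{\Gamma^G}\ (\cong -q_{\Gamma_G})$. Granting this, one glues $\Gamma_G\oplus T$ along the graph of an anti-isometry $q_{\Gamma_G}\cong -q_T$; the resulting overlattice $N$ is even, negative-definite, of rank $24$, and unimodular (because the glue is the graph of an isometry between the two discriminant groups), hence a Niemeier lattice. By construction $\Gamma_G$ sits in $N$ as a primitive sublattice with orthogonal complement $T$, which is the embedding $i$ we want. This reduction step is routine Nikulin gluing theory.

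The substance of the argument is the existence of $T$, and I expect this to be the main obstacle. I would appeal to Nikulin's existence theorem for even lattices with prescribed signature and discriminant form. The Milgram (signature mod $8$) condition is inherited from $\Gamma^G$: one has $\operatorname{sign}(q_{\Gamma^G})\equiv 4-d\equiv -(4+d)\pmod 8$, which is exactly the congruence required of a negative-definite lattice of rank $4+d$. The rank condition $4+d\ge \ell(A_{q_{\Gamma^G}})$ holds because $q_{\Gamma^G}$ is already the discriminant form of the rank-$(4+d)$ lattice $\Gamma^G$. The delicate point is that, unlike in the indefinite case, demanding that $T$ be negative-definite consumes all available positive directions, so one cannot invoke the cheap sufficient criterion ``$t_\pm>s_\pm$ and rank $\ge \ell+\dots$''; instead one must check Nikulin's $p$-adic local conditions directly, and this is where I would spend the most care. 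The saving observation is that passing from the genus of $\Gamma^G$, of signature $(4,d)$, to that of $T$, of signature $(0,4+d)$, changes $t_-$ by the even number $4$ and the signature by a multiple of $8$; hence the local invariants entering Nikulin's conditions — the sign $(-1)^{t_-}$ at odd primes and the $2$-adic oddity/excess data, which depend on the signature only through $\operatorname{sign}(q)\bmod 8$ and on $q_p$ — are unchanged. Since these conditions hold for the genuine lattice $\Gamma^G$, they hold for $(0,4+d,q_{\Gamma^G})$ as well, and $T$ exists. The $2$-adic condition in the ``tight'' case $4+d=\ell(A_{q_2})$ is the one that must be verified rather than quoted.

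Finally, I would extend the $G$-action and record the rootless refinement. The key point is that $G$ acts trivially on the discriminant group $A_{\Gamma_G}$: since $G$ fixes $\Gamma^G$ pointwise it acts trivially on $A_{\Gamma^G}$, and since $G$ preserves $\Gamma^{4,20}$ it preserves the glue graph, which forces triviality on $A_{\Gamma_G}$ as well. Transporting by $i$ and letting $\hat G=iGi^{-1}$ act trivially on $T$, the induced action on $A_{\Gamma_G}\oplus A_T$ is trivial, so it preserves the glue defining $N$ and extends to an automorphism group of $N$ fixing $T=i(\Gamma_G)^\perp$ pointwise; uniqueness is immediate since $\Gamma_G\oplus T$ has finite index in $N$. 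For the last sentence, when $\Gamma_G$ has no roots this is precisely the situation already treated in \cite{K3symm}: since the Leech lattice is the unique rootless Niemeier lattice, it suffices to choose $T$ (and hence $N$) without roots, so that necessarily $N\cong\Leech$.
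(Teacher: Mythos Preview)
Your gluing strategy is exactly the paper's: produce a negative-definite even lattice $T$ of rank $4+d$ with $q_T\cong q_{\Gamma^G}\cong -q_{\Gamma_G}$, then glue $\Gamma_G\oplus T$ to a rank-$24$ even unimodular $N$. The difference is in how the existence of $T$ is established. You propose to verify Nikulin's local existence conditions directly, observing that passing from signature $(4,d)$ to $(0,4+d)$ leaves $t_++t_-$, $(-1)^{t_-}$, and $t_+-t_-\bmod 8$ unchanged, so the $p$-adic constraints satisfied by $\Gamma^G$ carry over verbatim. This is correct. The paper sidesteps the local analysis entirely by a sign-flip trick: since $\Gamma^G$ has rank $4+d\le\frac{1}{2}(8+2d)$, Nikulin's cheap sufficient criterion gives a primitive embedding $\Gamma^G\hookrightarrow\Gamma^{8+d,d}$; the orthogonal complement $S$ is positive-definite of rank $4+d$ with $q_S\cong -q_{\Gamma^G}$, so $T:=S(-1)$ is an \emph{explicit} lattice with the required invariants. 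Your route is self-contained and makes clear exactly which invariants matter; the paper's route is shorter and never touches the delicate tight-rank $2$-adic case you flagged.

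One caution on your last paragraph: choosing $T$ rootless does \emph{not} by itself force $N$ to be rootless, since roots of $N$ can arise from glue vectors in $(\Gamma_G^*\oplus T^*)\setminus(\Gamma_G\oplus T)$. The paper, like you, simply defers the Leech statement to \cite{K3symm}, where a separate argument is given; your parenthetical justification should be dropped.
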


\begin{proof}
See appendix \ref{a:proofdirclass}.
\end{proof}

Note that the embedding is generically far from unique, and often $\Gamma_G$ can be embedded in more than one Niemeier lattice $N$. 
At the same time, we believe that all Niemeier lattices are necessary in order to embed all $\Gamma_G$ as in (\ref{e:embed}). In particular, in a geometric context it was conjectured in \cite{nikulin2013kahlerian} that for each of the 24 Niemeier lattices $N$ there exists a (non-algebraic) $K3$ surface $X$ whose Picard lattice $P(X)$ can be primitively embedded only in $N$. This conjecture has been proven for all but two Niemeier lattices: those with root systems $A_{24}$ and $2A_{12}$. It is possible to find an appropriate choice of the B-field such that the orthogonal complement lattice $\Gamma_G$ contains the Picard lattice.  Therefore, we expect all Niemeier lattices (and not just the Leech lattice) play a role in the study of physical symmetries of type IIA string theory on $K3$. 

By theorem \ref{t:dirclass}, every group of symmetries $G$ is isomorphic to a subgroup $\hat G\subset O(N)$ of the group of automorphisms of some Niemeier lattice $N$, fixing a sublattice of $N$ of rank at least $4$. In fact, the converse is also true by the following theorem.

\begin{theorem}\label{t:invclass}
Let $N$ be a (negative definite) Niemeier lattice and $\hat G$ be  a subgroup of $O(N)$ fixing pointwise a sublattice $N^{\hat G}$ of rank $4+d$, $d\ge 0$. Then, there exists a primitive embedding
\be f:N_{\hat G}\hookrightarrow \Gamma^{4,20}
\ee
 of the co-invariant sublattice $N_{\hat G}:=(N^{\hat G})^\perp\cap N$ into the even unimodular lattice $\Gamma^{4,20}$.
 Furthermore, the action of $ G:=f{\hat G} f^{-1}$ on $f(N_{\hat G})$ extends uniquely to a group of automorphisms of $\Gamma^{4,20}$ that fixes pointwise the orthogonal complement of $f(N_{\hat G})$. Therefore, there exists a positive four-plane $\Pi$ such that ${\rm Stab}(\Pi)$ contains $\hat G$ as a subgroup.
  When $N$ is the Leech lattice, $\Pi$ can be chosen so that its orthogonal complement contains no roots.
\end{theorem}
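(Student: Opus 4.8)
The plan is to prove this purely with lattice theory, via Nikulin's results on primitive embeddings and discriminant forms. First I would record the numerics. Since $N$ is negative definite of rank $24$ and $N^{\hat G}$ has rank $4+d$, the co-invariant lattice $N_{\hat G}$ is negative definite of rank $20-d$, i.e. of signature $(0,20-d)$, whereas $\Gamma^{4,20}$ has signature $(4,20)$. Because $N$ is unimodular, the discriminant group of $N_{\hat G}$ is isomorphic (with opposite discriminant form) to that of $N^{\hat G}$, so $\ell(A_{N_{\hat G}})=\ell(A_{N^{\hat G}})\le \rk N^{\hat G}=4+d$, where $\ell(\cdot)$ denotes the minimal number of generators. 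One then verifies the three inequalities in Nikulin's sufficient criterion for a primitive embedding of an even lattice into an even unimodular lattice of signature $(4,20)$: namely $4-0\ge 0$, $20-(20-d)=d\ge 0$, and $4+20-0-(20-d)=4+d\ge \ell(A_{N_{\hat G}})$. All three hold, and since an indefinite even unimodular lattice is determined by its signature, the resulting lattice is $\Gamma^{4,20}$ itself. This yields the primitive embedding $f:N_{\hat G}\hookrightarrow\Gamma^{4,20}$, with orthogonal complement $T:=f(N_{\hat G})^\perp\cap\Gamma^{4,20}$ of signature $(4,d)$.

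The extension of the group action is where the unimodularity of $N$ does the real work. The key point is that $\hat G$ acts trivially on the discriminant group $A_{N_{\hat G}}$: indeed, $N$ is a unimodular overlattice of $N^{\hat G}\oplus N_{\hat G}$ glued along an anti-isometry $\gamma:A_{N^{\hat G}}\to A_{N_{\hat G}}$ of discriminant forms, and since every $\hat g\in\hat G$ fixes $N^{\hat G}$ pointwise it induces the identity on $A_{N^{\hat G}}$; compatibility of $\hat g$ with the gluing then forces its induced action on $A_{N_{\hat G}}$ to be trivial as well. Transporting through $f$, the group $G:=f\hat G f^{-1}$ acts trivially on $A_{f(N_{\hat G})}$. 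Now $\Gamma^{4,20}$ is the unimodular overlattice of $f(N_{\hat G})\oplus T$ determined by the graph of a gluing anti-isometry $A_{f(N_{\hat G})}\to A_T$; the isometry $(G,\mathrm{id}_T)$ of $f(N_{\hat G})\oplus T$ preserves this graph \emph{precisely because} $G$ acts trivially on $A_{f(N_{\hat G})}$, and hence extends to an isometry of $\Gamma^{4,20}$ fixing $T$ pointwise. Uniqueness is immediate: the action is prescribed on $f(N_{\hat G})$ and on $T$, hence on the finite-index sublattice $f(N_{\hat G})\oplus T$, hence on all of $\Gamma^{4,20}$ by $\QQ$-linearity. (Faithfulness of $\hat G$ on $N_{\hat G}$, so that $G\cong\hat G$, follows because an element fixing both $N^{\hat G}$ and $N_{\hat G}$ pointwise fixes a finite-index sublattice and is the identity.)

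For the four-plane, note that the fixed subspace of $G$ in $\Gamma^{4,20}\otimes_\ZZ\RR$ is exactly $T\otimes_\ZZ\RR$, which has signature $(4,d)$ and therefore contains positive-definite four-dimensional subspaces. Choosing any oriented positive four-plane $\Pi\subset T\otimes_\ZZ\RR$, we get $G\subset {\rm Stab}(\Pi)$; since $G$ fixes $\Pi$ pointwise it preserves its orientation, so $G\subset O^+(\Gamma^{4,20})$ and ${\rm Stab}(\Pi)$ contains $\hat G$. For the Leech case I would upgrade this to a genericity argument. When $N=\Leech$ has no roots, neither does $N_{\hat G}$ nor $f(N_{\hat G})$, so no root of $\Gamma^{4,20}$ lies in $f(N_{\hat G})\otimes_\ZZ\RR$; writing a root $r=r_T+r_N$ along the orthogonal splitting $\Gamma^{4,20}\otimes_\ZZ\RR=(T\otimes_\ZZ\RR)\oplus(f(N_{\hat G})\otimes_\ZZ\RR)$, this forces $r_T\neq 0$ for every root $r$. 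Since $\Pi\subset T\otimes_\ZZ\RR$, we have $r\perp\Pi$ iff $r_T\perp\Pi$, and for a fixed $r_T\neq 0$ the positive four-planes in $T\otimes_\ZZ\RR$ orthogonal to $r_T$ form a proper, hence nowhere-dense, subset of the Grassmannian of positive four-planes. As the roots are countable, a Baire-category (or measure-zero) argument produces a positive four-plane $\Pi\subset T\otimes_\ZZ\RR$ orthogonal to no root, whence $\Pi^\perp\cap\Gamma^{4,20}$ contains no roots.

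I expect the main obstacle to be the extension step, and specifically the observation that $\hat G$ acts trivially on $A_{N_{\hat G}}$ together with the fact that this is \emph{exactly} the condition needed for $(G,\mathrm{id}_T)$ to preserve the unimodular gluing defining $\Gamma^{4,20}$; this is the point at which the unimodularity of $N$ is used in an essential way. By comparison, the existence of the embedding (a bookkeeping check of Nikulin's inequalities, made possible by the same discriminant isomorphism bounding $\ell(A_{N_{\hat G}})$ by $4+d$) and the construction of $\Pi$ together with the root-avoidance genericity argument are comparatively routine.
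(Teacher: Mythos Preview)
Your proof is correct. The paper's argument for the embedding is slightly different: rather than bounding $\ell(A_{N_{\hat G}})$ and embedding $N_{\hat G}$ directly, it primitively embeds the \emph{invariant} lattice $N^{\hat G}$ into an auxiliary $\Gamma^{d,8+d}$ via the cruder half-rank criterion $\rk N^{\hat G}\le\tfrac12\rk\Gamma^{d,8+d}$, takes the orthogonal complement $S$, flips its sign to obtain a lattice $S(-1)$ of signature $(4,d)$ with $q_{S(-1)}\cong -q_{N_{\hat G}}$, and then glues $N_{\hat G}$ to $S(-1)$ to produce $\Gamma^{4,20}$. This detour lets the paper invoke only the most elementary form of Nikulin's theorem, whereas your direct route is shorter but, in the boundary case $\ell(A_{N_{\hat G}})=4+d$, implicitly relies on the $p$-adic existence conditions for the complement being automatic (which they are, since $N^{\hat G}$ itself already realizes the required discriminant form at that rank). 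For the extension of $\hat G$ to $\Gamma^{4,20}$ and for the Leech root-avoidance, the paper simply defers to appendix~B of \cite{K3symm}; your explicit arguments --- triviality of $\hat G$ on $A_{N_{\hat G}}$ forced by the gluing that builds $N$, hence compatibility of $(G,\mathrm{id}_T)$ with the gluing that builds $\Gamma^{4,20}$, and the Baire-category genericity avoiding the countably many root-hyperplanes --- are the standard ones and have the advantage of making the write-up self-contained.
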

\begin{proof} See appendix \ref{a:proofinvclass}.
\end{proof}

As we will discuss in the next subsection, for  many $G$ arising in the way described above, there exist continuous families of $\Pi$ such that the above statement is true, while for those groups with  invariant sublattice of rank exactly four, the family consists of isolated points.

It is now useful to make a comparison to the groups in umbral and Conway moonshine (cf. \S\ref{s:conjectures}).
 When $N$ is a Niemeier lattice with roots, the automorphism group $O(N)$ contains  as a normal subgroup the Weyl group $W_N$, generated by reflections with respect to the hyperplanes orthogonal to the roots. The quotients 
 \begin{equation} \label{Niemeier_grp}G_N:=O(N)/W_N\end{equation} are the groups whose representation theory dictates the mock modular forms featuring in umbral moonshine \cite{Cheng:2013wca}. 
 To uniformize the notation, when $N=\Leech$ is the Leech lattice, we define $W_N$ to be the trivial group and subsequently $G_N=O(N)=\Co_0$. We will refer to these $G_N$ as the {\it Niemeier groups}.
Next we discuss the properties of $G$ in relation to the Niemeier groups.

\begin{proposition}\label{t:UmbralSymm}
For a given sublattice $\Gamma^G\subset \Gamma^{4,20}$ of signature $(4,d)$ with $d\ge 0$, let 
$G:=\{g\in O(\Gamma^{4,20})\lvert gv=v~~ \forall v\in \Gamma^G\}.$
Suppose that the orthogonal complement $\Gamma_G$ can be primitively embedded in the Niemeier lattice $N$, so that $G$ is isomorphic to a subgroup $\hat G\subset O(N)$. Then:
\begin{enumerate}
\item $\hat G$ has non-trivial intersection with the Weyl group $W_N$ if and only if $\Gamma_G$ contains some root.
\item if $\Gamma_G$ has no roots, then $G$ is isomorphic to a subgroup of $G_N:=O(N)/W(N)$.
\end{enumerate}
\end{proposition}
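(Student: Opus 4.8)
The plan is to reduce both statements to a single structural observation, namely that by the unique-extension clause of Theorem~\ref{t:dirclass} every element of $\hat G$ fixes the orthogonal complement $N^{\hat G}:=(i(\Gamma_G))^\perp\cap N$ pointwise. Thus $\hat G$ lies inside the pointwise stabilizer $\mathrm{Stab}_{O(N)}(N^{\hat G})$, and the whole question becomes how this stabilizer meets the Weyl group $W_N$. Throughout I identify $\Gamma_G$ with its image $N_{\hat G}:=i(\Gamma_G)$, recall that the roots of $N$ are its vectors $r$ with $\langle r,r\rangle=-2$, and that each such root gives a reflection $s_r(x)=x+\langle x,r\rangle r\in W_N$ fixing $r^\perp$ pointwise.

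For the ``if'' direction of item (1), I would start from a root $v\in\Gamma_G$, i.e.\ $\langle v,v\rangle=-2$. Since $v\in\Gamma_G=(\Gamma^G)^\perp$, the reflection $s_v\in O(\Gamma^{4,20})$ fixes $\Gamma^G$ pointwise and hence lies in $G$. Its image in $\hat G=iGi^{-1}$ acts on $N_{\hat G}$ as the reflection in $i(v)$ and trivially on $N^{\hat G}$; because $i(v)\perp N^{\hat G}$ and the extension of Theorem~\ref{t:dirclass} is unique, this image is exactly the genuine reflection $s_{i(v)}\in W_N$. As $s_{i(v)}\neq 1$, we conclude $\hat G\cap W_N\neq\{1\}$.

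For the ``only if'' direction I would take a nontrivial $w\in\hat G\cap W_N$. Since $w\in\hat G$ it fixes $N^{\hat G}$ pointwise, so $w$ lies in the subgroup of $W_N$ fixing $N^{\hat G}$ pointwise. The key input is Steinberg's fixed-point theorem for reflection groups: the pointwise stabilizer of any subset in a Weyl reflection group is generated by the reflections it contains. Applied to $W_N$ acting on $N\otimes_\ZZ\RR$ with the subset $N^{\hat G}$, it shows that this stabilizer is generated by the $s_r$ with $r\perp N^{\hat G}$, i.e.\ with $r\in (N^{\hat G})^\perp\cap N=N_{\hat G}$. Since $w\neq 1$ this generating set is non-empty, so there is a root $r\in N_{\hat G}\cong\Gamma_G$, giving a root in $\Gamma_G$. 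I expect this invocation of Steinberg's theorem to be the main point requiring care: it is precisely what guarantees that a nontrivial element of $W_N$ fixing $N^{\hat G}$ forces an \emph{actual} root inside the co-invariant lattice, rather than merely arising as a product of reflections in roots none of which individually lies in $N_{\hat G}$.

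Finally, item (2) follows immediately. Because $W_N\triangleleft O(N)$, the quotient map $\pi\colon O(N)\to G_N=O(N)/W_N$ is a homomorphism with kernel $W_N$. If $\Gamma_G$ has no roots, item (1) yields $\hat G\cap W_N=\hat G\cap\ker\pi=\{1\}$, so $\pi$ restricts to an injection on $\hat G$, whence $G\cong\hat G\cong\pi(\hat G)$ is a subgroup of $G_N$. The Leech case $N=\Leech$, where $W_N$ is trivial and $G_N=\Co_0$, is included trivially.
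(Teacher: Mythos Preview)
Your proof is correct. The ``if'' direction of item (1) and the deduction of item (2) are essentially identical to the paper's arguments, including your careful use of the unique-extension clause to identify the image of $s_v$ in $\hat G$ with the genuine reflection $s_{i(v)}\in W_N$.

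The ``only if'' direction, however, takes a genuinely different route. You invoke Steinberg's fixed-point theorem to conclude that the pointwise stabilizer of $N^{\hat G}$ inside $W_N$ is generated by reflections in roots orthogonal to $N^{\hat G}$, hence a nontrivial $w$ forces such a root to exist. The paper instead argues via Weyl chambers: a nontrivial $w\in W_N$ fixes no point in the interior of any Weyl chamber, so if $w$ fixes $N^{\hat G}$ pointwise then $N^{\hat G}\otimes_\ZZ\RR$ lies in the union of root hyperplanes; being a real linear subspace (hence not a finite union of proper subspaces), it must lie entirely in a single $r^\perp$, giving $r\in N_{\hat G}$. The paper's argument is more elementary and self-contained, requiring only the standard freeness of the Weyl action on chambers and a linear-algebra observation, while your approach packages the same idea into a citation of a known structural result about reflection groups. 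Both are valid; the chamber argument may be preferable in an exposition aiming to minimize external dependencies, whereas Steinberg's theorem gives a cleaner one-line reduction once one is willing to quote it.
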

\begin{proof} See appendix \ref{a:proofUmbralSymm}.
\end{proof}

From the above theorems and proposition, we are led to the following corollary for the stringy $K3$ symmetries: 
\begin{corollary}\label{cor_in_phys_terms}
Consider type IIA string theory compactified on a $K3$ surface $X$ corresponding to a point in the moduli space ${\cal M}$ (cf. \eqref{moduli_space}).
\begin{enumerate}
\item
The supersymmetry-preserving discrete symmetry groups $G_{\rm IIA}$ that are realized somewhere in ${\cal M}$ are in bijection with the four-plane preserving subgroups of the Niemeier groups $G_N$. 
\item
Consider the sublattice of the D-brane lattice $H^\ast(X,\ZZ)\cong \Gamma^{4,20}$ orthogonal to the $G_{\rm IIA}$-invariant subspace of $H^\ast(X,\RR)$. The isomorphism classes of lattices that arise in this way somewhere in ${\cal M}$  are in bijection with the  isomorphism classes of co-invariant lattice $N_{\hat G}$, with $N$ a Niemeier lattice and $\hat G\subseteq G_N$ a four-plane preserving subgroup of the corresponding Niemeier group. 
\end{enumerate}
\end{corollary}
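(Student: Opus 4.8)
\emph{Setup and reduction.} The bridge between the group theory and the physics is the identity \eqref{twodefs}, $G_{\rm IIA}={\rm Stab}(\Pi)/W$, where $W$ is the Weyl group generated by reflections in the roots of $\Gamma^{4,20}$ orthogonal to $\Pi$. The plan is to push both mathematical theorems through this identity. I will call a subgroup $\bar G\subseteq G_N$ \emph{of four-plane preserving type} if it is the image, under the projection $\pi_N\colon O(N)\to G_N=O(N)/W_N$, of some $\hat G\subseteq O(N)$ whose invariant sublattice $N^{\hat G}$ has rank at least $4$ (equivalently $\rk N_{\hat G}\le 20$); this is precisely the input data of Theorem \ref{t:invclass}. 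Both parts of the corollary then reduce to two constructions, one for each direction of the claimed bijection, together with a single algebraic lemma: under the embeddings of Theorems \ref{t:dirclass}--\ref{t:invclass} the gauge Weyl group $W$ is carried isomorphically onto $\hat G\cap W_N$.

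\emph{The Weyl-group lemma and the forward direction.} For any $\Pi$, set $G={\rm Stab}(\Pi)$; this is of four-plane preserving type, with invariant lattice $\Gamma^G$ of signature $(4,d)$. Every reflection $s_v$ in a root $v\perp\Pi$ fixes $\Pi$ and hence lies in $G$, so $v\perp\Gamma^G$, i.e.\ $v\in\Gamma_G$; conversely every root of $\Gamma_G$ is orthogonal to $\Pi$. Thus $W=\langle s_v : v\in\Gamma_G\ \text{a root}\rangle$. Applying Theorem \ref{t:dirclass} gives a primitive isometric embedding $i\colon\Gamma_G\hookrightarrow N$ and $\hat G=iGi^{-1}\subseteq O(N)$; since $i$ preserves norms it carries roots of $\Gamma_G$ to roots of $N$, so $iWi^{-1}\subseteq \hat G\cap W_N$. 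For the reverse inclusion I would invoke Steinberg's theorem on pointwise stabilizers in reflection groups: any $w\in \hat G\cap W_N$ fixes $N^{\hat G}=i(\Gamma_G)^\perp\cap N$ pointwise (this equality uses that $\hat G$ acts without nonzero fixed vectors on $i(\Gamma_G)$), hence is a product of reflections in roots orthogonal to $i(\Gamma_G)^\perp$, i.e.\ roots lying in $i(\Gamma_G)$ by primitivity. Therefore $\hat G\cap W_N=iWi^{-1}$, and
\[
 G_{\rm IIA}=G/W\;\cong\;\hat G/(\hat G\cap W_N)\;\cong\;\pi_N(\hat G)\subseteq G_N,
\]
which is of four-plane preserving type because $\rk N^{\hat G}=4+d\ge4$. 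This realizes every physical $G_{\rm IIA}$ as a four-plane preserving subgroup of some $G_N$.

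\emph{The reverse direction and the main obstacle.} Given a four-plane preserving $\bar G\subseteq G_N$, choose a lift $\hat G\subseteq O(N)$ with $\pi_N(\hat G)=\bar G$ and $\rk N^{\hat G}\ge4$, and apply Theorem \ref{t:invclass} to obtain $f\colon N_{\hat G}\hookrightarrow\Gamma^{4,20}$, the extension $G=f\hat Gf^{-1}$, and a positive four-plane $\Pi$ with ${\rm Stab}(\Pi)\supseteq G$. The heart of the argument---and the step I expect to be the main obstacle---is to upgrade this to ${\rm Stab}(\Pi)=G$, so that $G_{\rm IIA}=G/W$ equals $\bar G$ rather than a strictly larger quotient. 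Here $\Gamma^G=f(N_{\hat G})^\perp$ has signature $(4,d)$, and $\Pi$ may be taken to be any positive-definite four-plane in $V:=\Gamma^G\otimes_\ZZ\RR$. If $d=0$ then $V$ is positive-definite of rank $4$, $\Pi=V$ is forced, and fixing $\Pi$ pointwise is literally fixing $\Gamma^G$ pointwise, so ${\rm Stab}(\Pi)=G$ on the nose. If $d\ge1$ I would argue by genericity: for each $g\in O^+(\Gamma^{4,20})\setminus G$ the fixed space $\mathrm{Fix}(g)\otimes\RR$ meets $V$ in a \emph{proper} subspace, since equality would force $g\in G$; the positive four-planes contained in a proper subspace form a lower-dimensional (possibly empty) locus inside the Grassmannian $\mathrm{Gr}^+(4,V)$, which has dimension $4d>0$. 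As $O^+(\Gamma^{4,20})$ is countable, a generic $\Pi\in\mathrm{Gr}^+(4,V)$ avoids all these loci, giving ${\rm Stab}(\Pi)=G$. The Weyl-group lemma then yields $G_{\rm IIA}=G/W\cong\pi_N(\hat G)=\bar G$, completing part (1). This dichotomy is exactly the remark after Theorem \ref{t:invclass} that rank-four invariant lattices are realized only at isolated points.

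\emph{Part (2): the co-invariant lattices.} Here I would first identify the lattice in the statement. The $G_{\rm IIA}$-invariant subspace of $H^\ast(X,\RR)\cong\Gamma^{4,20}\otimes_\ZZ\RR$ is the ${\rm Stab}(\Pi)$-invariant subspace $\Gamma^G\otimes\RR$, so the sublattice orthogonal to it is exactly the co-invariant lattice $\Gamma_G$. It then remains to show $\Gamma_G\cong N_{\hat G}$. Using that $\hat G$ acts faithfully and without nonzero fixed vectors on $i(\Gamma_G)$, one obtains $N^{\hat G}=i(\Gamma_G)^\perp\cap N$; taking orthogonal complements and using primitivity of the embedding gives $N_{\hat G}=(N^{\hat G})^\perp\cap N=i(\Gamma_G)\cong\Gamma_G$, and symmetrically $f(N_{\hat G})=\Gamma_G$ in Theorem \ref{t:invclass}. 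Combining this isometry with the two constructions of part (1) shows that the isomorphism classes of lattices $\Gamma_G$ realized in $\mathcal M$ coincide with the isomorphism classes of co-invariant lattices $N_{\hat G}$ of four-plane preserving $\hat G\subseteq O(N)$, which is part (2). The only genuinely new ingredients beyond Theorems \ref{t:dirclass}--\ref{t:invclass} are thus the Weyl-group identification via Steinberg's theorem and the genericity argument controlling ${\rm Stab}(\Pi)$.
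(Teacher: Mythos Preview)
The paper offers no explicit proof of this corollary; it is stated as a direct consequence of Theorems~\ref{t:dirclass}, \ref{t:invclass} and Proposition~\ref{t:UmbralSymm}. Your argument is therefore substantially more detailed than anything the paper writes down, and its overall architecture is sound. In particular, you correctly isolate the two nontrivial steps the paper leaves implicit: the Weyl-group identification $\hat G\cap W_N=iWi^{-1}$, and the need to realise a given $G$ \emph{exactly} as ${\rm Stab}(\Pi)$ rather than merely as a subgroup. Proposition~\ref{t:UmbralSymm} only shows that $\hat G\cap W_N$ is nontrivial iff $\Gamma_G$ contains a root, which is strictly weaker than the equality you need to conclude $G_{\rm IIA}\cong\pi_N(\hat G)$; your appeal to Steinberg's fixed-point theorem for finite reflection groups is the right tool here and goes genuinely beyond what the paper records.

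There is one gap in your reverse direction. The clause ``equality would force $g\in G$'' presupposes that $G=f\hat Gf^{-1}$ is already the \emph{full} pointwise stabilizer of $\Gamma^G$ in $O^+(\Gamma^{4,20})$. For an arbitrary lift $\hat G$ of $\bar G$ this can fail: via the discriminant-form gluing, the full stabilizer $\tilde G$ of $\Gamma^G$ corresponds to the full pointwise stabilizer of $N^{\hat G}$ in $O(N)$, which may strictly contain your chosen $\hat G$. The clean fix is to work with full pointwise stabilizers on both sides from the outset; on the $\Gamma^{4,20}$ side this is automatic, since ${\rm Stab}(\Pi)$ always coincides with the pointwise stabilizer of its own invariant lattice $\Gamma^G$ (any $g$ fixing $\Gamma^G$ fixes $\Gamma^G\otimes\RR\supseteq\Pi$). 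Your genericity argument then yields ${\rm Stab}(\Pi)=\tilde G$, and your Weyl-group lemma gives $G_{\rm IIA}\cong\pi_N(\hat G_{\max})$ with $\hat G_{\max}$ the full stabilizer in $O(N)$. What remains is the bookkeeping that these projections $\pi_N(\hat G_{\max})$ exhaust the ``four-plane preserving subgroups of $G_N$'' in your sense---which is precisely where the informality of the corollary's phrasing shows through, and is not something the paper itself resolves.
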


\subsection{$G$-Families}\label{s:Gfamilies}

It is useful to consider families of positive four-planes that share certain symmetries,
with the equivalence relation given by the action of $O^+(\Gamma^{4,20})$ taken into account. 
Let $G$ be a $O^+(\Gamma^{4,20})$-subgroup of four-plane fixing type.
 We define
\begin{align}\label{familydef}
\F_G:= \{ \Pi\subseteq \Gamma^G\otimes \RR,\ {\rm sign}(\Pi)=(4,0)\}/ \N_{O^+(\Gamma^{4,20})}(G),
\end{align}
where 
\be \N_{O^+(\Gamma^{4,20})}(G)=\{h\in  O^+(\Gamma^{4,20})\mid h Gh^{-1}=G\} \ee is the normalizer of $G$ inside $O^+(\Gamma^{4,20})$, and corresponds to the subgroup of $O^+(\Gamma^{4,20})$ that fixes the lattice $\Gamma^G$ setwise. Let $d$ be such that the invariant lattice $\Gamma^G$ has signature $(4,d)$. Then $\F_G$ is a Grassmannian parametrizing four-dimensional positive-definite subspaces within $\Gamma^G\otimes_\ZZ \RR\cong \RR^{4,d}$, modulo the group $\N_{O^+(\Gamma^{4,20})}(G)$ of physical dualities.
 The family  $\F_G$ admits a description as a  double coset
\be \F_G\cong SO(4)\times O(d)\backslash O^+(\Gamma^G\otimes_\ZZ \RR)/ \N_{O^+(\Gamma^{4,20})}(G)\ ,
\ee
which makes manifest that $\F_G$ has real dimension $4d$
and is connected: when $d=0$ the group $O^+(\Gamma^G\otimes_\ZZ \RR)\cong SO(4)$
has one connected component, while for $d>0$ the two connected components of $O^+(\Gamma^G\otimes_\ZZ \RR)\cong O^+(4,d)$ are identified via $O(d)$.  In the following, we will often identify  families $\F_G$ and $\F_{G'}$ that are conjugated in $O^+(\Gamma^{4,20})$:
 \be \F_G\sim \F_{G'} \quad{\rm if} \quad G'=hGh^{-1}\quad{\text{ for some}}\quad  h\in O^+(\Gamma^{4,20})\ .
 \ee

Physically, we are motivated to study $\F_G$ for the following reason. From the fact that a positive four-plane defines a $K3$ NLSM, $\F_G$ can be physically  interpreted as  a family of $K3$ NLSMs with symmetry groups which contain $G$. 
As we will see, the connectedness of $\F_G$ and the continuity argument we present in \S\ref{sec:conj} then guarantee that all theories ${\C}(\Pi)$ for $\Pi\in \F_G$ have the same twining genera $\eg_g({\C}(\Pi);\tau,z)$ for all $g\in G$.

We close this section with
a few useful properties of $\F_G$.  We would like to know whether a given family $\F_G$ contains any singular positive four-plane. 
First, let us distinguish between the following two cases: 
\begin{enumerate}[(1)]
\item $\Gamma_G$ contains no roots

 \item $\Gamma_G$ contains roots 

\end{enumerate} 
By definition, the case (1) contains some non-singular four planes, while the case (2) contains only singular models. It is natural to ask under what circumstances does a family in case (1) contain  singular four-planes. In what follows we collect the answer for a few interesting cases: 
\begin{itemize} 
\item if  $G$ is a group of \emph{geometric} symmetries (i.e., if $G$ arises as a group of hyper-K\"ahler preserving symmetries of a $K3$ surface), 
then the corresponding family $\F_G$ contains some singular models. To see this, first recall that a necessary and sufficient condition for $G$ to be geometric is that the invariant lattice 
$\Gamma^G$ contains an even unimodular $\Gamma^{1,1}\subset \Gamma^G$. 
In this case, one can take any root $v\in \Gamma^{1,1}$ and notice that $v^\perp \cap (\Gamma^G\otimes_\ZZ \RR)$ has signature $(4,d-1)$, so it contains some $\Pi$ of signature $(4,0)$ that is by definition singular. 

\item If $\Gamma^G$ has rank exactly four, then $\F_G$ consists of a single point, which is by definition non-singular.

\item If the defining $24$-dimensional representation of $G$ is not a permutation representation, then all four-planes in $\F_G$ are non-singular. This can be seen as follows. For each $\Pi$ in the family $\F_G$, one can show, using techniques analogous to the proof of Theorem \ref{t:invclass},  that the orthogonal sublattice $\Gamma_\Pi:= \Pi^\perp\cap \Gamma^{4,20}$ can be primitively embedded in some Niemeier lattice $N$ (possibly depending on $\Pi$). This implies that also $\Gamma_G\subset \Gamma_\Pi$ can be primitively embedded in $N$.  Recall that the defining 24-dimensional 
representation is a permutation representation for all subgroups of the Niemeier group $G_N$
unless $N$ is the Leech lattice. By hypothesis $\Gamma_G$ has no roots, so that by Proposition \ref{t:UmbralSymm} $G$ must be isomorphic to a subgroup of $G_N$. The only $N$ such that the $24$-dimensional representation of $G_N$ is not a permutation representation is the Leech lattice. We conclude that, for all $\Pi$ in $\F_G$, $\Gamma_\Pi$ can be embedded in the Leech lattice,  and therefore it cannot contain any root.

\end{itemize}
On the other hand, assuming a family in case (1) does contain a singular four-plane, we can deduce the following result about $\Gamma_G$:
\begin{itemize}
\item If $\F_G$ contains some singular four-plane $\Pi$, then $\Gamma_G$ can be embedded in some Niemeier lattice $N$ with roots,  so that $G$ is isomorphic to a subgroup of the Niemeier group $O(N)/W_N$. The argument for this is analogous to the previous statement.
The sublattice $\Gamma_\Pi:= \Pi^\perp\cap \Gamma^{4,20}$ orthogonal to a singular four-plane $\Pi$  can be primitively embedded in some Niemeier lattice $N$.  By definition, $\Gamma_\Pi$ contains some root and hence $N$ cannot be the Leech lattice. Furthermore, $\Gamma_G$ is a primitive sublattice of $\Gamma_\Pi$, so it can also be primitively embedded in $N$.\footnote{However, the converse is not true: it can happen that $\Gamma_G$ admits a primitive embedding into a Niemeier lattice while $\F_G$ contains no singular model, as exemplified by certain examples when $\Gamma^G$ is exactly four-dimensional and $\F_G$ contains only an isolated point. 
}
\end{itemize}

\section{Twining Genera}\label{sec:conj}

In this section we investigate how the symmetry groups discussed in the previous section act on the BPS spectrum of the theory. In particular, in \S \ref{s:conjectures} we will present two conjectures relating the twining genera of NLSMs and the functions featured in umbral and Conway moonshine. In this section we restrict our attention to non-singular NLSMs as the elliptic genus is otherwise not well-defined.

For any non-singular NLSM $\theory$ on $K3$, 
the elliptic genus may be defined as
\be\label{defEG} \eg(\theory;\tau,z)= \Tr_{{\cal H}_{RR}}(q^{L_0-\frac{c}{24}}\bar q^{\bar L_0-\frac{\bar c}{24}}y^{J_0} (-1)^{J_0+\bar J_0})
\ee where $q:=e^{2\pi i\tau},\ y:=e^{2\pi iz}$. 
In the above definition, ${\cal H}_{RR}$ denotes the Ramond-Ramond Hilbert space of $\C$, and  $L_0,\bar L_0$ and $J_0,\bar J_0$ denote the zero modes of the left- and right-moving Virasoro resp. the Cartan generators in the $su(2)$ level $1$ Kac-Moody algebra which are contained in the $\N=(4,4)$ superconformal algebra with central charges  $c=\bar c=6$.  
As is well-known, the elliptic genus of a compact theory only receives non-vanishing contributions from the right-moving ground states which have  vanishing eigenvalue of $\bar L_0-\frac{\bar c}{24}$, and hence it is holomorphic both in $\tau$ and in $z$.  Moreover, $ \eg(\theory;\tau,z)$ is a weak Jacobi form of weight zero and index $1$, {\it i.e.} it satisfies certain growth conditions \cite{eichler_zagier} and is a holomorphic function $\HH\times \CC\to \CC$ satisfying the following modularity 
\be
 \phi_{k,m}(\tau, z)= (c\tau + d)^{-k} e^{-2 \pi i m {c z^2\over c\tau + d}} \phi_{k,m} \left ({a \tau + b\over c\tau +d}, {z \over c\tau + d}\right ) ~~~ \forall 
\left(\begin{array}{cc}
a & b  \\
c & d  \end{array}\right) \in SL_2(\mathbb Z),
\ee
and quasi-periodicity properties
\be
 \phi_{k,m}(\tau,z)=e^{2 \pi i m(\ell^2\tau + 2\ell z)}\, \phi_{k,m}(\tau, z + \ell \tau + \ell') ~ ~~ \forall (\ell, \ell') \in \mathbb Z^2, 
\ee
for $k=0$ and $m=1.$
 The elliptic genus is a (refined) supersymmetric index and, in particular, is invariant under supersymmetric marginal deformations of the non-linear sigma model.\footnote{This is true except at points in moduli space where a non-compact direction opens up and the CFT is singular.}  Since the moduli space of K3 NLSMs is connected, this means that $\eg(\theory;\tau,z)$ is independent of the particular $K3$ NLSM $\theory$ from which it is calculated. As a result, often we will simply denote it as $\eg(K3;\tau,z)$. 
Explicitly, it can be expressed in terms of Jacobi theta functions as 
\be \eg(K3;\tau,z)=8\sum_{i=2}^4 \frac{\theta_i(\tau,z)^2}{\theta_i(\tau,0)^2}=2y+20+2y^{-1}+O(q). 
\ee

Let us consider a non-singular NLSM $\theory$ with a symmetry group $G$. Then, for each $g\in G$, one can define the twining genus
\be\label{twiningdef} \eg_g(\theory;\tau,z)=\Tr_{{\cal H}_{RR}}(gq^{L_0-\frac{c}{24}}\bar q^{\bar L_0-\frac{\bar c}{24}}y^{J_0} (-1)^{F_L+F_R})\ .
\ee 
From the usual path integral picture, one concludes that $\eg_g$ is a weak Jacobi form of weight $0$ and index $1$ for some congruence subgroup $\modgrp$ of $SL_2(\ZZ)$, possibly with a non-trivial multiplier system (see appendix \ref{a:modularity} for details). 

By the same arguments as for the elliptic genus and under standard assumptions about deformations of $\cN=(4,4)$ superconformal field theories, the twining genus $\eg_g$ is invariant under exactly marginal deformations that preserve supersymmetry and the symmetry generated by $g$. 
More precisely, consider a group of symmetries $G$ such that the subspace 
\be 
\label{ns_fam}\F^{ns}_{G}:=\{\Pi \subseteq \F_G \mid \Pi ~{\text{ is not  singular}}\}\ 
\ee of non-singular positive four-planes is non-empty (cf.\eqref{familydef}). 
Note that there is no loss of generality by restricting to non-singular models, since only for these the world-sheet definition of (twined) elliptic genus 
that we employ in this section applies. 
Then we argue that the following is true: 

{\it Let $g\in O^+(\Gamma^{4,20})$ be a group element fixing pointwise a sublattice $\Gamma^g\subseteq \Gamma^{4,20}$ of signature $(4,d)$ and such that the co-invariant lattice $\Gamma_g 
$ 
contains no roots. Then, the family $\F_g^{ns}:=\F_{\langle g\rangle}^{ns}$ of non-singular four-planes with symmetry $g$ is non-empty and connected. Furthermore, if we assume that the operators $L_0,\bar L_0,J_0,\bar J_0$ and $g$ vary continuously under deformations within the family of NLSM corresponding to $\F_g^{ns}$, then the twining genus $\eg_g$ is constant on $\F_g^{ns}$.
}

The proof is an obvious generalization of the arguments showing that the elliptic genus is independent of the moduli. One first defines the twining genus $\eg_g$ along any connected path within the family $\F^{ns}_g$, and then uses continuity of $L_0,\bar L_0,J_0,\bar J_0$ as well as the discreteness of their spectrum within the relevant space of states to show that $\eg_g$ must be actually constant along this path. An even simpler proof can be given if one adopts the equivalent definition of the twining genus $\eg_g$ as an equivariant index in the $Q$-cohomology of a half-twisted topological model. In this case, it is sufficient to use the fact that a  $g$-invariant and $Q$-exact deformation cannot change the index.

We note that \be\label{chargeinv}\eg_g(\tau,z)=\eg_{g^{-1}}(\tau,-z)=\eg_{g^{-1}}(\tau,z)\ .\ee
Here, the first equality corresponds to the transformation $\left(\begin{smallmatrix}
-1 & 0\\ 0 & -1
\end{smallmatrix}\right)\in SL_2(\ZZ)$ and follows from standard path integral arguments. The second equality holds because the spectrum is a representation of the $su(2)$ algebra contained in the left-moving $\N=4$ algebra,  and $su(2)$  characters are always even.

 Finally, a twining genus $\eg_g$ is invariant under conjugation by any duality $h\in O^+(\Gamma^{4,20})$. More precisely, suppose $h$ is a duality between the models $\C$ and $\C'$, i.e. an isomorphism between the fields and the states of the two theories that maps the superconformal generators into each other and is compatible with the OPE. Then,  the twining genus $\eg_g$ defined in the model $\C$ equals the twining genus $\eg_{hgh^{-1}}$ defined in the model $\C'$. This follows immediately using the cyclic properties of the trace.   The effect of a conjugation under a duality in $O(\Gamma^{4,20})\setminus O^+(\Gamma^{4,20})$ is much more subtle and will be discussed in section \ref{s:parity}.

Using the above results, 
one can assign a twining genus $\eg_g$ to any  conjugacy class $[g]$ of $O^+(\Gamma^{4,20})$ such that $\langle g \rangle$ is a subgroup of four-plane fixing type and that the co-invariant sublattice $\Gamma_g$ contains no roots. In principle, $\eg_g$ and $\eg_{g'}$ are distinct if $g'$ is conjugate to neither $g$ nor $g^{-1}$ as elements of $O^+(\Gamma^{4,20})$, unless accidental coincidences occur.\footnote{Coincidences like this occur, for example, when the dimension of the relevant space of modular forms is small. See sections \S \ref{s:twin_from_modul} and \S\ref{s:evidence} for more details.} In the next subsection we will classify the conjugacy classes of $O^+(\Gamma^{4,20})$.

\subsection{Classification}\label{s:distinctgenera}

While many examples of twining genera have been computed in specific sigma models, a full classification of the corresponding conjugacy classes in $O^+(\Gamma^{4,20})$ and a complete list of all corresponding twining genera is still an open problem. In this work we solve the first problem for all but one of the forty-two possibilities (labelled by conjugacy classes of $Co_0$). 

As a first step in this classification problem, it is useful to consider the eigenvalues of $O^+(\Gamma^{4,20})$-elements in the defining $24$-dimensional representation, denoted below simply by $\rho_{24}:  O^+(\Gamma^{4,20}) \to {\rm End}(V_{24})$, given by $V_{24}\cong \Gamma^{4,20}\otimes_\ZZ \RR$. (This is also the representation on the $24$ R-R ground states in a sigma model $\theory(\Pi)\in \F_{g}$.) It is convenient to encode such information in the form of a   Frame shape, {\it i.e.} a symbol 
\be \pi_g := \prod_{\ell|N}\ell^{k_\ell}\ , \ee
where $N=o(g)$ is the order of $g$. The integers $k_\ell\in \ZZ$ are defined by
\be \det (t{\bf 1}_{24}- \rho_{ 24}(g)) =\prod_{\ell|N} (t^\ell-1)^{k_\ell}\ .
\ee When $g$ acts as a permutation of vectors in $\Gamma^{4,20}\otimes_\ZZ \RR$, all $k_\ell$ are non-negative and the Frame shape coincides with the cycle shape of the permutation. We will say that a Frame shape is a {\it four-plane preserving Frame shape} if it coincides with the Frame shape of an element of a four-plane preserving subgroup of $O(\Gamma^{4,20})$, as defined in \S\ref{sec:class}. Explicitly, a Frame shape is a four-plane preserving Frame shape if and only if $\sum_{\ell} k_{\ell} \geq 4$, corresponding to the fact that the eigenvalue $1$ must be repeated at least four times.  
A salient feature shared by the Frame shapes of all Niemeier groups that correspond to Niemeier lattices with non-trivial root systems (and hence not given by the Leech lattice) is that they are all cycle shapes, and this is not true for some of the Conway Frame shapes. 

One can explicitly check, by using Theorem \ref{t:dirclass}, that such four-plane preserving Frame shapes of $O^+(\Gamma^{4,20})$ are precisely the 42 four-plane preserving Frame shapes of $\Co_0$, corresponding to the 42 four-plane preserving conjugacy classes of $\Co_0$. 
Moreover, if $g,g'\in O^+(\Gamma^{4,20})$ have the same Frame shape, then the co-invariant sublattices $\Gamma_g$ and $\Gamma_{g'}$ are isomorphic
\be \pi_g=\pi_{g'} \qquad \Rightarrow \qquad \Gamma_g\cong \Gamma_{g'}\ .
\ee
This follows from the fact that  $\Gamma_g\cong \Lambda_{\hat g}$ and $\Gamma_{g'}\cong\Lambda_{\hat g'}$ by construction, and moreover $\hat g$ and $\hat g'$ are conjugated in $\Co_0$. 
However, it can happen that $\Gamma_g,\Gamma_{g'}\subset \Gamma^{4,20}$ are isomorphic as abstract sublattices, but are not conjugated within $O^+(\Gamma^{4,20})$.
Indeed, as argued in detail in appendix \ref{a:classifclasses}, the problem of determining the number of $O^+(\Gamma^{4,20})$ conjugacy classes for a given Frame shape can be reduced to that of classifying the (in a suitable sense) inequivalent primitive embeddings of the corresponding lattice $\Lambda_{\hat g}$ in $\Gamma^{4,20}$. The proof of this statement can be found in appendix \ref{a:classifclasses} and the result of the classification is tabulated in appendix \ref{a:results}.

A summary of these results is the following.  Out of the 42 distinct four-plane 
preserving Frame shapes of $O^+(\Gamma^{4,20})$, there was only one (with Frame shape $1^{-4}2^53^46^1$) for which we were unable to determine the number of its $O(\Gamma^{4,20})$ (and thus $O^+(\Gamma^{4,20})$) classes. For this Frame shape, we can only prove that  either a) there is  one class or b) there are two classes for which one is the inverse of the other. The remaining 41 $Co_0$ conjugacy classes give rise to 58 distinct $O(\Gamma^{4,20})$ conjugacy classes and 80 distinct $O^+(\Gamma^{4,20})$ conjugacy classes.

\subsection{World-Sheet Parity}\label{s:parity}

We have argued earlier that the twining genera $\eg_g$ are invariant under conjugation by $O^+(\Gamma^{4,20})$ dualities.
In many physical applications, however, the larger group  $O(\Gamma^{4,20})$ is taken to be the relevant duality group. 
Indeed, the elliptic genus is obviously the same for  two theories related by any element of $O(\Gamma^{4,20})$. 
In this subsection we will show that the twining genera $\eg_g$, on the other hand, are in general different unless the two theories are related by an element of   $O^+(\Gamma^{4,20})\subset O(\Gamma^{4,20})$. 

To understand this, first note that only elements of $O^+(\Gamma^{4,20})\subset O(\Gamma^{4,20})$, which by definition preserve the orientation of any positive four-plane, preserve the orientation of the world-sheet of NLSM \cite{Nahm:1999ps}.  This can be understood as follows. 
The group $SO(4)$ of rotations of a $4$-plane $\Pi\subset \RR^{4,20}$ acts on the  $80$-dimensional space of exactly marginal operators of the  corresponding NLSM $\theory(\Pi)$. 
 The latter have the form $G_{-1/2}\overline{G}_{-1/2}\chi_i$, where $\chi_i$, $i=1,\ldots,20$, are fields of weight $(1/2,1/2)$. 
 Since they preserve the $\N=(4,4)$ algebra, the marginal operators are singlets under the internal  holomorphic and anti-holomorphic R-symmetries $SU(2)^{\it{susy}}_L\times SU(2)_R^{\it{susy}}$.
 On the other hand, they transform as $({\bf 2},{\bf 2})$ under the group $SU(2)_L^{\it{out}}\times SU(2)_R^{\it{out}}$ of outer automorphisms acting on the left- and right-moving supercharges. 
 In other words, $SU(2)_L^{\it{out}}\times SU(2)_R^{\it{out}}$ can be identified with the double cover of the group $SO(4)$ acting on $\Pi$. 
 As a result,  $h\in O(\Gamma^{4,20})$ exchanges $SU(2)_L^{\it{out}}$ and $SU(2)_R^{\it{out}}$ if and only if it flips the orientation of $\Pi$.
 Since the definition of the twining genera effectively only focuses on the action of $g$ on the left-movers ({\it i.e.} on the right-moving ground states), and in general $g$ acts on the right-movers differently, one expects $\eg_g$ to be invariant only under $O^+(\Gamma^{4,20})$ duality transformations. 

This consideration is particularly relevant for symmetries whose corresponding twining genera have complex multiplier systems. Recall that the twined elliptic genus $\eg_g$ is a Jacobi form under a certain congruence subgroup $\modgrp\subset \SL_2(\ZZ)$ with a (in general non-trivial) multiplier $\psi_g: \modgrp\to \CC^\ast$. 
We say that  $\psi_g$  is a {\em complex multiplier system} if its image does not lie in $\RR$. Note that this is necessarily the case when the multiplier has order greater than $2$. 
To see the relation between world-sheet parity and the multiplier system, consider two $K3$ NLSMs $\theory$ and $\theory'$ corresponding to the four-planes $\Pi$ and $\Pi'$ that are related by an $h\in O(\Gamma^{4,20})$, $\Pi'=h(\Pi)$, which reverses the orientation of a positive four-plane and hence exchanges the left- and the right-movers.    
 This means in particular that $h$ maps 
 the $\N=(4,4)$ algebras of $\theory$ and   $\theory'$ as
\be hL_nh^{-1}=\bar L'_n\qquad  hJ_nh^{-1}=\bar J'_n\ . 
\ee
 Given a symmetry $g$ of $\theory$, namely $g\in O^+(\Gamma^{4,20})$ such that $g$ fixes $\Pi$ pointwise, then a corresponding symmetry of $\theory'$ is given by $g':=hgh^{-1}$.
We would like to know whether $\eg_g(\theory;\tau,z)$ and $\eg_{g'}(\theory';\tau,z)$ are the same. 

To answer this question, 
consider the refined twining partition function
\be Z_g(\theory;\tau, z,\bar u)=\Tr_{{\cal H}_{RR}}\left(g q^{L_0-\frac{c}{24}}\bar q^{\bar L_0-\frac{\bar c}{24}}e^{2\pi i z J_0}e^{-2\pi i \bar u \bar J_0}(-1)^{F+\bar F}\right)\ ,
\ee 
for a symmetry $\langle g \rangle$ of the theory $\C$. 
Note that, unlike the elliptic genus, this function is  not an index, and  it depends on both the conjugacy class of $g$ and the point in moduli space, $\theory$. In general, $Z_g$ is not holomorphic in $\tau$, but it is elliptic (one can apply  spectral flow independently to the left- and right-movers) and modular (in the appropriate sense for a non-holomorphic Jacobi form) under some subgroup of $\SL_2(\ZZ)$. 
In particular, if $g$ has order $N$, we expect $Z_g$ to transform under $\left(\begin{smallmatrix}
a & b\\ c & d\end{smallmatrix}\right)\in \Gamma_1(N)$ as
\be 
 Z_g(\theory;\tau,z,u)= \psi_g((\begin{smallmatrix}
a & b\\ c & d
\end{smallmatrix}))\,  e^{-2\pi i ( \frac{cz^2}{c\tau+d}+ \frac{c\bar u^2}{c\bar \tau+d} )} \,Z_g\left(\theory;\frac{a\tau+b}{c\tau+d} ,\frac{z}{c\tau+d},\frac{\bar u}{c\bar \tau+d}\right) \ \qquad \qquad \ .
\ee  
Clearly, one recovers the twining genus as 
$$Z_g(\theory;\tau, z,\bar u=0)=\eg_g(\theory;\tau,z).$$ This implies that the multiplier $\psi_g$ of the twining partition function $Z_g$ coincides with the one of the twining genus $\eg_g$.

Now, the $O(\Gamma^{4,20})$-equivalence and the absence of $O^+(\Gamma^{4,20})$-equivalence between the theories $\theory$ and $\theory'$ implies
\be
Z_{g'}(\theory';\tau',z',\bar u')= Z_g(\theory;\tau,z,\bar u). 
\ee
In the above, apart from $g'=hgh^{-1}$ we also have $\tau'=-\bar \tau$, $z'=-\bar u$ and $\bar u'=-z$.  
To see the relation between the multiplier system of $\eg_{g'}(\theory')$ and $\psi_g$, note that the above equation implies
\be
Z_{g'}(\theory';\tau,0,z) = Z_g(\theory;-\bar\tau,-\bar z,0) =\eg_g(\theory; -\bar \tau, -\bar z) . 
\ee
As a result, assuming that the coefficients of the double series expansions in $q$ and $y$ of $\eg_g(\theory)$ are all real, we obtain 
\be
Z_{g'}(\theory';\tau,0,z)  = \overline{\eg_g(\theory;\tau,z)}
\ee
and hence has  multiplier given by $\overline{\psi_g}:\Gamma_1(N) \to \CC^\ast$,  the inverse of the multiplier of $\eg_g$. 
The above assumption can be proven from the fact that $\overline{\text{Tr}_V (g)}=\text{Tr}_V (g^{-1})$ for any finite-dimensional representation $V$ of a finite group $\langle g\rangle$ and using the identity $\eg_g=\eg_{g^{-1}}$ (see eq.\eqref{chargeinv}). 

Finally, recall that $Z_{g'}(\theory';\tau,0,u)$ and $Z_{g'}(\theory';\tau,z,0)=\eg_{g'}(\theory';\tau,z)$ necessarily have the same multiplier, since they both  coincide with that of $Z_{g'}(\theory';\tau,u,z)$, and thus we conclude that the twining genera $\eg_{g'}(\theory';\tau,z)$ and $\eg_{g}(\theory;\tau,z)$ have multiplier systems that are the inverse (equivalently, complex conjugate) of each other. In particular, $\eg_{g'}(\theory';\tau,z)$ and $\eg_{g}(\theory;\tau,z)$ cannot be the same unless $\psi_g = \overline{\psi_g}$. As a result, symmetries $g$ leading to a twining genus with a complex multiplier system necessarily act differently on left- and right-moving states. 
Note however that it can happen that  a symmetry acting asymmetrically on left- and right-movers leads to a twining genus with a multiplier system of order one or two. 
In what follows we will refer to a symmetry $g$ of a NLSM a {\it complex symmetry} if the resulting twining genus has complex multiplier system.

\subsection{Conway and Umbral Moonshine}\label{s:conjectures}

Once the possible $O^+(\Gamma^{4,20})$ classes of symmetries have been determined, it remains to calculate the corresponding twining genera. As we will see in \S\ref{sec:alltwinings}, many  examples have been computed in specific NLSMs. However, the list of such functions is still incomplete. 
After reviewing the earlier work \cite{Cheng:2014zpa,duncan2016derived}, 
in this subsection we present two conjectures relating physical twining genera to functions arising from umbral and Conway moonshine, as well as some evidence for their validity. 

 Consider the 23 Niemeier lattices $N$ with non-trivial root systems. 
Umbral moonshine attaches to each element $g$ of the Niemeier group $G_N$ a weight one mock Jacobi form 
$$\Psi^N_g(\tau,z)=\sum_{r\in \ZZ/2m} H_{g;r}^N(\tau) \theta_{m,r}(\tau,z),$$ whose index is given by the Coxeter number of the root system of the corresponding Niemeier lattice $N$ \cite{Cheng:2013wca}.
In the above expression, the index $m$ theta functions are given by
$$
\theta_{m,r}(\tau,z) = \sum_{k=r\!\!\!\!\mod{2m}}q^{k^2/4m} y^{k},
$$
and the vector-valued mock modular form $H_{g}^N=(H_{g;r}^N)$ contains precisely the same information as the  mock Jacobi form $\Psi^N_g$.
 In \cite{Cheng:2014zpa}, a weight 0 index 1 Jacobi form for a certain $\modgrp\subseteq \SL_2(\ZZ)$,  is then given in terms of $\Psi^N_g$ by \be\label{um_twining}
\phi_g^N(\tau,z) = \eg(N;\tau,z) + {\theta_1^2(\tau,z) \over \eta^6(\tau)} \left({1\over 2\pi i}{\partial\over \partial w} \Psi^N_g(\tau,w)\right) \Big\lvert_{w=0}. 
\ee 
In the above formula, $\eg(N;\tau,z)$ denotes the holomorphic part of the elliptic genus of the singularities corresponding to the root system of $N$.   Recall that the only type of geometric singularities a $K3$ surface may develop are du Val surface singularities, i.e. singularities of the complex plane of the form $\mathbb C^2/G$, where $G$ is a finite subgroup of $SU(2)_{\mathbb C}$. These singularities have an ADE classification, formally analogous to the one of simply-laced root systems. A conformal field theory description of string theory with ADE singularities as the target space was given in \cite{Ooguri:1995wj}. The form of their elliptic genus was investigated in a number of papers, including \cite{Troost:2010ud,Eguchi:2010cb,Ashok:2011cy,Ashok:2013pya,Murthy:2013mya,Cheng:2014zpa,Harvey:2014nha}. For instance, when $N$ is the Niemeier lattice with root system $24A_1$,  $\eg(N;\tau,z) := 24\eg(A_1;\tau,z)$ is 24 times the holomorphic part of the elliptic genus of an $A_1$-singularity. 
 
 It was conjectured in  \cite{Cheng:2014zpa} that $\phi_g^N$ are candidates for twining genera arising from $K3$ NLSMs when $g$ preserves a four-plane; this conjecture has passed a few consistency tests and was further tested in \cite{cheng2015landau}. For a given $N$ with a non-trivial root system, we will denote the set of Jacobi forms arising in this way as $$\Phi(N):= \{\phi_g^N\lvert g~{\text{is a four-plane preserving element of } G_N}\}.$$ 
 
The construction and conjecture in \cite{Cheng:2014zpa} gives us a set of Jacobi forms $\Phi(N)$ attached to  each of the 23 Niemeier lattice $N$ with roots that (conjecturally) play the role of twined $K3$ elliptic genera at certain points in the moduli space. 
It is also possible to define a similar set $\Phi(\Leech)$ associated with the Leech lattice $\Leech$, though the construction is quite different. In \cite{duncan2016derived} Duncan and Mack-Crane proposed two (possibly coinciding) weight 0 index 1 weak Jacobi forms for a certain $\modgrp\subseteq \SL_2(\ZZ)$, denoted $\phi_{g,+}^\Leech(\tau,z)$ and $ \phi_{g,-}^\Leech(\tau,z)$, to each of the four-plane preserving conjugacy classes of $\Co_0$. Concretely, one has 
$$
\phi_{g,\pm}^\Leech(\tau,z) = \sum_{i=1}^4  \epsilon_{g,i} \, \theta_i^2(\tau,z) \prod_{k=1}^{10}  \theta_i^2(\tau,\rho_{g,k})
$$
where $$\{1,1,1,1, e^{- 2\pi i\rho_1}, e^{ 2\pi i \rho_1},\dots ,e^{- 2\pi i \rho_{10}}, e^{ 2\pi i\rho_{10} } \}$$ are the twenty-four eigenvalues of $g$ acting on the 24-dimensional representation, and 
$$
\epsilon_{g,i} = \begin{cases}\mp 1& i=1\\ -{\Tr_{\bf 4096} g \over 4\prod^{10}_{k=1} ( e^{- \pi i\rho_k}+e^{ \pi i\rho_k})}&i=2 \\ 1 & i=3\\ -1
&i=4 \end{cases}.
$$ In the above formula, $\bf 4096$ is the Conway representation corresponding to the fermionic ground states and decomposes as ${\bf 4096 = 1 + 276 + 1771 + 24 + 2024}$ in terms of irreducible representations.

 One has $\phi_{g,+}^\Leech \neq \phi_{g,-}^\Leech$ if and only if the invariant sublattice $\Leech^g$ has exactly rank four. 
 The construction of $\phi_{g,+}^\Leech$ and $\phi_{g,-}^\Leech$ is based on an $\cN=1$ super VOA of central charge $c=12$, which has symmetry group $\Co_0$ \cite{Duncan:2014eha}. Henceforth we define 
 $$\Phi(\Leech):= \{\phi_{g,+}^\Leech , \phi_{g,-}^\Leech \lvert g~{\text{is a four-plane preserving element of } \Co_0}\}.$$
 The authors of \cite{duncan2016derived} then conjectured that the functions in $\Phi(\Leech)$ are  relevant for twining genera arising from (non-singular) $K3$ NLSMs. In fact, they  conjecture that all twining genera arising from any $K3$ NLSM coincide with some element of $\Phi(\Leech)$ arising from the Conway module, which is supported by the  non-trivial fact that all the {\it known} twining genera $\eg_g$ coincide with a function in $\Phi(\Leech)$.

 There are a few motivations for us to modify this conjecture and to make the conjecture in \cite{Cheng:2014zpa} more concrete. Firstly,   the classification theorems of \S\ref{sec:class} suggest that, if one does not exclude the loci in the moduli space (\ref{moduli_space}) corresponding to singular four-planes, one should treat the Leech lattice and the other 23 Niemeier lattices with non-trivial root systems on an equal footing when discussing the four-plane preserving symmetry groups. As a consequence, one might expect both Conway and umbral moonshine to play a role in describing the twining genera. Secondly,  UV descriptions of K3 NLSMs given by Landau--Ginzburg (LG) orbifolds furnish evidence that suggests that the Conway functions  alone are not sufficient to capture all the twining genera \cite{cheng2015landau} (see also section \S\ref{LGtwin}.) To be more precise, there are twining genera arising from symmetries of  UV theories that flow to $K3$ NLSMs  in the IR, that can be reproduced from the set $\Phi(N)$ for some $N$ with roots, but do not coincide with anything in $\Phi(\Leech)$. One caveat preventing this result from being a definitive argument is that the action of the corresponding  symmetry on the IR $\mathcal N=(4,4)$ superconformal algebra is not accessible in the UV analysis. 
 
 The third and arguably  most convincing argument to include functions arising from both  Conway and umbral moonshine is the following. As we have seen in \S\ref{s:parity}, a pair of theories related by a flip of world-sheet parity gives rise to twining genera with inverse multiplier systems. At the same time, $\Phi(\Leech)$ contains some twining functions with a complex multiplier system and no functions with the inverse multiplier. Such functions can always be recovered from $\Phi(N)$ for some other Niemeier lattice $N$. As a result, no single $\Phi(N)$ (not even for $N$ the Leech lattice) is sufficient  to reproduce both a physical twining function $\eg_g(\theory)$ with complex multiplier and its parity-flipped counterpart $\eg_{g'}(\theory')$.

These observations lead us to formulate the following conjecture:

 \begin{conjecture}\label{conj:all_arise_from_moonshine} 
Let $\theory(\Pi)$ be a $K3$ NLSM and let $G$ be its symmetry group. 
Then there exists at least one Niemeier  lattice $N$  such that $\Gamma_G$ can be embedded in $N$, $G \subseteq G_N$, and for any $g\in G$ 
 the twining genus $\eg_g$ coincides with an element of $\Phi(N)$. 
 \end{conjecture}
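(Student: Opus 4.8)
The plan is to separate the statement into its lattice-theoretic content and its analytic (modular) content, dispatching the former with the classification results of \S\ref{sec:class} and reducing the latter to a finite rigidity check. First I would settle the existence of a suitable $N$. Since $\theory(\Pi)$ is a genuine (non-singular) NLSM, its symmetry group $G$ fixes the positive four-plane $\Pi$ pointwise, so $\Gamma_G$ is negative-definite and contains no roots. Theorem \ref{t:dirclass} then provides a primitive embedding $\Gamma_G\hookrightarrow N$ into a Niemeier lattice $N$ --- indeed $N$ may be taken to be the Leech lattice --- and Proposition \ref{t:UmbralSymm} upgrades this to $G\subseteq G_N$. This already furnishes at least one $N$ with $\Gamma_G\hookrightarrow N$ and $G\subseteq G_N$, so the only genuine content remaining is that, for an \emph{appropriately chosen} such $N$, every $\eg_g$ with $g\in G$ lies in $\Phi(N)$.

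The second step is to pin down the modular data of each $\eg_g$. Using the path-integral arguments of appendix \ref{a:modularity}, $\eg_g$ is a weak Jacobi form of weight $0$ and index $1$ for a congruence group $\modgrp$ with a multiplier $\psi_g$ fixed by the eigenvalues of $\rho_{24}(g)$. The relevant space of such forms is finite-dimensional, and a weight-$0$ index-$1$ form is controlled by its polar part together with its behaviour at all cusps; generically the $q^0 y^{\pm 1}$ (polar) and $q^0 y^0$ (constant) coefficients already determine it uniquely within the relevant space. These leading coefficients are computed directly from the action of $g$ on the $24$ R-R ground states --- equivalently, from the Frame shape $\pi_g$ --- since the $q^0$ term of $\eg_g$ is a trace of $g$ on $\rho_{24}$ decomposed according to the $J_0$-eigenvalue.

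The third step is the matching itself. On the moonshine side the functions $\phi_g^N$ and $\phi_{g,\pm}^\Leech$ are given in closed form in \S\ref{s:conjectures}, so their polar and constant terms are explicit functions of the same eigenvalue data together with $\Tr_{\bf 4096}g$, respectively the mock modular form $\Psi_g^N$. One then verifies that for some choice of $N$ (and of sign in the Conway case) these leading terms coincide with those of $\eg_g$, and invokes the finite-dimensional rigidity of the previous step to promote agreement of leading terms to equality of the full Jacobi forms. The choice of $N$ is constrained by the world-sheet parity analysis of \S\ref{s:parity}: whenever $\eg_g$ carries a complex multiplier, $\Phi(\Leech)$ cannot contain both it and its parity conjugate, so $N$ must be taken with roots to realize the correct multiplier. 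One must further check that a single such $N$ works simultaneously for \emph{all} $g\in G$, fixing it from any calculable point using the constancy of $\eg_g$ on the connected family $\F_g^{ns}$.

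The hard part is the failure of the rigidity of the second step once the space of admissible Jacobi forms has dimension exceeding the number of polar and constant constraints --- precisely the situation flagged for the Frame shape $1^{-4}2^53^46^1$ and other higher-order classes --- so that the leading coefficients no longer determine $\eg_g$. Closing this gap demands an independent computation of $\eg_g$ on each family, which presupposes that every $\F_g^{ns}$ contains a model (torus orbifold, Gepner, or Landau--Ginzburg orbifold) where the full genus is accessible; at present there is no guarantee that such a point exists in every family, nor a uniform construction realizing every admissible $G\subseteq G_N$. A conceptual resolution would instead identify the graded $G$-module of BPS states of $\theory(\Pi)$ with a suitable truncation of the umbral or Conway moonshine module, rendering the matching of all $\eg_g$ structural rather than case-by-case. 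Establishing such an isomorphism of modules is, however, essentially the content of the moonshine conjectures themselves, and is the genuinely difficult step.
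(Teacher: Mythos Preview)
The statement you are attempting to prove is labelled \emph{Conjecture} in the paper, and the paper does \emph{not} offer a proof: it formulates the statement, motivates it, and then assembles partial evidence in \S\ref{s:evidence} (56 out of 81 twining genera computed and checked, consistency of complex multipliers under world-sheet parity, and the inequality $K'\ge K$ between $O^+(\Gamma^{4,20})$-classes and distinct moonshine functions). So there is no ``paper's own proof'' against which to compare.

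Your first step---the lattice-theoretic content---is correct and is exactly what the paper \emph{does} prove: Theorem~\ref{t:dirclass} gives the primitive embedding $\Gamma_G\hookrightarrow N$ (with $N=\Lambda$ available since the model is non-singular), and Proposition~\ref{t:UmbralSymm} gives $G\subseteq G_N$. You are right that this part is settled.

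Your second and third steps, and your candid diagnosis at the end, accurately locate the obstruction. The rigidity argument (fixing $\eg_g$ from its modular group, multiplier, and $q^0$ coefficients) is precisely the mechanism the paper exploits in \S\ref{s:twin_from_modul}, and you correctly observe that it fails whenever $\dim M_2(\modgrp;\psi)>1$. Your fallback---finding an explicitly computable model in every family $\F_g^{ns}$---is also the paper's strategy (torus orbifolds, LG orbifolds), and the paper is explicit that this has not been achieved for all classes: roughly 25 of the 81 $O^+(\Gamma^{4,20})$-classes have no known NLSM computation (the $\times$ entries in Table~\ref{last_table_genera}). Your final remark, that a structural resolution would require identifying the BPS module with a moonshine module and that this is essentially the open problem, is the honest assessment. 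In short: you have not found a gap in your own reasoning that the paper could close; rather, you have rediscovered why the paper states this as a conjecture.

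One minor sharpening: in your first step you note $N=\Lambda$ always works for the embedding, but you should be careful not to suggest this settles the choice of $N$ in the conjecture. The conjecture demands a \emph{single} $N$ for which all $\eg_g$, $g\in G$, lie in $\Phi(N)$; as you note later, the complex-multiplier analysis of \S\ref{s:parity} shows $N=\Lambda$ can fail this requirement, and the paper's third bullet at the end of \S\ref{s:conjectures} makes the stronger observation that no single $\Phi(N)$ ever contains a complex-conjugate pair. So the existence of a suitable $N$ for the full conjecture is not a corollary of Theorem~\ref{t:dirclass} alone.
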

In other words,
 we conjecture that for each $K3$ NLSM $\theory$,  the set $\tilde\Phi(\theory):=\{\eg_g(\theory(\Pi))\lvert  g\in O^+(\Gamma^{4,20}), ~g~\text{fixes $\Pi$ pointwise}\}$ of physical twining genera  is a subset of the $\Phi(N)$ for some Niemeier lattice $N$. 
 Clearly, for most theories,  the Niemeier lattice $N$ satisfying the above properties is not  unique. 
 In particular, recall that there are many coincidences among the functions associated with different Niemeier lattices. In other words, there exist $\phi \in \Phi(N)$, $\phi' \in \Phi(N')$ with $N\neq N'$ such that $\phi=\phi'$.

Conversely, we conjecture that {\it all} elements of $\Phi(N)$  play a role in capturing the symmetries of BPS states of $K3$ NLSMs:  

\begin{conjecture}\label{c:manytwin} 
For any element $\phi$ of any of the 24 $\Phi(N)$,  there exists a NLSM $\theory$ with a symmetry $g$ such that $\phi=\eg_g(\theory)$. 

\end{conjecture}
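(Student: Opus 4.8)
The plan is to split the statement into a \emph{realization} step, which produces a non-singular NLSM that carries the required symmetry, and an \emph{identification} step, which shows that the twining genus of that symmetry is the prescribed function $\phi$. The realization step is lattice-theoretic and rests on Theorem~\ref{t:invclass}, while the identification step carries the analytic content and is where I expect the main difficulty.

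For the realization step, fix $\phi\in\Phi(N)$; by construction $\phi$ is labelled by a four-plane preserving class of the Niemeier group $G_N$, and hence (by the check in \S\ref{s:distinctgenera}) by one of the $42$ four-plane preserving Frame shapes of $\Co_0$. The essential point is that a \emph{non-singular} realization of any such Frame shape always exists: choosing a representative $\hat g\in\Co_0$ whose co-invariant lattice $\Lambda_{\hat g}\subset\Leech$ is automatically root-free, Theorem~\ref{t:invclass} (in the Leech case) embeds $\Lambda_{\hat g}$ primitively into $\Gamma^{4,20}$ and extends $\hat g$ to $\tilde g\in O(\Gamma^{4,20})$ fixing a positive four-plane $\Pi$ with $\Gamma_\Pi$ free of roots, so that $\theory(\Pi)$ is a genuine NLSM and $\tilde g$ a genuine symmetry. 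Since $\tilde g$ fixes $\Pi$ pointwise it preserves its orientation, so $\tilde g\in O^+(\Gamma^{4,20})$. The only freedom I would then exploit is the choice of $O^+(\Gamma^{4,20})$ conjugacy class within the given Frame shape: by \S\ref{s:parity}, passing to a parity-flipped class replaces the twining genus by one with the complex-conjugate multiplier, and this is precisely the mechanism that allows a single Frame shape to produce both the Conway functions $\phi^\Leech_{\hat g,\pm}$ and, where it differs, the umbral function $\phi^N_g$.

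For the identification step, the key leverage is the constancy result of \S\ref{sec:conj}: because $\Gamma_{\tilde g}\cong\Lambda_{\hat g}$ has no roots, the non-singular family $\F_{\tilde g}^{ns}$ is non-empty and connected and $\eg_{\tilde g}$ is constant on it, so it suffices to evaluate the twining genus at any one computable model in the family---a torus orbifold, a Gepner point, or a Landau--Ginzburg orbifold---and check equality with $\phi$. This reduces the conjecture to a finite collection of explicit calculations, refined by the $O^+(\Gamma^{4,20})$ conjugacy classes of \S\ref{s:distinctgenera} and appendix~\ref{a:classifclasses}; many of these are carried out in \S\ref{sec:alltwinings} and recorded in Table~\ref{last_table_genera}, furnishing the principal evidence. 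For the Leech functions I would instead try to match $\phi^\Leech_{g,\pm}$ directly against an equivariant index in the Conway $\N=1$ super vertex operator algebra of \cite{Duncan:2014eha,duncan2016derived}, the structure from which these functions are defined.

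The main obstacle is the identification for the functions that are not reached by any known solvable model. These are exactly the complex-multiplier cases of \S\ref{s:parity}, whose parity-flipped partners require a \emph{different} Niemeier lattice, together with the isolated loci where $\Leech^g$ has rank exactly four---there $\F_{\tilde g}^{ns}$ is a single point (so constancy gives no leverage) and the two functions $\phi^\Leech_{g,+}\neq\phi^\Leech_{g,-}$ must be distinguished by a computation at that very point. For these I see no route through deformation, and one would need either a genuinely new family of computable NLSMs, or an abstract argument forcing $\eg_{\tilde g}=\phi$---for instance by confining $\eg_{\tilde g}$, through its weight, index, multiplier system and pole behaviour, to a space of Jacobi forms of dimension one. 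Finally, the whole argument must be carried out per conjugacy class rather than per Frame shape, since distinct $O^+(\Gamma^{4,20})$ classes with a common Frame shape can in principle yield different genera; keeping track of which class realizes which $\phi$ is what separates the substantial partial verification available here from a complete proof.
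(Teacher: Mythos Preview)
The statement you are attempting to prove is a \emph{conjecture} in the paper, not a theorem; the paper offers no proof of it and explicitly presents it as open. What the paper provides is evidence (\S\ref{s:evidence}): the inequality $K'\ge K$ between the number of $O^+(\Gamma^{4,20})$ classes and the number of distinct moonshine functions per Frame shape, the explicit NLSM and Landau--Ginzburg computations collected in Table~\ref{last_table_genera}, and the consistency of the complex-multiplier pairing under world-sheet parity. So there is no ``paper's own proof'' to compare against.

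That said, your outline is a faithful articulation of the strategy implicit in the paper's evidence-gathering, and you have correctly located the genuine obstructions. Your realization step via Theorem~\ref{t:invclass} (through the Leech lattice, to guarantee a root-free co-invariant lattice and hence a non-singular model) is sound and does produce, for each Frame shape, at least one $O^+(\Gamma^{4,20})$ class with a well-defined twining genus. The identification step is indeed where the argument stalls, and for exactly the reasons you name: (i) for the complex-multiplier Frame shapes $3^8,4^6,6^4,4^28^2$ and for the rank-four cases where $\phi^\Leech_{g,+}\neq\phi^\Leech_{g,-}$, no exactly solvable NLSM realization is currently known that would let one compute $\eg_g$ directly, and the LG computations of \cite{cheng2015landau} do not control the IR $\N=(4,4)$ algebra; (ii) the modularity argument of \S\ref{s:twin_from_modul} pins down the function only when the relevant space of weight-two forms is one-dimensional, which fails in several of the outstanding cases; (iii) even granting constancy on $\F_g^{ns}$, one must match specific $O^+(\Gamma^{4,20})$ classes to specific functions $\phi$, and for several Frame shapes (e.g.\ $2^210^2$, $1^211^2$, $1^12^17^114^1$, $1^13^15^115^1$) the paper itself is unable to propose such a matching. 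These are not gaps in your reasoning so much as the reasons the statement remains a conjecture; your proposal is an honest roadmap rather than a proof, and the paper does not go further.
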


In  \S\ref{s:evidence} we collect some evidence supporting these conjectures.
We will close this section with a few remarks on the consequences of the above conjectures, in relation to the complex symmetries discussed in \S\ref{s:parity}.
\begin{itemize}
\item If a given function in $\Phi(N)$ has complex multiplier system, then Conjecture \ref{c:manytwin} implies that it has to coincide with a twining genus arising from a complex symmetry acting differently on the left- and right-moving Hilbert spaces. 
\item As we argued in \S\ref{s:parity}, if a theory $\theory$  leads to the twining function $\eg_g(\theory)$ with a complex multiplier system, the parity-flipped theory $\theory'$ has a twining genus $\eg_{g'}(\theory')$ with the inverse multiplier system. As a result, the following observations constitute consistency checks and circumstantial evidence for Conjecture \ref{conj:all_arise_from_moonshine} and Conjecture \ref{c:manytwin}. 
Namely, whenever there exists a Niemeier lattice $N$ and a function $\phi\in \Phi(N)$ with a complex multiplier system, arising from a group element with a given Frame shape $\pi$, then there exists at least one other Niemeier lattice $N'$ such that there exists a $\phi'\in \Phi(N')$ with the inverse complex multiplier system, which moreover arises from a group element with the same Frame shape $\pi$.  
See Table \ref{tbl:mult} for the pairs $(N',g')$ with the above properties. 

\item In fact, by inspection one can check that there are never two functions $\phi,\phi'\in \Phi(N)$ arising from the same Niemeier lattice that have inverse complex multiplier systems. As a result, Conjecture \ref{conj:all_arise_from_moonshine} predicts that a theory corresponding to the four-plane $\Pi$ must have its orthogonal sublattice $\Gamma^{4,20}\cap \Pi^\perp$ embeddable into more than one Niemeier lattice in the event that it has a complex symmetry. 

\item Recall that a theory in the NLSM moduli space (\ref{moduli_space}) on a torus orbifold locus--one of the few types of exactly solvable models--always contains symmetries which can only be embedded using the Leech lattice (in the sense of Theorem \ref{t:dirclass}) \cite{Gaberdiel:2012um}.  As a result, assuming the veracity of Conjecture \ref{conj:all_arise_from_moonshine},  complex symmetries can never arise in such a model. This makes it particularly difficult to find examples of $K3$ NLSMs with complex symmetries and 
probably explains why we have seen no such examples so far. In  \S\ref{LGtwin} we will discuss results of the aforementioned investigation of LG orbifolds \cite{cheng2015landau}, while in \S\ref{s:twin_from_modul} we will analyze the constraints on such genera coming from modularity.
\end{itemize}

\section{Examples}\label{sec:alltwinings}

In this section, we collect all known explicit calculations of twining genera in NLSMs on $K3$. Most of these results have appeared earlier in the literature, the only exceptions being certain genera appearing in \S\ref{T4orbifolds} and \S\ref{s:twin_from_modul}. See Table \ref{last_table_genera} for the data.  While these examples do not cover the complete set of all possible twining genera, the fact that these partial results fit nicely with the general properties described in the previous sections represents strong evidence in favor of our conjectures.

\subsection{Geometric Symmetries}\label{geomtwin}

We say that a symmetry of a $K3$ NLSM is a geometric symmetry if it is induced from a hyper-K\"ahler preserving automorphism of the target $K3$ surface. 
 These symmetries are characterized by the property that the fixed sublattice $\Gamma^g$ contains a unimodular $\Gamma^{1,1}$, which can be interpreted as the components $H_0(S,\ZZ)\oplus H_4(S,\ZZ)$ of degree $0$ and $4$ in the integral homology of the  $K3$ surface $S$. 
 There exist such geometric symmetries with order $N\in\{2,3,4,5,6,7,8\}$. For each of these orders
 there is precisely one Frame shape
\be 1^82^8,\quad 1^63^6,\quad 1^42^44^4,\quad 1^45^4,\quad 1^22^23^26^2,\quad 1^37^3,\quad 1^22^14^18^2\ \ee
   that can arise from a geometric symmetry of a $K3$ surface \cite{mukai1988finite}. 
  
 A general formula for the corresponding twining genus for each of the above Frame shapes has been given in \cite{Cheng:2010pq} and
  \cite{Sen:2010ts} and reads 
\be \eg_g(\tau,z)= \frac{\Tr_{\bf 24}(g)}{\varphi(N)}\sum_{\substack{n\in \ZZ/N\ZZ \\ gcd(n,N) =1 }}\frac{\vartheta_1(\tau,z+\frac{n}{N})\vartheta_1(\tau,z-\frac{n}{N})}{\vartheta_1(\tau,\frac{n}{N})\vartheta_1(\tau,-\frac{n}{N})}\ ,
\ee where the totient function $\varphi(N):=|(\ZZ/N\ZZ)^\times|$ is number of integers mod $N$ that are coprime to $N$. These twining genera can be defined in purely geometric terms as an equivariant complex elliptic genus and computed using a version of the Lefschetz fixed point formula \cite{Creutzig:2013mqa}. The results agree with the formulas derived from NLSMs.

\subsection{Torus Orbifolds}\label{T4orbifolds}

If a $K3$ model is obtained as a (possibly asymmetric) orbifold of a torus $T^4$ by a symmetry $g$ of order $N$, then it has a quantum symmetry $Q$ of order $N$, which acts as multiplication by $e^{\frac{2\pi i r}{N}}$ on all states in the $g^r$-twisted sector, $r\in \ZZ/N\ZZ$.
It is not difficult to compute the twining genus of a quantum symmetry, since it can be computed from the twining genus of $g$ on the NLSM $\theory_{T^4}$  on the $T^4$. In \cite{Volpato:2014zla}, general formulas for the twining genera of all possible symmetries of any NLSM on $T^4$ were given. The supersymmetric NLSM on $T^4$ has four left-moving and four right-moving Majorana-Weyl fermions. The holomorphic fermions form two doublets $(\chi_i^+,\chi_i^-)$, $i=1,2$, each in the $ ({\bf 2},{\bf 1})$ representation of the $SU(2)_L\times SU(2)_R$ R-symmetry , while the anti-holomorphic fermions form doublets $(\tilde\chi_i^+,\tilde\chi_i^-)$, $i=1,2$, in the $ ({\bf 1},{\bf 2})$ representation. The symmetry $g$ commutes with the R-symmetry and acts on the multiplets by
\begin{align}
(\chi_1^+,\chi_1^-)\mapsto& \zeta_L(\chi_1^+,\chi_1^-) & (\chi_2^+,\chi_2^-)\mapsto& \zeta_L^{-1}(\chi_2^+,\chi_2^-)\\
(\tilde\chi_1^+,\tilde\chi_1^-)\mapsto& \zeta_R(\tilde\chi_1^+,\tilde\chi_1^-) & (\tilde\chi_2^+,\tilde\chi_2^-)\mapsto& \zeta_R^{-1}(\tilde\chi_2^+,\tilde\chi_2^-)\ ,
\end{align} with  
\be \zeta_{L,R}=\exp(2\pi i r_{L,R}) \quad \text{for some}\quad r_L,r_R\in \frac{1}{N}\ZZ/\ZZ\ .
\ee For $\zeta_L\neq 1$, the twining genus of $g$ is given by
\be {\cal Z}_g(\theory_{T^4};\tau,z)=(\zeta_L+\zeta_L^{-1}-2)(\zeta_R+\zeta_R^{-1}-2)\frac{\vartheta_1(\tau,z+r_L)\vartheta_1(\tau,z-r_L)}{\vartheta_1(\tau,r_L)\vartheta_1(\tau,r_L)}
.\ee
Note that the above function is invariant under both $r_L\to -r_L$ and $r_R\to -r_R$, but is in general not invariant under $r_L\leftrightarrow r_R$. When  $r_L=0\mod \ZZ$ (i.e. $\zeta_L=1)$, the twining genus is given instead  by
\be\label{alternative} {\cal Z}_g(\theory_{T^4};\tau,z)=(\zeta_R+\zeta_R^{-1}-2)\frac{\vartheta_1(\tau,z)^2}{\eta(\tau)^6}\Theta_L(\tau)\ ,
\ee where \be \Theta_L(\tau):=\sum_{\lambda\in L} q^{\frac{\lambda^2}{2}}\  \ee is the theta series associated with a lattice $L$ of rank $4$.  The only relevant cases are $(r_L,r_R)=(0,1/2)$ and $(r_L,r_R)=(0,1/3)$, in which cases $L$ is the $D_4$ or $A_2^2$ root lattices respectively (see \cite{Volpato:2014zla} for more details). In particular, the untwined elliptic genus of $T^4$ is ${\cal Z}_e(T^4;\tau,z)=0$.

When a CFT has a discrete symmetry, it is also useful to discuss the twisted sectors of the symmetry (modules of the invariant sub-algebra), labelled by the twisting group element $g$. 
For any element $h$ of the  discrete symmetry group that commutes with the twisting element $g$, one can consider the graded trace of $h$ over the $g$-twisted sector, analogous to the way in which
a twined partition function or twined elliptic genus is defined.  
Such a character is often called the  twisted-twining partition function/elliptic genus. As usual in the literature, we use ${\cal Z}_{h,g}$ to denote the $g$-twining function in the $h$-twisted sector. In particular, 
the twining function of the original unorbifolded theory is given by $ {\cal Z}_{g} :=  {\cal Z}_{e,g}$.  

Using the modular properties of the theta function as well as the fact that the twisted-twining genera  form a representation of $\SL_2(\ZZ)$, we obtain the following expression which is valid for $r_LM\neq 0\mod \ZZ$
\begin{align}\notag {\cal Z}_{g^n,g^m}(\theory_{T^4};\tau,z)=&(\zeta_L^M+\zeta_L^{-M}-2)(\zeta_R^{M}+\zeta_R^{-M}-2)\\\label{T4twin}& \times\frac{\vartheta_1(\tau,z+r_L(n\tau+m))\vartheta_1(\tau,z-r_L(n\tau+m))}{\vartheta_1(\tau,r_L(n\tau+m))\vartheta_1(\tau,r_L(n\tau+m))}\ ,
\end{align} where $M=\gcd(n,m)$. When $r_LM= 0\mod \ZZ$,  ${\cal Z}_{g^n,g^m}$ is given by a suitable $\SL_2(\ZZ)$ transformation of \eqref{alternative}.

The elliptic genus of the $g$-orbifolded theory $\theory_{K3}$, which we assume to be a $K3$ model, is then given in the usual way by
\be {\cal Z}(\theory_{K3};\tau,z)=\frac{1}{N}\sum_{n,m\in\ZZ/N\ZZ} {\cal Z}_{g^n,g^m}(\theory_{T^4};\tau,z)\ .
\ee 
Similarly, the twining genus of the quantum symmetry $Q$ is given by 
\be\label{twinQ} {\cal Z}_Q(\theory_{K3};\tau,z)=\frac{1}{N}\sum_{n,m\in\ZZ/N\ZZ}  e^{\frac{2\pi i n}{N}} {\cal Z}_{g^n,g^m}(\theory_{T^4};\tau,z)\ .
\ee  
A number of new twining genera can be obtained from the above calculation. The relevant values of $r_L,r_R$ and the Frame shapes of the corresponding quantum symmetries are collected in Table \ref{t:Qtwin}.

A set of more general twining genera can be obtained as follows. 
 Suppose that $g$ is a symmetry of a NLSM on $T^4$ of order $N$ and the $g^n$-orbifolded theory is a $K3$ NLSM for a $n|N$.
Then $g$ induces a symmetry $g'$ of order $N/n$ on the resulting $K3$ NLSM that commutes with the quantum symmetry, and one has
\be\label{gT4orbif} {\cal Z}_{g'^lQ^m}(\theory_{K3};\tau,z)=\frac{n}{N}\sum_{j,k=1}^{N/n} e^{\frac{2\pi i jm}{N/n}} {\cal Z}_{g^{nj},g^{nk+l}}(\theory_{T^4};\tau,z)\ .
\ee The right-hand side of this equation can be easily computed using \eqref{T4twin}. The Frame shapes corresponding to these symmetries are collected in Table \ref{t:Qtwin2}.

\begin{table}[h]
\centering
\begin{tabular}{|c|c|c|c|}
\hline
 $r_L$ & $r_R$ & $\pi_Q$ & w-s parity\\
\hline
 $1/2$ & $1/2$ & $1^{-8}2^{16}$ & $\circ$\\
\hline
 $1/3$ & $1/3$ & $1^{-3}3^{9}$& $\circ$\\
\hline
 $1/4$ & $1/4$ & $1^{-4}2^{6}4^4$& $\circ$\\
\hline
 $1/6$ & $1/6$& $1^{-4}2^{5}3^46^1$& $\circ$\\
\hline
 $1/5$ & $2/5$& \multirow{2}{*}{$1^{-1}5^{5}$}& \multirow{2}{*}{$\updownarrow$}\\
 $2/5$ & $1/5$&  & \\
\hline
 $1/4$ & $1/2$ &  \multirow{2}{*}{$2^{-4}4^{8}$}& \multirow{2}{*}{$\updownarrow$}\\
 $1/2$ & $1/4$ & &\\
\hline
 $1/6$ & $1/2$ &  \multirow{2}{*}{ $1^{-2}2^{4}3^{-2}6^4$}& \multirow{2}{*}{$\updownarrow$}\\
 $1/2$ & $1/6$ & &\\
\hline
 $1/6$ & $1/3$ & \multirow{2}{*}{ $1^{-1}2^{-1}3^36^3$}& \multirow{2}{*}{$\updownarrow$}\\
 $1/3$ & $1/6$ & &\\
\hline
$1/8$ & $5/8$ & \multirow{2}{*}{ $1^{-2}2^{3}4^18^2$}& \multirow{2}{*}{$\updownarrow$}\\
 $5/8$ & $1/8$ & &\\
\hline
 $1/10$ & $3/10$ & \multirow{2}{*}{ $1^{-2}2^{3}5^210^1$}& \multirow{2}{*}{$\updownarrow$}\\
 $3/10$ & $1/10$ & &\\
\hline
 $1/12$ & $5/12$ & \multirow{2}{*}{ $1^{-2}2^{2}3^24^112^1$}& \multirow{2}{*}{$\updownarrow$}\\
 $5/12$ & $1/12$ & &\\
\hline
\end{tabular}
\caption{\footnotesize{Frame shapes corresponding to quantum symmetries $Q$ of torus orbifolds. The twining genera can be obtained by applying formulae \eqref{twinQ} and \eqref{T4twin}.
Clearly, a twining genus for a quantum symmetry is fixed by world-sheet parity if and only if $r_L=\pm r_R\mod \ZZ$.
}}\label{t:Qtwin}
\end{table}

\begin{table}[h]
\centering
\begin{tabular}{|c|c|c|c|c|}
\hline
 $r_L$ & $r_R$ & $o(Q)$ & $o(g')$ & $\pi_{Qg'}$\\
\hline
$1/4$ & $1/4$ & $2$& $2$ & $2^{12}$\\
\hline 
\multirow{2}{*}{$1/6$} & \multirow{2}{*}{$1/6$} & $2$& $3$ & $1^4 2^1 3^{-4} 6^5$\\
 & & $3$ & $2$& $ 1^5 2^{-4} 3^1 6^4$\\
\hline 
$1/6$ & $1/2$ &  $ 2$& $3$ & $1^{-2} 2^4 3^{-2} 6^4$\\
\hline 
$1/6$ & $1/3$ &  $3$ & $2$ & $ 1^{-1} 2^{-1} 3^3 6^3$ \\
\hline 
\multirow{2}{*}{$1/8$} & \multirow{2}{*}{$3/8$} & 2& 4 & $2^44^4$\\
& & 4& 2 & $2^44^4$\\
\hline
\multirow{2}{*}{$1/10$} & \multirow{2}{*}{$3/10$} & $2$ & $5$ & $1^22^15^{-2}10^3 $\\
   &  & $5$ & $2$ & $1^3 2^{-2} 5^1 10^2 $\\
\hline
\multirow{4}{*}{$1/12$} & \multirow{4}{*}{$5/12$} & $6$ & $2$ & $2^36^3$\\ 
&& $4$ & $3$ & $1^23^{-2}4^{1}6^212^1$\\
&& $3$ & $4$ & $1^12^23^14^{-2}12^2$\\
&& $2$ & $6$ & $2^36^3$\\
\hline
\end{tabular}
\caption{\footnotesize{Symmetries of torus orbifolds whose twining genera are given by  \eqref{gT4orbif}. }}\label{t:Qtwin2}
\end{table}

\subsection{Landau-Ginzburg orbifolds}\label{LGtwin}

It is very non-generic for a $K3$ NLSM to correspond to an exactly solvable CFT. 
In fact, the only such examples we know of are torus orbifolds, described in the previous subsection, Gepner models, i.e. orbifolds of tensor products of $\mathcal N=2$ minimal models \cite{Gepner:1987qi}, and generalizations thereof \cite{Font:1989qc}.
However, for the purpose of computing the (twined) elliptic genus, it is sufficient to have a UV description which flows in the IR to a $K3$ NLSM. This fact was used by Witten to provide evidence for the connection between certain Landau-Ginzburg (LG) models and $\mathcal N=2$ minimal models \cite{Witten:1993jg}. The LG theories are generically massive, super-renormalizable $\mathcal N=2$ quantum field theories; however, in the IR they can flow to an $\mathcal N=(2,2)$ superconformal field theory. For instance, the LG theory of a single chiral superfield with superpotential
\be
W_{A_{k+1}}(\Phi)= {1\over k+2}\Phi^{k+2}
\ee
flows to an IR fixed point corresponding to the $\mathcal N=2$ minimal model of type $A_{k+1}$.

Though these minimal models all have central charge less than 3, LG theories prove to have geometric applications through the orbifold construction. Namely, 
one can construct theories which flow in the IR to a NLSM on a CY $d$-fold by taking superpotentials of multiple chiral multiplets, such that the sum of their charges equals $3d$,  along with an orbifold which projects the Hilbert space onto states with integer $U(1)$ charges. 
This connection between CY geometry and LG orbifolds was further elucidated by Witten \cite{phases} using the framework of gauged linear sigma models.

In \cite{cheng2015landau}, a number of new twinings were found in explicit models: LG orbifolds which flow in the IR to $K3$ CFTs. 
Here we briefly mention cases where symmetries of order 11, 14, and 15 arise. These symmetries preserve precisely a four-plane in the Leech lattice, and thus only occur at isolated, nonsingular points in $K3$ moduli space. The symmetries of order 11 and 15 arise in cubic superpotentials of six chiral superfields of the form,
\begin{align}
  \mathcal{W}^c_1(\Phi)&=\Phi_0^3 + \Phi_1^2 \Phi_5 + \Phi_2^2 \Phi_4 + \Phi_3^2 \Phi_2 + \Phi_4^2 \Phi_1 + \Phi_5^2 \Phi_3 \\
   \mathcal{W}^c_2(\Phi)&= \Phi_0^2 \Phi_1 + \Phi_1^2 \Phi_2 + \Phi_2^2 \Phi_3 + \Phi_3^2 \Phi_0 + \Phi_4^3 + \Phi_5^3,
\end{align}
respectively, while the symmetry of order 14 arises in a model with  quartic superpotential 
\be
\mathcal{W}^q(\Phi)=\Phi_1^3 \Phi_2 + \Phi_2^3 \Phi_3 + \Phi_3^3 \Phi_1 + \Phi_4^4. 
\ee
As discussed in \cite{cheng2015landau}, the symmetry groups of $\mathcal{W}^c_1$, $\mathcal{W}^c_2$, and $\mathcal{W}^q$ are given by 
 $L_2(11)$, $(3 \times A_5):2$ and $L_2(7)\times 2$, each of which contains elements of order 11, 15, and 14, respectively. 
Using their explicit actions on the superfields one can readily compute their LG twining genus. 

The symmetries of order 11, 15, and 14 all have a unique Frame shape ($1^211^2$, $1.3.5.15$ and $1.2.7.14$ respectively) and each occur in two non-Conway Niemeier groups,  corresponding to Niemeier lattices $N_1, N_2$ with root lattices $\{A_1^{24}, A_2^{12}\}$, $\{A_1^{24}, D_4^6\}$ and  $\{A_1^{24}, A_3^8\}$ respectively.
Since these symmetries preserve exactly a four-plane, the Conway module associates two different twinings functions to these Frame shapes. 
In each of these three cases, the two umbral moonshine twinings given corresponding to two Niemeier lattices yield two different results $\phi^{N_1}_{g_1}$ and $\phi^{N_2}_{g_2}$, coinciding with the two twinings $\phi^\Leech_{g,+}$ and $\phi^\Leech_{g,-}$ arising from Conway module.

The twinings of order 11, 15 and 14 computed in the above-mentioned LG models match those associated with root systems $A_2^{12}$, $D_4^6$ and $A_3^8$, respectively. 
This can be viewed as evidence for the connection between (non-$M_{24}$ instances of) umbral moonshine, as well as Conway moonshine, to the symmetries of $K3$ NLSMs 
\footnote{It is intriguing to note that the forms of $ \mathcal{W}^c_1$, $ \mathcal{W}^c_2$ and ${\mathcal W}^q$ are closely related to the superpotentials which flow to the $A_2$, $D_4$ and $A_3$ $\mathcal N=2$ minimal models, where the $A$-type case is given above, and the $D_4$ case is
$
W_{D_4}(\Phi_1, \Phi_2)\sim\Phi_1^3+\Phi_1\Phi_2^2.
$ It would be interesting to understand if this is connected to the fact that the  twinings correspond to cases of umbral moonshine whose root systems contain copies of $A_2$, $D_4$ and $A_3$, respectively.
}.  We refer to \cite{cheng2015landau} for more examples and details.

\subsection{Modularity}\label{s:twin_from_modul}

In this section we discuss how one can use constraints of modularity to precisely specify the twining genera corresponding to certain $O^+(\Gamma^{4,20})$ conjugacy classes in some cases. The twining genera $\eg_g$ are weak Jacobi forms under some congruence subgroup $\modgrp \subseteq \SL_2(\ZZ)$, possibly with a non-trivial multiplier $\psi$. 
At the same time, the Frame shape establishes the $q^0$-terms in their Fourier expansion, given by
\be\label{normalize1} {\cal Z}_g(\tau,z)=2y+2y^{-1}+\Tr_{V_{24}}(g)-4+O(q)\ . 
\ee
Here, $\Tr_{V_{24}}(g)$ denotes  the trace of $g\in O^+(\Gamma^{4,20})$ in the defining $24$-dimensional representation $V_{24}$. 
In some cases, the modular properties together with the above leading term coefficients  
are sufficient to fix the function $\eg_g$ completely. 
More precisely, the above criteria dictate that $\eg_g$ 
can be written as
\be \eg_g(\tau,z)=\frac{\Tr_{V_{24}}(g)}{12}\phi_{0,1}(\tau,z)+F(\tau)\phi_{-2,1}(\tau,z)\ ,
\ee where \begin{align} 
\phi_{0,1}(\tau,z)&=4\sum_{i=2}^4 \frac{\theta_i(\tau,z)^2}{\theta_i(\tau,0)^2}=y+10+y^{-1}+O(q)\\ \phi_{-2,1}(\tau,z)&=\frac{\theta_1(\tau,z)^2}{\eta(\tau)^6}=y-2+y^{-1}+O(q)\ ,\end{align} are the standard weak Jacobi forms of index $1$ and weight $0$ and $-2$, respectively,  and \be\label{normalize2} F(\tau)=2-\frac{\Tr_{\bf{24}}(g)}{12} +O(q)\ ,\ee is a modular form of weight $2$ under $\modgrp$, with a suitable multiplier $\psi$. 
Clearly, $\psi$ can only be non-trivial when $\Tr_{\bf{24}}(g)=0$. 
Let us denote by $M_2( \modgrp;\psi)$ the space of modular forms of weight $2$ for a group $\modgrpnog\in \SL_2(\ZZ)$ with multiplier $\psi$. It is clear from \eqref{normalize1} and \eqref{normalize2} that $\eg_g$ is uniquely determined in terms of $\Tr_{\bf 24}(g)$ whenever $\dim M_2(\modgrp;\psi)\le 1$.

The approach described above is particularly effective in constraining twining genera with non-trivial multiplier $\psi$, since the space $M_2(\modgrp;\psi)$ is often quite small.  We illustrate our arguments with the following example. 
 Consider $g$ with Frame shape $3^8$. The possible multipliers can be determined using the methods described in appendix \ref{a:modularity}. In particular, $\Tr_{V_{ 24}}(g)=0$ and $\modgrp=\Gamma_0(3)$, and hence the order of the multiplier system is either $1$ or $3$. The Witten index of a putative orbifold by $g$ is $8$, which is different from $0$ or $24$. 
We can therefore conclude that the orbifold is inconsistent and hence the multiplier has order $n=3$. (See appendix \ref{a:modularity} for the detailed argument.)
 Thus,  $F(\tau)=2+O(q)$ is modular form of weight $2$ for $\Gamma_0(3)$ with multiplier of order $3$. It turns out that there are two possible multipliers $\psi$ and $\bar \psi$ of order $3$, with the property $\dim M_2(\Gamma_0(3);\psi)=\dim M_2(\Gamma_0(3);\bar\psi)=1$. Hence, in both cases there is a unique weight 2 form $F$, and therefore a unique weak Jacobi form $\eg_g$, with the required normalization   \eqref{normalize2}, giving the umbral twining function corresponding to the root systems ${A_1^{24}}$ and ${A_2^{12}}$.  

Using similar arguments, one can determine the twining genera for the Frame shape $4^6$ for both possible choices of multipliers, and the twining genera for the Frame shapes $6^4$ and $4^28^2$ for one of the two possible multipliers. In all such cases, the resulting twining genera coincide with some umbral functions, i.e. some $\Phi(N)$ (see appendix \ref{a:results}), offering support for our Conjecture \ref{conj:all_arise_from_moonshine}.

\subsection{Evidence for the Conjectures}\label{s:evidence}
In this section we summarize a number of  results which we view as compelling evidence for our conjectures of \S\ref{s:conjectures}. 
Conjecture \ref{conj:all_arise_from_moonshine} states, among other things, that all physical twining genera $\eg_g$ are reproduced by some function arising from umbral or/and Conway module. If true, then combined with the world-sheet parity analysis in \S\ref{s:parity},  the following two statements necessarily  hold. The fact that they do hold then constitutes non-trivial evidence for the conjecture. 

\begin{itemize}
\item As reported in appendix D.2, there are either 81 or 82 distinct $O^+(\Gamma^{4,20})$ classes of symmetries. In particular, for the Frame shape $1^{-4} 2^5 3^4 6^1$, there is either a single $O^+(\Gamma^{4,20})$ class or two classes that are the inverse of each other and hence must have the same twining genus. Therefore, there are potentially $81$ distinct twining genera $\eg_g$. Only $56$ have been computed using the methods described in \S\ref{geomtwin}-\ref{s:twin_from_modul}. In all such cases, one has $\eg_g\in \Phi(N)$
 for at least one Niemeier lattice $N$.
 
\item Whenever there is an umbral or Conway twining genus $\phi_g\in \Phi(N)$ which has a complex multiplier $\psi$, there exists another $\phi'_{g'} \in \Phi(N')$ corresponding to the same Frame shape $\pi_g=\pi_{g'}$ and with the conjugate multiplier $\bar\psi$. 
Furthermore, $\pi_g$ has distinct $O^+(\Gamma^{4,20}) $ conjugacy classes which are related by world-sheet parity. Note that in all cases we have $N\neq N'$. 
Table \ref{tbl:mult} shows the pairs of $N,N'$, denoted in terms of their root systems in the case $N\neq \Leech$, leading to Jacobi forms with complex conjugate multipliers.
\begin{center}
\begin{table}[htb]
\begin{center}
\begin{tabular}{c|c|c}
Frame shape & $\psi$ & $\overline{\psi}$\\\hline
$3^8$& $A_2^{12}, D_4^6, A_8^3, E_8^3, \Lambda$& $A_1^{24}, A_4^6, D_8^3$ \\\hline
$4^6$ & $A_3^8, A_4^6, A_{12}^2, \Lambda$& $A_1^{24}, A_2^{12}, A_6^4, D_6^4$ \\\hline
$6^4$ & $A_2^{12}, D_4^6, A_8^3, \Lambda$& $A_1^{24}, A_4^6$ \\\hline
$4^28^2$ & $A_3^8, E_6^4, \Lambda$& $A_2^{12}$ 
\end{tabular}
\end{center}\caption{\footnotesize{Frame shapes with complex multiplier and corresponding Niemeier lattices. 
}}\label{tbl:mult}
\end{table}
\end{center}
\end{itemize}
Similarly, the following fact is non-trivially compatible with Conjecture \ref{c:manytwin}. 
\begin{itemize}
\item Fix a four-plane preserving Frame shape $\pi_g$.  
Denote by $K$ the number of distinct twining functions $\phi_g^N$ associated with $\pi_g$ arising from either Conway or umbral moonshine, and denote by $K'$ the number of $O^+(\Gamma^{4,20}) $ conjugacy classes associated with $\pi_g$. In all cases, $K' \geq K$, and for a vast majority  ($35$ out of $42$) of the four-plane preserving Frame shapes this inequality is saturated. 
\end{itemize}

Note that the fact that $K$ is small can be attributed to the large number of coincidences among the elements of $\Phi(N)$ and $\Phi(N')$ related to different Niemeier lattices $N, N'$. For example, the Frame shape $2^44^4$ appears in the group $G_N$ for seven distinct Niemeier lattices $N$, but the seven twining genera $\phi_g^N$ are all the same, compatible with the fact that there is a unique $O^+(\Gamma^{4,20})$ class for this Frame shape.
Since for some  ($7$ out of $42$)  Frame shapes the number of $O^+(\Gamma^{4,20})$-classes is strictly larger than the number of distinct $\phi_g^N$, Conjecture \ref{conj:all_arise_from_moonshine} predicts that there must be some coincidences among the physical twining genera corresponding to these different classes.

\section{Discussion}\label{sec:discussion}

In the present paper we have proven classification results on lattices and groups relevant for symmetries of $K3$ string theory and proposed conjectures regarding the relation between the these symmetries and umbral and Conway moonshine. 
These results motivate a number of interesting questions. 
We discuss a few of them here. 

\begin{itemize}[leftmargin=*]
\item{
Apart from classifying the symmetry groups of $K3$ NLSMs as abstract groups, it is also important to know what their actions are on the (BPS) spectrum. In particular, the twining genus can differ for two $K3$ NLSM symmetries with the same embedding into $\Co_0$ \cite{K3symm,Cheng:2014zpa,cheng2015landau}. 
 This motivated us to classify the 
distinct conjugacy classes in $O^+(\Gamma^{4,20})$ and $O(\Gamma^{4,20})$ for a given four-plane preserving Frame shape. 

Given this consideration and given our Conjectures \ref{conj:all_arise_from_moonshine} and \ref{c:manytwin} relating twining genera and moonshine functions, an important natural question is the following: given a particular $K3$ NLSM, 
 how do we understand which case(s) of umbral moonshine govern its symmetries?
} 
\item{
In this paper we  extend the classification of symmetry groups to singular points in the moduli space of $K3$ NLSMs. These singular points correspond to perfectly well-defined string compactifications where the physics in the six-dimensional non-compact spacetime involves enhanced non-abelian gauge symmetries. It will be interesting to study the BPS-counting functions arising in these compactifications.

Moreover, as these points are T-dual to type IIB compactifications on K3 in the presence of an NS5-brane  \cite{Witten:1995zh}, it would be interesting to explore the symmetries of these special points from this spacetime point of view. Furthermore, it may also be interesting to classify the symmetry groups in more general fivebrane spacetimes, such as those studied in \cite{Harvey:2013mda,Harvey:2014cva}  in connection with umbral moonshine. 
}

\item{More generally, one can try to classify the discrete symmetry groups which arise in other supersymmetric string compactifications, in varying dimensions and with differing numbers of supersymmetries. For example, one case of particular interest is the symmetries of theories preserving only eight supercharges. One difficulty in studying such theories is the the global form of the moduli space is often not known, so one does not have the power of lattice embedding theorems used to study theories with sixteen supercharges. However, it may be possible to get partial results in certain examples. The connection between sporadic groups, geometry, and automorphic forms in theories with eight supercharges has only somewhat been studied (see, for e.g., \cite{Cheng:2013kpa,Harrison:2013bya}) and it would be interesting to  explore it further. }

\item{Twining genera of $K3$ NLSMs can be lifted to twining genera of the $N$th symmetric product CFT $Sym^N(K3)$ through a generalization \cite{Cheng:2010pq} of the formula for the symmetric product elliptic genus of \cite{Dijkgraaf:1996xw}. It can happen that a symmetry which is not a geometric symmetry of any $K3$ surface can be a geometric symmetry for a hyper-K\"ahler manifold that is deformation equivalent to the  $N$-th Hilbert scheme of a $K3$ surface for $N\geq 2$. The symmetries of such hyper-K\"ahler manifolds of $K3^{[N]}$ type were classified in \cite{Hoehn:2014ika} for $N=2$ in terms of their embedding into $Co_0$. This includes Frame shapes corresponding to elements of order 3, 6, 9, 11, 12, 14, and 15 which are not geometric symmetries of any $K3$ surface.  Each of these elements has at least two distinct twining functions associated with it via umbral and Conway moonshine as presented in Table \ref{last_table_genera}. We noticed that for the elements of order 11, 14, and 15 that these distinct twining functions lift to the same twined elliptic genus for $Sym^N(K3)$ for $N=2,3,4.$ It would be interesting to understand when this general phenomenon occurs, and more generally the structure of symmetries of string theory on $K3 \times S^1$.}

\item The compactification of type IIA on $K3\times T^2$ gives rise to a four dimensional model with half-maximal supersymmetry ($16$ supercharges). When the internal NLSM has a symmetry $g$, one can construct a new four dimensional model (CHL model) with the same number of supersymmetries  \cite{Chaudhuri:1995fk,Chaudhuri:1995dj,Sen:1995ff,Chaudhuri:1995bf}. The CHL model is defined as the  orbifold of type IIA on K3$\times T^2$ by a fixed-point-free symmetry acting as $g$ on the K3 sigma model and, simultaneously,  by a shift along a circle $S^1$ in the $T^2$. The twining genus $\eg_g$ is directly related to the generating function $1/\Phi_g$ of the degeneracies of 1/4 BPS dyons in this CHL model \cite{Dijkgraaf:1996it,Dijkgraaf:1996xw,Shih:2005uc,Jatkar:2005bh,David:2006ji,David:2006ru,David:2006ud,David:2006yn,Cheng:2008kt}. Up to dualities, the CHL model only depends on the Frame shape of $g$ \cite{Persson:2015jka}. This is apparently puzzling for those Frame shapes that correspond to multiple $O^+(\Gamma^{4,20})^+$-classes and therefore to multiple twining genera $\eg_g$: in these cases, there are different candidates $1/\Phi_g$ for the 1/4 BPS counting function, one for each distinct twining genus $\eg_g$. Since $O^+(\Gamma^{4,20})$ is part of the T-duality group of the four dimensional model, a natural interpretation of this phenomenon is that the different $1/\Phi_g$ functions count 1/4 BPS dyons related to different T-duality orbits of charges in the same CHL model. In view of this interpretation, it would be interesting to understand the precise correspondence between $O^+(\Gamma^{4,20})^+$-classes and T-duality orbits of charges.

\item 
One piece of supporting evidence for our conjectures concerns twining genera with complex multiplier systems. 
However, so far we have not been able to directly obtain these proposed twining genera from $K3$ NLSMs. Nevertheless, we argue that this is unsurprising and does not constitute discouraging counter evidence for our conjectures for the following reason. Recall that the argument in \S \ref{s:parity} indicates that these functions must arise from a symmetry acting differently on left- and right-movers. Then our Conjecture \ref{conj:all_arise_from_moonshine}, together with the observation that such twining functions always arise from multiple instances of umbral and Conway moonshine (see \S\ref{s:evidence}, 2nd bullet point), predicts that these theories correspond to lattices  embeddable into multiple Niemeier lattices. This precludes most of the exactly solvable models that have been studied so far, in particular all torus orbifolds and some Gepner models, since these always contain a quantum symmetry which can only arise from a Leech embedding. So far most of the NLSM analysis has focussed on these exactly solvable models, and this explains why we have not observed these proposed twining genera yet. 

On the other hand, a number of the proposed twining genera with complex multipliers (as well as many with real multipliers)  were found by twining certain LG orbifold theories \cite{cheng2015landau}.
These include functions arising from symmetries of order $3,4,6$ and $8$ and with Frame shapes $3^8, 4^6,6^4$ and $4^28^2$--the four Frame shapes which both preserve a four-plane in $Co_0$ and correspond to twining genera with complex multiplier. 
In order to obtain these twining genera, one has to consider symmetries which act asymmetrically on the left- and right-moving fermions in the chiral multiplets, such that the UV Lagrangian, the right-moving $\mathcal N=2$ algebra, and the four charged Ramond ground states are preserved. In general, however, the left-moving $\mathcal N=2$ algebra is not preserved. Though $H_L$ and $J_L$ must remain invariant for the twining genus to be well defined, $G_-$ and $\overline G_-$ are transformed under these symmetries, such that the symmetry maps the left-moving $\mathcal N=2$ to a different but  isomorphic copy. See \cite{cheng2015landau} for more details. 
It is important to note that, though these symmetries do not preserve the full UV supersymmetry algebra, it does not preclude the possibility that they preserve a copy of the IR $\mathcal N=(4,4)$ SCA. After all, there is only an $\mathcal N=(2,2)$ supersymmetry algebra apparent in the UV, and only after a non-trivial RG flow involving a complicated renormalization of the K\"ahler potential does the symmetry get enhanced to $\mathcal N=(4,4)$ at the conformal point. A clarification of the IR aspects of these UV symmetries would be helpful in unravelling the nature of these left-right asymmetric symmetries. 

\item{
While our Conjecture \ref{c:manytwin} states that all umbral and Conway moonshine functions corresponding to four-plane preserving group elements play a role in the twining genera of $K3$ NLSMs, the physical relevance of the umbral (including Mathieu) moonshine functions corresponding to group elements preserving only a two-plane remains unclear. We highlight a number of approaches to this problem here.

One possible approach to the problem is to find is to find a way to combine symmetries realized at different points in moduli space and in this way generate a larger group which also contains two-plane preserving elements. 
This approach is motivated by the fact that 
the elliptic genus receives only contributions from BPS states and is invariant across the moduli space. 
This possibility was first raised as a question 
``Is it possible that these automorphism groups at isolated points in the moduli
space of $K3$ surface are enhanced to $M_{24}$ over the whole of moduli space when we consider the elliptic genus?''
in \cite{EOT}. Concrete steps towards realising this idea in the context of Kummer surfaces were taken in \cite{Taormina:2011rr,Taormina:2013mda,cheng2015landau}. See also \cite{Gaberdiel:2016iyz} for recent progress in the direction. 


A second approach is to consider string compactifications  
where larger groups are realized at  given points in moduli space as symmetry groups of the full theory (and not just the BPS sector). 
For theories with 16 supercharges, this is only possible for compactifications with less than six non-compact dimensions.  
For example, it was shown that there are points in 
the moduli space of string theory compactifications to three dimensions which admit the Niemeier groups as discrete symmetry groups \cite{Kachru:2016ttg}. 
 In the type IIA frame, these are given by compactifications on $K3\times T^3$. 
The action of these symmetry groups on the 1/2-BPS states of the theory has been analyzed \cite{Kachru:2016ttg}, and  
 it would be interesting to understand the action on the 1/4-BPS states.

A third approach stems from the vertex operator algebra (VOA) perspective. In \cite{M5}, a close variant of the Conway module is shown to exhibit an action of a variety of two-plane preserving subgroups of $\Co_0$, including $M_{23}$, and yields as twining genera a set of weak Jacobi forms of weight zero and index two.\footnote{Coincidentally, in \cite{Benjamin:2014kna}, and as further discussed in \cite{Cheng:2015fha}, it was shown that this module also admits an $M_{24}$ action, although the twining genera are no longer weak Jacobi forms the representations are less closely related to those of Mathieu moonshine.} In addition, the mock modular forms which display $M_{23}$ representations appear to be very closely related to the mock modular forms which play a role in $M_{24}$ moonshine. However, the physical relevance of this module is still unclear. A better understanding of the connection between the Conway module and $K3$ NLSMs could help explain Mathieu and umbral moonshine. 

Finally, yet another  approach is to consider compactifications preserving less supersymmetry \cite{Cheng:2013kpa,Harrison:2013bya}. It is not unlikely that the ultimate explanation of umbral moonshine will require a combination of the above approaches. }

\end{itemize}

\section*{Acknowledgements}

We thank John Duncan, Francesca Ferrari, Matthias Gaberdiel, Gerald H\"ohn, Shamit Kachru, and Natalie Paquette for discussions about related subjects. 
The work of M.C. is supported by ERC starting grant H2020 ERC StG 2014.  S.M.H. is supported by a Harvard University Golub Fellowship in the Physical Sciences and DOE grant DE-SC0007870. RV is supported by a grant from `Programma per giovani ricercatori Rita Levi Montalcini'. M.Z. is supported by the Mellam Family Fellowship at the Stanford Institute for Theoretical Physics.

\appendix
\section{Some Basic Facts about Lattices }\label{app:lats}
In this appendix we collect some useful facts about lattices (see \cite{Nikulin,ConwaySloane} for more details). Let $L$ be an even lattice with non-degenerate quadratic form $Q:L\to 2\ZZ$ of signature $(n,m)$ and rank $d=n+m$. 
Let $L^*$ be the dual, so that naturally $L\subseteq L^*$, $L^*\subset L\otimes_\ZZ \QQ$. The quadratic  form $Q$ extends to $Q:L^*\to \QQ$ by linearity. The discriminant group of $L$
\be A_L:=L^*/L\ ,
\ee is a finite abelian group, with a discriminant form
\be q_L:A_L\to \QQ/2\ZZ
\ee induced by $Q:L^*\to \QQ$. The discriminant group has order $\det B_{ij}$ where $B_{ij}$ is the Gram matrix for the quadratic form $Q$ with respect to some basis $v_1,\ldots,v_d$ of $L$. 

We denote by $O(L)$ the group of automorphisms of $L$ (which is the same as the group $O(L^*)$ of automorphisms of the dual $L^*$). Similarly, we define $O(q_L)$ as the group of automorphisms of the discriminant group $A_L$ that preserves the quadratic form $q_L$
\be O(q_L):=\{g\in \Aut(A_L)\mid q_L(g(x))=q_L(x)\text{ for all }x\in A_L\}\ .
\ee There is an obvious map
\be O(L)\to O(q_L)\ ,
\ee which, in general, is neither injective nor surjective.

The information including the signature $(n,m)$ of the lattice $L$ and the isomorphism class of its discriminant form $q_L$ amounts to giving the genus of $L$. More formally: two lattices $L,L'$ are in the same genus if and only if they have the same signature and the discriminant forms are isomorphic, i.e. there is an isomorphism $\gamma:A_L\stackrel{\cong}{\to} A_{L'}$ such that $q_{L'}\circ \gamma=q_L$ \cite{Nikulin}.

Let $L$ be a unimodular lattice and $M$ a primitive sublattice. Recall that $M\subset L$ is said to be a primitive sublattice if the following three equivalent statements are true 1. $L/M$ has no torsion; 2. $(M\otimes \QQ)\cap L=M$; 3. if $v\in L$ and $nv\in M$ for some $n\in\ZZ$, then $v\in M$.    
 Let $N$ be the orthogonal complement of $M$ (then, $N$ is automatically primitive), so that
\be N\oplus M\subseteq L\subseteq N^*\oplus M^*\ .
\ee
 Then, there is an isomorphism $\gamma:A_M\to A_N$ of discriminant groups, such that
\be\label{opposite} q_N\circ \gamma = -q_M\ ,
\ee and satisfying
\be\label{gluing} L=\{(v,w)\in M^*\oplus N^*\mid \gamma(\bar v)=\bar w \}\ ,
\ee where $\bar v\in A_M$ (resp., $\bar w\in A_N$) is the class with representative $v\in M^*$ (resp., $w\in N^*$). Vice versa, given two even lattices $N,M$ with an isomorphism $\gamma:A_M\to A_N$ satisfying \eqref{opposite}, then the lattice $L$ defined by \eqref{gluing} 
is an even unimodular lattice, such that $N,M$ are two mutually orthogonal primitive sublattices of $L$.

\section{Proofs of results in \S\ref{sec:class}} \label{app:pfs}

\subsection{Proof of Theorem \ref{t:dirclass}}\label{a:proofdirclass}

Recall that, given an even lattice $T$, the discriminant group is the finite abelian group $A_T=T^*/T$. The quadratic form on $T$ induces a quadratic form $q_T:A_T\to \QQ/2\ZZ$ on the discriminant group, called the discriminant (quadratic) form.

Theorem 1.12.4 of \cite{Nikulin} gives sufficient conditions for the existence of a primitive embedding of an even lattice $T$ of signature $(t^+,t^-)$ into some even unimodular lattice $L$ of signature $(l^+,l^-)$, with $l^+-l^-\equiv 0\mod 8$:
\be\label{suffglue} \begin{cases}l^+\ge t^+,\ l^-\ge t^-\\ t^++t^-\le \frac{1}{2}(l^++l^-)\end{cases}\qquad \Rightarrow\qquad \text{exists primitive } T\hookrightarrow L\ .\ee Alternatively, a necessary and sufficient condition for such an embedding is the existence of a lattice $K$ of signature $(l^+-t^+,l^--t^-)$ such that 
\be\label{gluingcond} q_K\cong -q_T
\ee
where $q_K$ and $q_T$ are the discriminant forms of $K$ and $T$ (see \cite{Nikulin}, Theorem 1.12.2). More precisely, when \eqref{gluingcond} is satisfied, one can construct an even unimodular lattice $L$ such that
\be K\oplus T\subseteq L\subseteq K^*\oplus T^*\ ,
\ee and such that the embeddings $T\hookrightarrow L$ and  $K\hookrightarrow L$ are primitive.
Conversely, if $T$ is a primitive sublattice of an even unimodular lattice and $K$ its orthogonal complement, then \eqref{gluingcond} is satisfied.

Since $\Gamma^G$ has signature $(4,d)$, by \eqref{suffglue} it can be primitively embedded into an even unimodular lattice $\Gamma^{8+d,d}$ of signature $(8+d,d)$. Let $S$ be its orthogonal complement in $\Gamma^{8+d,d}$ and $S(-1)$ the lattice obtained by flipping the sign of the quadratic form of $S$. Then, $S(-1)$ has signature $(0,4+d)$ and, using \eqref{gluingcond} repeatedly, we obtain
\be q_{S(-1)}=-q_S\cong q_{\Gamma^G}\cong -q_{\Gamma_G}\ .
\ee Thus, there exists an even unimodular lattice $N$ of signature $(0,24)$ such that
\be \Gamma_G\oplus S(-1)\subseteq N\subseteq \Gamma_G^*\oplus S^*(-1)
\ee and such that the embedding $\Gamma_G\hookrightarrow N$ is primitive. For the proof of the other claims, see appendix B in \cite{K3symm}.

\subsection{Proof of Theorem \ref{t:invclass}}\label{a:proofinvclass}

The proof is completely analogous to the one of theorem \ref{t:dirclass}. Since $N^G$ has signature $(0,4+d)$, by \eqref{suffglue} it can be primitively embedded into an even unimodular lattice $\Gamma^{d,8+d}$ of signature $(d,8+d)$. Let $S$ be its orthogonal complement in $\Gamma^{d,8+d}$ and $S(-1)$ the lattice obtained by flipping the sign of the quadratic form of $S$. Then, $S(-1)$ has signature $(4,d)$ and, using \eqref{gluingcond} repeatedly, we obtain
\be q_{S(-1)}=-q_S\cong q_{N^G}\cong -q_{N_G}\ .
\ee Thus, since there is a unique (up to isomorphism) even unimodular lattice of signature $(4,20)$, we have
\be N_G\oplus S\subseteq \Gamma^{4,20}\subseteq N_G^*\oplus S^*
\ee and the embedding $N_G\hookrightarrow \Gamma^{4,20}$ is primitive. For the proof of the other claims, see appendix B in \cite{K3symm}.

\subsection{Proof of Proposition \ref{t:UmbralSymm}}\label{a:proofUmbralSymm}

Recall that a Weyl chamber $\W\subset N\otimes \RR$ is the closure of any of the connected components in
\be N\otimes \RR \setminus \Bigl(\bigcup_{r\in N,\ r^2=-2} r^\perp\Bigr) \ ,
\ee the complement of the hyperplanes orthogonal to the roots. The Weyl group $W_N$ acts by permutations on the set of Weyl chambers; in particular, for any non-trivial $w\in W_N$ and Weyl chamber $\W$, the interior $\W^o$ and its image $w(\W^o)$ have no intersection
\be W^o\cap w(W^o)=\emptyset\ .
\ee Suppose that $\hat G\cap W_N$ contains some non-trivial element $w$. Since $w$ fixes $N^{\hat G}$ pointwise, it follows that the sublattice $N^{\hat G}$ cannot contain any vector in the interior of a Weyl chamber. Therefore,
\be N^{\hat G}\subset \bigcup_{r\in N,\ r^2=-2} r^\perp\ .
\ee Since $N^{\hat G}\otimes \RR$ is convex, it must be actually contained in some hyperplane $r^\perp$, for some root $r\in N$, which implies that $r\in N_{\hat G}\cong \Gamma_G$. Vice versa, if $N_{\hat G}\cong \Gamma_G$ contains a root $r$, then the corresponding reflection $w_r\in W_N$ fixes $N^{\hat G}$ and therefore is in $G$. For the second point, notice that the natural projection $O(N)\to O(N)/W_N$ defines a homomorphism from $\hat G$ into $G_N=O(N)/W_N$. As we just proved, if $\Gamma_G$ has no roots, then $\hat G$ has trivial intersection with $W_N$, so that this homomorphism is injective. We hence conclude that $\hat G$ is isomorphic to a subgroup of $G_N$.

\section{Modular groups and multipliers}\label{a:modularity}

In this appendix we present the arguments we employed in \S\ref{sec:conj} to determine the modular properties, in particular the multipliers, of the twining genera. 
The twining genera $\eg_g$ are weak Jacobi forms of weight $0$ and index $1$ for some subgroup $\modgrp\subseteq SL_2(\ZZ)$, possibly with a multiplier (group homomorphism)\ $\psi:\modgrp\to \CC^*$.

In this appendix, we describe in some more detail the groups $\modgrp$ and the multipliers $\psi$. Consider an order-$N$ symmetry $g$. 
Recall that the modular transformation $\left(\begin{smallmatrix}
a & b\\ c & d
\end{smallmatrix}\right)\in \SL_2(\ZZ)$ transforms the twining genus $\eg_g$ into a `twisted-twining' genus $\eg_{g^c,g^d}$, the trace of $g^d$ over the $g^c$- twisted sector. As a result, 
the subgroup $\modgrp \subseteq \SL_2(\ZZ)$ corresponding to transformations fixing $\eg_g$ (possibly up to a multiplier) is  given by 
\be\label{modgrp} \modgrp:=\{ \left(\begin{smallmatrix}
a & b\\ c & d
\end{smallmatrix}\right)\in \Gamma_0(N)\mid \exists\, h\in O^+(\Gamma^{4,20}) \text{ s.t. } g^d =hgh^{-1}\text{ or } g^d =hg^{-1}h^{-1}\}\ ,
\ee 
and is in particular always contained in 
$$
 \Gamma_0(N) = \Big\{ \left(\begin{smallmatrix}
a & b\\ c & d
\end{smallmatrix}\right)\in \SL_2(\ZZ) \mid c=0\mod{N} \Big\}.
$$
The group $\modgrp$ for each four-plane preserving Frame shape is given in appendix \ref{a:results}.

The order $n$ of the multiplier $\psi$ of $\eg_g$ can be determined from the Frame shape of $g$ under some physically motivated assumptions about its form, which we will now discuss. In general, for (holomorphic) orbifold CFTs, the multiplier is believed to be completely specified in terms of an element of the third cohomology group $H^3(\ZZ_N,U(1))\cong \ZZ_N$, which determines the modular tensor category of modules over the $g$-invariant subalgebra \cite{Dijkgraaf:1989pz,Dijkgraaf:1989hb,Roche:1990hs}.  We will assume that this is true for $K3$ NLSMs  and their orbifolds. Furthermore, we will assume that the
 triviality of the multiplier $\psi$ is the only condition for the $g$-orbifold to be a consistent CFT. 
 See also recent results on generalized umbral moonshine  \cite{Gaberdiel:2012gf,Cheng:2016nto} 
  for more detail about the relations between multipliers and third cohomology.
 Under these assumptions,  the order $n$ of the multiplier of $\eg_g$ is always a divisor of $N$. 
 Second, for any divisor $d|n$, the element $g^d$ has multiplier of order $n/d$. In particular, $g^n$ is the smallest power of $g$ with trivial multiplier. As a result, if the $g^K$-orbifold of the NLSM is again a consistent CFT, necessarily an $\N=(4,4)$ superconformal field theory at central charge $6$, our assumption then dictates that $(K,N)=n$. 

The above considerations, together with an analysis of the Witten index of the orbifold theory which we  now explain, lead to a derivation of  lower and upper bounds on $n$ for a given $g$.
There is a general formula for the Witten index of an orbifold by a cyclic group $\langle g\rangle$. In terms of the Frame shape $\pi_g=\prod_{\ell|N}\ell^{k_\ell}$ of $g$, this is given by
\be\label{WittenOrbifold}  \sum_{\ell|N} {N\over\ell}\, k_{\ell}\ ,
\ee
 (see for example \cite{Gaberdiel:2012um} or \cite{Persson:2015jka} for a proof). On the other hand, the only possible Witten indices of an $\N=(4,4)$ SCFT with central charge $6$ and integer $U(1)$ charges are $0$ or $24$ \cite{Nahm:1999ps}. Therefore, if for some $g$ the putative Witten index \eqref{WittenOrbifold} of the $g$ orbifold is different from $0$ or $24$, then the orbifold is necessarily inconsistent, and the corresponding multiplier must be non-trivial ($n>1$). An upper bound on $n$ can be obtained by noticing that, by \eqref{normalize2}, the multiplier must be trivial ($n=1$) whenever $\Tr_{V_{24}}(g)\neq 0$. By studying which powers of $g$ have a potentially non-trivial multiplier (i.e. $\Tr_{V_{24}}(g^n)=0$) and which powers give a potentially consistent orbifold (i.e. \eqref{WittenOrbifold} is either $0$ or $24$), one derives lower and upper bounds on the order of the multiplier of $g$,  which are sufficient to determine $n$ in all cases. For almost all the four-plane preserving Frame shapes $\prod_{\ell|N}\ell^{k_\ell}$, the order $n$ is the value of the smallest $\ell$ for which $k_\ell\neq 0$. The only exception is $2^{-4}4^8$, which has trivial multiplier, as follows from the fact that it is the square of an element of Frame shape $1^42^{-2}4^{-2}8^4$, which has non-zero trace and therefore $n=1$.

A case by case analysis for all the four-plane preserving Frame shapes shows that the possible orders of a non-trivial multiplier are $n\in\{2,3,4,6\}$. For each such $n$, the possible forms of the corresponding multiplier $\psi$ are in one-to-one correspondence with the classes of order $n$ in the third cohomology group $H^3(\ZZ_N,U(1))\cong \ZZ_N$. In particular, for $n=2$, there is only one possible multiplier $\psi$, while for each $n=3$, $4$ or $6$  there are two possible multipliers $\psi$ and $\bar \psi$, that are complex conjugate to each other. By the arguments in \S\ref{s:parity}, when $n>2$ both $\psi$ and $\bar \psi$ must appear as the multipliers associated with some physical twining genera.

\section{Some Classification Results 
}\label{a:classifclasses}

\subsection{Classification of  $O^+(\Gamma^{4,20})$ Conjugacy Classes} 
Consider a four-plane-preserving element of $O^+(\Gamma^{4,20})$. Its eigenvalues in the 24-dimensional defining representation can be encoded in a Frame shape. See \S\ref{sec:conj}. 
Given such a Frame shape
, in this appendix we compute the number of compatible $O^+(\Gamma^{4,20})$-conjugacy classes. 
More specifically, we  first discuss a theorem (Theorem \ref{th:numclasses}) which we will employ to determine the number of $O^+(\Gamma^{4,20})$  conjugacy classes with a given Frame shape, as recorded in Table \ref{last_table_genera}.

Let $\Lambda$ be the (negative definite) Leech lattice, $\hat g\in O(\Lambda)\cong \Co_0$ be an automorphism with Frame shape $\pi_g$ and fixing a sublattice $\Lambda^{\hat g}$ of rank $ 4+d$, $d\ge 0$, and let $\Lambda_{\hat g}$ (the co-invariant lattice) be the orthogonal complement of $\Lambda^{\hat g}$ in $\Lambda$.
We denote by $O(\Lambda_{\hat g})$ the group of automorphisms of $\Lambda_{\hat g}$ and by
 \be\label{centrg} C_{O(\Lambda_{\hat g})}(\hat g):=\{\hat h\in O(\Lambda_{\hat g})|\hat g\hat h=\hat h\hat g\}
 \ee the centralizer of $\hat g$ in this group.

Given a primitive embedding
\be i:\Lambda_{\hat g}\hookrightarrow \Gamma^{4,20}\ , 
\ee
denote the image  of $i$ and its orthogonal complement by
\be \Gamma_g:=i(\Lambda_{\hat g})\qquad \Gamma^g:=\Gamma^{4,20}\cap (\Gamma_g)^\perp\ .
\ee
The embedding induces an automorphism $ g\in O^+(\Gamma^{4,20})$ which acts as $\hat g$ on the image $i(\Lambda_{\hat g})$ and trivially on its orthogonal complement $i(\Lambda_{\hat g})^\perp\cap \Gamma^{4,20}$. Namely, we have
\be\label{hatgdef}  g\circ i= i\circ \hat g\qquad  g\rvert_{\Gamma^g}={\rm id}\rvert_{ \Gamma^g}\ .
\ee
Note that given $\hat g$ and $i$, the above fixes $ g$ completely.  
Moreover,  as the notation suggests, $\Gamma^g$ is the sublattice of vectors in $\Gamma^{4,20}$ fixed by $ g$. 

The lattice $\Gamma^g$ has signature $(4,d)$ and, by \eqref{opposite}, its discriminant form $q_{\Gamma^g}$ must be the opposite of $q_{\Lambda_{\hat g}}$:
 \be\label{genusGammag} sign(\Gamma^g)=(4,d),\qquad\qquad q_{\Gamma^g}\cong -q_{\Lambda_{\hat g}}\ .
 \ee As a consequence, the genus of $\Gamma^g$ is determined uniquely in terms of $\Lambda_{\hat g}$, independently of the embedding $i$.    We denote by ${cl}(\hat g)$ the set of isomorphism classes of lattices in this genus.

Conversely, every $ g\in O^+(\Gamma^{4,20})$ fixing a sublattice of signature $(4,d)$ with $d\ge 0$ can be obtained in this way: namely, (\ref{hatgdef}) for some $\hat g\in \Co_0$ and some primitive embedding $i$. As discussed in \S\ref{s:distinctgenera}, two such automorphisms $g_1$ and $g_2$ have the same Frame shape 
if and only if they can be induced by the same $\hat g\in \Co_0$, possibly with different embeddings $i_1$, and $i_2$. In particular, if $g_1$ and $g_2$ are conjugated within $O^+(\Gamma^{4,20})$, then they are necessarily induced by the same $\hat  g$. The converse statement is however not true in general: it can happen that two $g_1$ and $g_2$ are induced from the same $\hat g$ (i.e. they have the same Frame shape), but they are not conjugated in $O^+(\Gamma^{4,20})$. The following theorem will enable us to determine the number of $O^+(\Gamma^{4,20})$  conjugacy classes arising from a given Frame shape, for all four-plane fixing elements of $\Co_0$ with the exception of the Frame shape $1^{-4}2^53^46^1$ which we will discuss at the end of this appendix.

 \begin{theorem}\label{th:numclasses}
 Let $cl^+(\hat  g)$ be the set of equivalence classes of lattices with a sign structure in the the genus determined by \eqref{genusGammag}. Then the number of $O^+(\Gamma^{4,20})$ conjugacy classes with Frame shape $\pi_g$ is given by
 \be \sum_{K\in cl^+(\hat  g)} \left| \overline{C}_{O(\Lambda_{\hat g})}(\hat g)\backslash O(q_{\Lambda_{\hat g}})/{\gamma_K}^*(\overline{O^+}(K))\right|\ ,
 \ee where the sum is over a set of representatives for the isomorphism classes in $cl^+(\hat g)$,  $\overline{O^+}(K)$ (respectively, $\overline{C}_{O(\Lambda_{\hat g})}(g)$) is the image of the natural map $O^+(K)\to O(q_K)$ (resp. $C_{O(\Lambda_{\hat g})}(\hat g)\to O(q_{\Lambda_{\hat g}}))$. Furthermore, for each $K$,  $\gamma_K:q_K\xrightarrow[]{\cong} -q_{\Lambda_{\hat g}}$ is an isomorphism and $\gamma_K^*(O(q_K)):=\gamma_KO(q_K)\gamma_K^{-1}=O(q_{\Lambda_{\hat g}})$   the induced identification of the orthogonal groups. The number of classes does not depend on the choice of these isomorphisms.
 \end{theorem}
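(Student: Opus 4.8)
The plan is to convert the enumeration of $O^+(\Gamma^{4,20})$-conjugacy classes with a fixed Frame shape $\pi_g$ into an enumeration of primitive embeddings of $\Lambda_{\hat g}$, and then to evaluate the latter with Nikulin's gluing description of primitive embeddings into a unimodular lattice (cf. appendix \ref{app:lats} and \cite{Nikulin}). First I would fix once and for all a representative $\hat g\in\Co_0=O(\Lambda)$ with Frame shape $\pi_g$; by the discussion preceding the theorem every $g$ with this Frame shape arises from this $\hat g$ and some primitive embedding $i:\Lambda_{\hat g}\hookrightarrow\Gamma^{4,20}$ via \eqref{hatgdef}. Before counting I would check that $g$ is well-defined for \emph{every} such $i$: since $\hat g$ is the restriction of a Leech automorphism fixing $\Lambda^{\hat g}$ pointwise and $\Lambda$ is unimodular, the induced action $\bar{\hat g}$ on $A_{\Lambda_{\hat g}}\cong A_{\Lambda^{\hat g}}$ is trivial, so the extension criterion for $\hat g\oplus\mathrm{id}_{\Gamma^g}$ across the gluing is satisfied regardless of the embedding.

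Next I would pin down the equivalence relation on embeddings that corresponds to $O^+(\Gamma^{4,20})$-conjugacy of the induced elements. A short computation shows that $i_1,i_2$ induce conjugate $g_1,g_2$ if and only if $i_2=h\circ i_1\circ\phi$ for some $h\in O^+(\Gamma^{4,20})$ (post-composition) and some $\phi\in C_{O(\Lambda_{\hat g})}(\hat g)$ (pre-composition): from $hg_1h^{-1}=g_2$ one sets $\phi=i_1^{-1}h^{-1}i_2$ and verifies $\phi\hat g=\hat g\phi$, while conversely pre-composition by a centralizer element leaves $g$ unchanged and post-composition by $h$ conjugates $g$ by $h$. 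Thus the classes are in bijection with primitive embeddings modulo the pair $(h,\phi)$.

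I would then apply the standard double-coset machinery (Nikulin, Prop.~1.15.1; appendix~B of \cite{K3symm}). By \eqref{opposite}--\eqref{gluing} a primitive embedding of $M:=\Lambda_{\hat g}$ into the unimodular $\Gamma^{4,20}$ is the datum of a complement $K=\Gamma^g$ together with a gluing anti-isometry; by \eqref{genusGammag} the complement lies in the fixed genus, and since $\Gamma^{4,20}$ is the unique even unimodular lattice of its signature every gluing reproduces $\Gamma^{4,20}$. Fixing a representative $K$ and a reference identification $\gamma_K:q_K\xrightarrow{\cong}-q_M$, the gluings with complement type $K$ form a torsor under $O(q_M)$, with post-composition by $O(\Gamma^{4,20})$ acting through the complement-side monodromy and pre-composition acting through the $M$-side. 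I then impose the two refinements forced by the previous paragraph: the pre-composition freedom is cut from $O(M)$ down to $C_{O(M)}(\hat g)$, yielding the left subgroup $\overline{C}_{O(\Lambda_{\hat g})}(\hat g)$, and the restriction from $O(\Gamma^{4,20})$ to the index-two subgroup $O^+(\Gamma^{4,20})$ introduces the sign structure. Because $\Lambda_{\hat g}$ is negative definite, the whole positive four-space of $\Gamma^{4,20}$ (and hence the orientation datum distinguishing $O^+$ from $O$) is carried by $K$, which is of signature $(4,d)$; I would show that this forces the sum to run over the sign-refined set $cl^+(\hat g)$ rather than $cl(\hat g)$, and that the complement-side monodromy is precisely the image $\overline{O^+}(K)$ of the orientation-preserving isometries, transported to $O(q_M)$ by $\gamma_K$. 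Assembling these gives exactly $\sum_{K\in cl^+(\hat g)}\bigl|\,\overline{C}_{O(\Lambda_{\hat g})}(\hat g)\backslash O(q_{\Lambda_{\hat g}})/\gamma_K^*(\overline{O^+}(K))\,\bigr|$, and independence of the choice of $\gamma_K$ is immediate since replacing $\gamma_K$ by $\gamma_K\circ v$ with $v\in O(q_K)$ merely conjugates the right-hand subgroup inside $O(q_M)$.

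The hard part will be getting the orientation bookkeeping exactly right: that the sign structure localizes on the complement $K$ (so one sums over $cl^+$, with opposite orientations genuinely inequivalent under $O^+$), and that the complement-side monodromy is precisely $\overline{O^+}(K)$, neither larger nor smaller. A secondary point demanding care is that every action on gluings is through the \emph{images} in $O(q_M)$ and $O(q_K)$, since the natural maps $O(K)\to O(q_K)$ and $C_{O(\Lambda_{\hat g})}(\hat g)\to O(q_{\Lambda_{\hat g}})$ are in general neither injective nor surjective; this is why the formula is phrased via these images, and why the unimodularity of $\Gamma^{4,20}$, which makes every discriminant-compatible pair of isometries extend, is indispensable throughout.
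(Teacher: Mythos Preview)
Your proposal is correct and follows essentially the same approach as the paper: reduce $O^+(\Gamma^{4,20})$-conjugacy classes to primitive embeddings of $\Lambda_{\hat g}$ modulo $O^+(\Gamma^{4,20})\times C_{O(\Lambda_{\hat g})}(\hat g)$ (the paper isolates this as a separate lemma, with $h\circ i_1=i_2\circ s$ in place of your equivalent $i_2=h\circ i_1\circ\phi$), and then enumerate the embeddings via the Nikulin gluing picture to obtain the double-coset formula. The paper's proof is slightly more explicit in writing out the two mutually inverse maps between embedding classes and pairs $(K,[t])$, whereas you invoke Nikulin's Prop.~1.15.1 directly, but the substance---including the observation that the positive four-space lives entirely in $K$ so the sign structure localizes there---is the same.
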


To understand the theorem, recall that a (positive) sign structure for a lattice $L$ is a choice of orientation of a maximal positive definite subspace in $L\otimes_\ZZ \RR$.\footnote{Similarly, a negative sign structure is given by a choice of orientation of a maximal negative definite subspace in $L\otimes_\ZZ \RR$. In the following we will only consider positive sign structures and simply refer to them as sign structures.}  We denote by $cl^+(\hat  g)$ the set of classes of lattices with sign structure in the genus \eqref{genusGammag}. In other words, two lattices $L_1$, $L_2$ are equivalent if there is an isomorphism $L_1\to L_2$ that preserves the orientation of maximal positive definite subspaces.

The first step in proving this theorem is to determine when two different embeddings $i_1$ and $i_2$ give rise to $g_1$ and $g_2$ that are conjugated in $O^+(\Gamma^{4,20})$.

\begin{lemma}
Let $i_1$ and $i_2$ be primitive embeddings of $\Lambda_{\hat g}$ into $\Gamma^{4,20}$ and $g_1,g_2\in O^+(\Gamma^{4,20})$ be the automorphisms determined by \eqref{hatgdef}. Then $g_1$ and $g_2$ are conjugated in $O^+(\Gamma^{4,20})$ if and only if there exist $s\in C_{O(\Lambda_{\hat g})}(g)$ and $h\in O^+(\Gamma^{4,20})$, such that
\be h\circ i_1= i_2 \circ s\ ,
\ee and in this case $g_2=hg_1 h^{-1}$.
\end{lemma}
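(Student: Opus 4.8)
The plan is to prove both implications by translating the relation $g_2 = h g_1 h^{-1}$ into a statement about how $h$ intertwines the two embeddings. The crucial observation, used in both directions, is that conjugation by an element $h\in O^+(\Gamma^{4,20})$ carries fixed sublattices to fixed sublattices: if $g_2 = h g_1 h^{-1}$ and $g_1 w = w$, then $g_2(hw) = h g_1 h^{-1}(hw) = h g_1 w = hw$, so $h(\Gamma^{g_1})\subseteq \Gamma^{g_2}$, and the reverse inclusion follows by applying the same argument to $h^{-1}$. Hence $h(\Gamma^{g_1}) = \Gamma^{g_2}$, and since $h$ preserves the bilinear form this forces $h(\Gamma_{g_1}) = \Gamma_{g_2}$ as well, where $\Gamma_{g_j} = i_j(\Lambda_{\hat g})$ and $\Gamma^{g_j} = (\Gamma_{g_j})^\perp\cap\Gamma^{4,20}$.

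For the ``only if'' direction I would proceed as follows. Given $h$ with $g_2 = h g_1 h^{-1}$, the observation above shows that $h$ restricts to an isometry $\Gamma_{g_1}\to\Gamma_{g_2}$. Since $i_1,i_2$ are primitive embeddings, hence isometries onto their images $\Gamma_{g_1},\Gamma_{g_2}$, the composite $s:=i_2^{-1}\circ h\circ i_1$ is a well-defined isometry of $\Lambda_{\hat g}$, i.e.\ $s\in O(\Lambda_{\hat g})$, and it satisfies $h\circ i_1 = i_2\circ s$ by construction. It then remains to verify $s\hat g=\hat g s$. For this I would compute the action of $g_2$ and of $h g_1 h^{-1}$ on an arbitrary $x=i_2(v)\in\Gamma_{g_2}$: using $g_2\circ i_2 = i_2\circ\hat g$ from \eqref{hatgdef} one finds $g_2(x)=i_2(\hat g v)$, while writing $h^{-1}(x)=i_1(s^{-1}v)$ and using $g_1\circ i_1=i_1\circ\hat g$ together with $h\circ i_1 = i_2\circ s$ gives $(h g_1 h^{-1})(x)=i_2(s\hat g s^{-1}v)$. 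Equating these and invoking injectivity of $i_2$ yields $s\hat g s^{-1}=\hat g$, so $s\in C_{O(\Lambda_{\hat g})}(\hat g)$, as required.

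For the ``if'' direction I would run the same computation in reverse. Given $s\in C_{O(\Lambda_{\hat g})}(\hat g)$ and $h\in O^+(\Gamma^{4,20})$ with $h\circ i_1=i_2\circ s$, surjectivity of $s$ gives $h(\Gamma_{g_1})=i_2(s(\Lambda_{\hat g}))=\Gamma_{g_2}$ and hence $h(\Gamma^{g_1})=\Gamma^{g_2}$. On $\Gamma^{g_2}=h(\Gamma^{g_1})$ both $g_2$ and $h g_1 h^{-1}$ act as ${\rm id}$, while on $\Gamma_{g_2}$ the computation above, now using $s\hat g=\hat g s$, shows $(h g_1 h^{-1})(i_2 v)=i_2(\hat g v)=g_2(i_2 v)$, so the two automorphisms agree. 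The concluding step is to deduce equality on all of $\Gamma^{4,20}$: since $\Gamma^{g_2}$ has signature $(4,d)$ and $\Gamma_{g_2}$ has rank $20-d$, the orthogonal sublattice $\Gamma_{g_2}\oplus\Gamma^{g_2}$ has full rank $24$ and therefore finite index in $\Gamma^{4,20}$, spanning $\Gamma^{4,20}\otimes_\ZZ\QQ$; two $\ZZ$-linear maps agreeing on it agree everywhere. Thus $g_2=h g_1 h^{-1}$.

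The computations are entirely routine once the bookkeeping of the maps $i_1,i_2,h,\hat g,s$ is arranged correctly; I would flag the final step—rather than any deep difficulty—as the main point demanding care, namely that agreement of the two automorphisms on the finite-index sublattice $\Gamma_{g_2}\oplus\Gamma^{g_2}$ genuinely propagates to all of $\Gamma^{4,20}$, which uses both integrality and the full-rank property. One should also keep in mind throughout that $h$ is assumed to lie in $O^+(\Gamma^{4,20})$ rather than merely $O(\Gamma^{4,20})$; this causes no trouble, since it is part of the hypothesis and $g_1\in O^+(\Gamma^{4,20})$ guarantees $h g_1 h^{-1}\in O^+(\Gamma^{4,20})$, consistent with $g_2\in O^+(\Gamma^{4,20})$.
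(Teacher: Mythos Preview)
Your proof is correct and follows essentially the same approach as the paper: both directions hinge on showing that $h$ carries $i_1(\Lambda_{\hat g})$ and its orthogonal complement onto $i_2(\Lambda_{\hat g})$ and its complement, then defining $s:=i_2^{-1}\circ h\circ i_1$ and verifying the commutation with $\hat g$ via the defining relation \eqref{hatgdef}. Your explicit mention of the full-rank/finite-index argument for propagating equality from $\Gamma_{g_2}\oplus\Gamma^{g_2}$ to all of $\Gamma^{4,20}$ is a detail the paper leaves implicit, but otherwise the arguments match.
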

\begin{proof}

Suppose there are $h$ and $s$ such that $h\circ i_1= i_2 \circ s$. Then $h$ induces an isomorphism of the sublattices $i_1(\Lambda_{\hat g})$ and $i_2(\Lambda_{\hat g})$ and the orthogonal complements $h(i_1(\Lambda_{\hat g})^\perp)=i_2(\Lambda_{\hat g})^\perp$. This implies that
\be (hg_1 h^{-1})\rvert_{ i_2(\Lambda_{\hat g})^\perp}={\rm id}\rvert_{ i_2(\Lambda_{\hat g})^\perp}\ .
\ee  Furthermore, the condition $g_1 i_1= i_1 \hat  g$ and the analogue for $g_2, i_2$ (we drop $\circ$ from now on) implies
\be hg_1 i_1=h i_1 \hat g=i_2 s \hat g=i_2 \hat gs=g_2 i_2 s=g_2 h i_1\ .
\ee Using again $h(i_1(\Lambda_{\hat g}))=i_2(\Lambda_{\hat g})$,  it follows that
\be (hg_1 h^{-1})\rvert_{ i_2(\Lambda_{\hat g})}={g_2}{}\rvert_{ i_2(\Lambda_{\hat g})}
 .
\ee As a result, since $hg_1h^{-1}$ coincides with $g_2$ both on $i_2(\Lambda_{\hat g})$ and on its orthogonal complement,  they must be the same.

In the other direction, suppose that $g_2=hg_1 h^{-1}$ for some $h\in O^+(\Gamma^{4,20})$. 
It is easy to see that $g_2$ acts trivially on $h(i_1(\Lambda_{\hat g})^\perp)$ and hence $h(i_1(\Lambda_{\hat g})^\perp) =i_2(\Lambda_{\hat g})^\perp $ and $hi_1(\Lambda_{\hat g})=i_2(\Lambda_{\hat g})$. 
From this and the identity 
\be h i_1 \hat g= h g_1 i_1=g_2 hi_1\ .
\ee 
we get
\be i_2^{-1}h i_1 \hat g= i_2^{-1} g_2 hi_1= \hat g  i_2^{-1}hi_1\ .
\ee This implies that the automorphism $s:=i_2^{-1}h i_1\in O(\Lambda_{\hat g})$ commutes with $\hat g$ and, by definition, $hi_1=i_2s$.
\end{proof}

As a result, the $O^+(\Gamma^{4,20})$ conjugacy classes corresponding to a given four-plane-preserving Frame shape $\pi_g$ of $\Co_0$ 
are in one-to-one correspondence with the classes of primitive embeddings $i:\Lambda_{\hat g}\hookrightarrow  \Gamma^{4,20}$, modulo $O^+(\Gamma^{4,20})\times  C_g(O(\Lambda_{\hat g}))$.
 We are now ready to prove Theorem \ref{th:numclasses}.

 \begin{proof} For each isomorphism class in $cl^+(\hat g)$, we choose once and for all an isomorphism $\gamma_K:q_K\stackrel{\cong}{\rightarrow} -q_{\Lambda_{\hat g}}$ and identify $O(q_{\Lambda_{\hat g}})$ with $\gamma_KO(q_K)\gamma_K^{-1}$. Note that any other such isomorphism is obtained by composing $\gamma_K$ with an element in $O(q_{\Lambda_{\hat g}})$. 

As shown above, the $O^+(\Gamma^{4,20})$ conjugacy classes with Frame shape $\pi_g$ are in one-to-one correspondence with classes of primitive embeddings $i:\Lambda_{\hat g}\hookrightarrow \Gamma^{4,20}$, modulo $O^+(\Gamma^{4,20})\times C_g(O(\Lambda_{\hat g}))$. We will prove that $O^+(\Gamma^{4,20})$  classes of embeddings are in one-to-one correspondence with pairs $(K,[t])$, where $K$ runs over the set of representatives of the  classes in $cl^+(\hat g)$, and $[t]$ is a double coset in $\overline{C}_{O(\Lambda_{\hat g})}(g)\backslash O(q_{\Lambda_{\hat g}})/\overline{O^+}(K)$. Given such a pair $(K,[t])$, choose an element $t\in O(q_{\Lambda_{\hat g}})$ in the double coset $[t]$ and consider the lattice
\be \Gamma_{K,t}:= \{(v,w)\in K^*\oplus \Lambda_{\hat g}^*\mid t\circ \gamma_K(\bar v)=\bar w\}\ . \ee Since $t\circ \gamma_K$ is an isomorphism $q_K\to -q_{\Lambda_{\hat g}}$, $\Gamma_{K,t}$ is an even unimodular lattice of signature $(4,20)$ and there is a sign structure preserving isomorphism  $\Gamma_{K,t}\cong \Gamma^{4,20}$. This isomorphism determines  a primitive embedding $i:\Lambda_{\hat g}\to \Gamma^{4,20}$ by $i(w)=(0,w)\in \Gamma_{K,t}\cong \Gamma^{4,20}$, which depends on the choice of the isomorphism $\Gamma_{K,t}\cong \Gamma^{4,20}$ only up to composition with $O^+(\Gamma^{4,20})$. Let $t'$ be a different element in the double coset $[t]$, so that $t'=\bar s t (\gamma_K \bar\sigma \gamma_K^{-1})$ for some $s\in C_g(O(\Lambda_{\hat g}))$ and $\sigma \in O^+(K)$. Then, there is an isomorphism $\Gamma_{K,t'}\to \Gamma_{K,t}\cong\Gamma^{4,20}$ given by $(v,w)\mapsto (\sigma(v),s^{-1}(w))$, so that the corresponding embeddings $i$ and $i'$ are related by $i(w)=i'(s(w))$. We conclude that there is a well-defined function mapping pairs $(K,[t])$ to classes of primitive embeddings modulo $O^+(\Gamma^{4,20})\times C_g(O(\Lambda_{\hat g}))$.

 Let us now consider a primitive embedding $i:\Lambda_{\hat g}\hookrightarrow \Gamma^{4,20}$. This embedding determines a description of $\Gamma^{4,20}$ as a sublattice of 
 $(\Gamma^g)^*\oplus i(\Lambda_{\hat g}^*)$:
 \be \Gamma^{4,20}=\{(v,i(w))\in (\Gamma^g)^*\oplus i(\Lambda_{\hat g}^*)\mid \gamma_i(\bar v)=\bar w\}\ ,
 \ee for some isomorphism $\gamma_i:q_{\Gamma^g} \stackrel{\cong}{\rightarrow} -q_{\Lambda_{\hat g}}$ of discriminant forms (see \eqref{gluing}). Since $\Gamma^g$ satisfies \eqref{genusGammag}, there exists a sign structure preserving isomorphism $\Gamma^g\cong K$ with one of the representatives of the classes in $cl^+(\hat g)$. This isomorphism determines an element $t\in O(q_{\Lambda_{\hat g}})$ such that $\gamma_i=t\gamma_K$; elements corresponding to different choices of the isomorphism are related by composition by $\overline{O}^+(K)$. If two primitive embeddings $i_1,i_2$ are related by $hi_1=i_2s$, for some $h\in O^+(\Gamma^{4,20})$ and $s\in C_g(O(\Lambda_{\hat g}))$, then there is a sign structure-preserving isomorphism $\sigma:\Gamma^{g_1}\stackrel{\cong}{\rightarrow}\Gamma^{g_2}$ given by
 \be h(v,0)=(\sigma(v),0),\qquad v\in {\Gamma^{g_1}}
 \ee and such that
 \be \bar s \gamma_{i_2} \bar \sigma =\gamma_{i_1}\ ,
 \ee where $\bar s$ and $\bar \sigma$ are the induced maps on the discriminant forms. Therefore, the elements $t_1,t_2\in O(q_{\Lambda_{\hat g}})$ such that $\gamma_{i_k}=t_k\gamma_K$ are related by 
 \be t_1=\bar st_2 (\gamma_K\bar\sigma\gamma_K^{-1})\ ,
 \ee so that $t_1$ and $t_2$ belong to the same double coset in $\overline{C}_{O(\Lambda_{\hat g})}(g)\backslash O(q_{\Lambda_{\hat g}})/\overline{O^+}(K)$. Therefore, there is a well-defined function that maps  classes of primitive embeddings into pairs $(K,[t])$, and this function is the inverse of the map defined above.
 \end{proof}

When the rank of $\Lambda_{\hat g}$ is exactly $20$, i.e. when the lattices in $cl^+(\hat g)$ are positive definite, the groups $O^+(K)$ and $C_g(O(\Lambda_{\hat g}))$ are finite and the number of classes can be computed directly.  This paper is accompanied by a text file containing the Magma program we wrote to perform this calculation. In the course of this calculation, we make use of results in \cite{HohnMason} classifying sublattices of the Leech lattice fixed by subgroups of $Co_0$.

When the rank of $\Lambda_{\hat g}$ is less than $20$, a brute force computation is not available, since the groups $O^+(K)$ have infinite order. Nevertheless, we can determine lower and upper bounds on the number of $O^+(\Gamma^{4,20})$ classes. Firstly, we know that, for each Frame shape, there is at least one $O^+(\Gamma^{4,20})$ class. Furthermore, when the corresponding twining genus has complex multiplier, the number of $O^+(\Gamma^{4,20})$ classes must be at least two.  Miranda and Morrison \cite{MirandaMorrison1,MirandaMorrison2} provide a practical algorithm to compute the number of right cosets $\sum_{K\in cl^+(g)} \left|  O(q_{\Lambda_{\hat g}})/\overline{O^+}(K)\right|$ in the case where $K$ is indefinite with rank at least three. This provides an upper bound on the number of $O^+(\Gamma^{4,20})$-classes in the case where $\Lambda_{\hat g}$ has rank less than $d$. This upper bound is almost always sharp -- it is either one or two depending on whether the twining genus has complex multiplier or not. The only exception is the Frame shape $1^{-4}2^53^46^1$, for which we were not able to determine whether the number of associated $O^+(\Gamma^{4,20})$ is one or two. On the other hand, even if there are two classes, they are necessarily the inverse of each other. This implies that the twining genera are the same. All results are collected in appendix \ref{a:results}.

\subsection{Twining Genera}\label{a:results}

In this subsection, we present the classification of four-plane-preserving $O^+(\Gamma^{4,20})$ conjugacy classes and our results on the corresponding twining genera. The results are summarized in table \ref{last_table_genera}. 

The first column $\pi_g$ contains the $42$ possible Frame shapes of four-plane-preserving classes of $O^+(\Gamma^{4,20})$, which are in one-to-one correspondence with the $42$ $Co_0$ conjugacy classes of automorphisms of the Leech lattice that fix a sublattice of rank at least $4$. 

For each  Frame shape $\pi_g$, the corresponding twining genera are weak Jacobi forms of weight $0$ and index $1$ for a subgroup $\modgrp\subseteq SL_2(\ZZ)$, defined in \eqref{modgrp}, with a multiplier $\psi$ of order $n$. In the second column we list the group $\modgrp$  and the order $n$ of the multiplier $\psi$.    We use the following notation to describe $\modgrp$.  If $\kappa$ is a subgroup of $(\mathbb{Z}/N\mathbb{Z})^\times$, then we define
$$\Gamma_\kappa(N)=\left\{\left.\left(\begin{array}{cc}a&b\\c&d\end{array}\right)\in \SL_2(\ZZ) \right| c\equiv0\text{ (mod $N$)},\quad a,d\text{ (mod $N$)}\in \kappa\right\}.$$
In this notation, the standard congruence subgroups $\Gamma_0(N)$ and $\Gamma_1(N)$ correspond to $\Gamma_\kappa(N)$ with  $\kappa=(\mathbb{Z}/N\mathbb{Z})^\times$ and $\kappa=\langle 1\rangle$ , respectively. 
Apart from these standard congruence subgroups, we also encounter groups with $\kappa =\langle -1\rangle = \{1,-1\}$.
We use the symbol $\Gamma_\kappa(N)_{|n}$ if the twining genus is a Jacobi form for the group $\Gamma_\kappa(N)$  with  multiplier of order $n>1$. When $n=1$, we simply write $\Gamma_\kappa(N)$. Note that, as discussed in appendix \ref{a:modularity}, in general specifying $\modgrp$ and $n$ is not sufficient to fix $\psi$ uniquely. 

The third and fourth columns report, respectively, the number of $O(\Gamma^{4,20})$ and $O^+(\Gamma^{4,20})$ conjugacy classes of each Frame shape. More precisely, in the third column, we put a symbol $\circ$ for each $O(\Gamma^{4,20})$ class. In the fourth column,  we put a symbol $\circ$ for each $O^+(\Gamma^{4,20})$-class that is fixed by world-sheet parity (i.e., it is a class also with respect to the full $O(\Gamma^{4,20})$ group) and a symbol $\updownarrow$ for each pair of $O^+(\Gamma^{4,20})$-classes that are exchanged under world-sheet parity (i.e., they merge to form a unique $O(\Gamma^{4,20})$ class). We are able to determine the number of such classes for all Frame shapes, except for $1^{-4}2^53^46^1$. In this case, there might be either a single class or two classes corresponding to inverse elements $g, g^{-1}\in O^+(\Gamma^{4,20})$. Notice that whenever $\modgrp$ is of the form $\gmo{N}$, there are always exactly two $O^+(\Gamma^{4,20})$ classes $[g],[g']$ (which may or may not be related by world-sheet parity), which are related by a power map, i.e. $[g']=[g^a]$ for some $a$ coprime to $N$. The corresponding twining genera are distinct, but are related by $\Gamma_0(N)$ transformations that are not in $\gmo{N}$. 

The fifth column reports, for each $O^+(\Gamma^{4,20})$ class, whether the corresponding twining genus $\eg_g$ is known in the following sense. A $\checkmark$ denotes a twining function which has been observed in an (IR) $K3$ NLSM. $LG$ denotes twining functions that have not been observed in an IR $K3$ NLSM but have been computed in \cite{cheng2015landau} as the twining genus of a UV symmetry in a LG orbifold which flows to a $K3$ NLSM in the IR, as discussed in \S \ref{LGtwin}. An $\times$ denotes a twining function which has not been observed in any $K3$ NLSM or LG orbifold anywhere in the literature. Finally, for those twining functions which have not been observed in a $K3$ NLSM, a $^\dagger$ denotes that nevertheless, the explicit twining function is fixed by the modularity arguments of \S \ref{s:twin_from_modul}.

In the last column, we report the list of Niemeier lattices $N$ such that the given Frame shape appears in the corresponding Niemeier group $G_N$. Equivalently, this is the list of those $N$ for which the corresponding Jacobi form $\phi^N_g$, arising from umbral or Conway moonshine is conjecturally equal to one of the twining genera $\eg_g$ of the given Frame shape. When $N=\Lambda$ and the $g$-invariant subspace has dimension exactly $4$,  we write two different symbols $\Lambda_+$ and $\Lambda_-$ to represent the two distinct genera $\phi_{g,+}^{\Lambda}$ $\phi_{g,-}^{\Lambda}$.   Niemeier lattices $N$ and $N'$ for which $\phi^N_g=\phi^{N'}_g$ are listed in the same row. More precisely, next to each  $O^+(\Gamma^{4,20})$-class for which the twining genus $\eg_g$ is known, we list all those $N$ for which $\phi^N_g=\eg_g$. In some cases, the same lattice $N$ appears in different rows for the same Frame shape: this occurs whenever two distinct $O^+(\Gamma^{4,20})$-classes have the same genus $\eg_g$.  Next to the $O^+(\Gamma^{4,20})$-classes for which the twining genus is unknown, we list  those $N$ for which, based on our conjectures,  $\phi^N_g$ is expected to coincide with $\eg_g$. For some Frame shapes, ($2^210^2$, $1^211^2$, $2^14^16^112^1$, $1^12^17^114^1$, $1^13^15^115^1$) we are not able to formulate any reasonable  conjecture associating $O^+(\Gamma^{4,20})$-classes with candidate twining genera $\phi_g^N$. In these cases, the alignment between classes and lists of Niemeier lattices has no meaning. This is represented by a square parenthesis in the third column.
A special case is the Frame shape $1^{-4}2^53^46^1$, to which there may correspond either one or two $O(\Gamma^{4,20})$ and $O^+(\Gamma^{4,20})$ classes; we emphasize our lack of certainty by writing $\circ,\circ^*$. If there are two, they are not related by world-sheet parity. However, the two classes are inverses of each other, so they have the same twining genus.

\newcolumntype{C}{>{$}c<{$}}
\rowcolors{2}{gray!11}{}
\begin{landscape}
\begin{tabularx}{\linewidth}{CCCCCC}
\pi_g & (\modgrp)_{|n}&    \begin{matrix}
O(\Gamma^{4,20})\\ \text{ classes}
\end{matrix} 
&    \begin{matrix}
O^+(\Gamma^{4,20})\\ \text{ classes}
\end{matrix} & \eg_g & \text{$\phi^N_g$} \\
\midrule\endhead
1^{24} & \SL_2(\ZZ)
&\begin{matrix}
\circ
\end{matrix} &\begin{matrix}
\circ
\end{matrix} 
 & \checkmark
 & \text{All}\\
 1^82^8 & \Gamma_0(2)  
&\begin{matrix}
\circ 
\end{matrix} &\begin{matrix}
\circ
\end{matrix} & \checkmark
& \text{All except $A_6^4,A_{12}^2,D_{12}^2,A_{24},D_{16}E_8,D_{24}$} 
\\

 1^{-8}2^{16} & \Gamma_0(2) &\begin{matrix}
\circ
\end{matrix} &\begin{matrix}
\circ
\end{matrix} & \checkmark 
&     \Lambda 
\\
2^{12} &\Gamma_0(2)_{|2}
&\begin{matrix}
\circ
\end{matrix}&\begin{matrix}
\circ
\end{matrix} & \checkmark& A_1^{24},A_2^{12},A_3^8,A_4^6,D_4^6,A_6^4,A_8^3,D_6^4,A_{12}^2,D_{12}^2,A_{24},\Lambda  
\\

 1^63^6 & \Gamma_0(3) & 
\begin{matrix}
\circ
\end{matrix} &\begin{matrix}
\circ
\end{matrix} & \checkmark
&
 A_1^{24},A_2^{12},A_3^8,A_5^4D_4,D_4^6,A_6^4,D_6^4,E_6^4,\Lambda
\\

 1^{-3}3^{9}  & \Gamma_0(3) &
 \begin{matrix}
\circ
\end{matrix} &\begin{matrix}
\circ
\end{matrix}  & \checkmark
&  \Lambda 
\\
 3^{8}    & \Gamma_0(3)_{|3} &\begin{matrix}
\circ
\end{matrix}   &
  \begin{matrix} \updownarrow\end{matrix}

 & \begin{matrix}
LG^\dagger
\\ \times^\dagger
\end{matrix} & \begin{matrix}
A_1^{24}, A_4^6, D_8^3 \\
A_2^{12}, D_4^6, A_8^3, E_8^3, \Lambda 
\end{matrix}
\\

 1^42^24^4 & \Gamma_0(4)
&\begin{matrix}
\circ
\end{matrix} &\begin{matrix}
\circ
\end{matrix} & \checkmark
&A_1^{24},A_2^{12},A_3^8,A_4^6,A_5^4D_4,D_4^6,A_9^2D_6,\Lambda 
\\

 1^82^{-8}4^8 & \Gamma_0(4)&\begin{matrix}
\circ
\end{matrix}   &\begin{matrix}
\circ
\end{matrix}  &  \times
& \Lambda
\\
 1^{-4}2^64^4 & \Gamma_0(4)  &\begin{matrix}
\circ
\end{matrix}  &\begin{matrix}
\circ
\end{matrix}  & \checkmark
& \Lambda
\\
 2^{-4}4^8  & \Gamma_0(4) 
&\begin{matrix}
\circ
\end{matrix}  & \begin{matrix} \updownarrow\end{matrix}  & \begin{matrix}\checkmark
 \\\checkmark
 \end{matrix}
 & \begin{matrix}
 \Lambda_+ \\ \Lambda_-
\end{matrix} 
\\
 2^{4}4^4  & \Gamma_0(4)_{|2} &\begin{matrix}
\circ
\end{matrix} &\begin{matrix}
\circ
\end{matrix}  & \checkmark
& A_1^{24},A_2^{12},A_3^8,D_4^6,A_7^2D_5^2,E_6^4,\Lambda  
\\

 4^6 & \Gamma_0(4)_{|4}
 &\begin{matrix}
\circ
\end{matrix} &   \begin{matrix} \updownarrow\end{matrix}  & 
\begin{matrix}
LG^\dagger\\ \times^\dagger
\end{matrix} 
 &\begin{matrix} A_1^{24}, A_2^{12}, A_6^4, D_6^4\\ A_3^8, A_4^6, A_{12}^2, \Lambda\end{matrix}
\\

 1^45^4 & \Gamma_0(5) 
&\begin{matrix}
\circ
\end{matrix} &\begin{matrix}
\circ
\end{matrix} & \checkmark
&A_1^{24},A_2^{12},A_4^6,D_4^6,\Lambda
\\

 1^{-1}5^5 & \gmo{5} &\begin{matrix}
\circ
\end{matrix} & \begin{matrix}\updownarrow\end{matrix}  & \begin{matrix}\checkmark
\\\checkmark
\end{matrix}   
 &  \begin{matrix}
\Lambda_+\\ \Lambda_-
\end{matrix}
\\
 1^22^23^26^2 & \Gamma_0(6)
&\begin{matrix}
\circ
\end{matrix} &\begin{matrix}
\circ
\end{matrix}  & \checkmark
&A_1^{24},A_2^{12},A_3^8,D_4^6,E_6^4,\Lambda
\\

 1^{4}2^13^{-4}6^5 & \Gamma_0(6) &\begin{matrix}
\circ
\end{matrix}  & \begin{matrix}
\circ
\end{matrix}   & \checkmark
 & \Lambda
\\
1^{5}2^{-4}3^16^4 & \Gamma_0(6) &\begin{matrix}
\circ
\end{matrix} & \begin{matrix}
\circ
\end{matrix}   & \checkmark
&\Lambda
\\
 1^{-2}2^43^{-2}6^4 & \Gamma_0(6)
&\begin{matrix}
\circ
\end{matrix}   & \begin{matrix} \updownarrow\end{matrix} 
 & \begin{matrix}\checkmark
 \\\checkmark
 \end{matrix} &  \begin{matrix}
\Lambda_+\\ \Lambda_-
\end{matrix}
\\
 1^{-1}2^{-1}3^36^3 & \Gamma_0(6)
&\begin{matrix}
\circ
\end{matrix}  & \begin{matrix} {\multirow{2}{*}{$\updownarrow$}}\\ {}\end{matrix} 
 & \begin{matrix}\checkmark
 \\\checkmark
 \end{matrix}
& \begin{matrix}
\Lambda_+\\ \Lambda_-
\end{matrix} \\
 1^{-4}2^53^46^1 & \Gamma_0(6) &\begin{matrix}
\circ,\circ^{*}
\end{matrix} & \begin{matrix}
\circ,\circ^{*}
\end{matrix} 
 & \checkmark
 & \Lambda\\
 2^36^3 & \Gamma_0(6)_{|2} &\begin{matrix}
\circ
\end{matrix}  & \begin{matrix}
\circ
\end{matrix}   & \checkmark
&  A_2^{12},A_3^8,A_6^4,\Lambda
\\ 
 6^{4} & \Gamma_0(6)_{|6}&\begin{matrix}
\circ\\\\\circ
\end{matrix}  & \begin{matrix} \updownarrow\\\\\updownarrow\end{matrix}   & \begin{matrix}
\times^\dagger\\LG
\\ \times^\dagger\\\times
\end{matrix}
& \begin{matrix}
A_1^{24}, A_4^6 \\ A_2^{12}, D_4^6,\Lambda_+ \\ A_1^{24}, A_4^6 \\ A_8^3, \Lambda_-
\end{matrix} 
\\

 1^37^3 & \Gamma_0(7) 
&\begin{matrix}
\circ
\end{matrix} &\begin{matrix}
\circ
\end{matrix}   & \checkmark
&A_1^{24},A_3^8,\Lambda
\\

 1^22^14^18^2 & \Gamma_0(8) 
&\begin{matrix}
\circ
\end{matrix} &\begin{matrix}
\circ
\end{matrix}   & \checkmark 
&A_1^{24},A_2^{12},A_5^4D_4,\Lambda
\\

 1^42^{-2}4^{-2}8^4 & \Gamma_0(8) &\begin{matrix}
\circ
\end{matrix}  &  \begin{matrix} {\multirow{2}{*}{$\updownarrow$}}\\ {}\end{matrix}  
  & \begin{matrix}\times\\\times\end{matrix}  
 &\begin{matrix}
\Lambda_+\\ \Lambda_-
\end{matrix} 
\\
 1^{-2}2^34^18^2 & \gmo{8} &\begin{matrix}
\circ
\end{matrix} &  \begin{matrix} \updownarrow\end{matrix}  & \begin{matrix}\checkmark
\\\checkmark
\end{matrix}
 &\begin{matrix}
\Lambda_+\\ \Lambda_-
\end{matrix} 
\\
 2^44^{-4}8^4  & \Gamma_0(8)_{|2} 
 & \begin{matrix} \circ \\ \circ\end{matrix}   & \begin{matrix} \circ \\ \circ\end{matrix}  & \begin{matrix}\checkmark\\\times\end{matrix} 
  & \begin{matrix}
\Lambda_+\\ \Lambda_-
\end{matrix}
\\
 4^28^2  & \Gamma_0(8)_{|4}&\begin{matrix}
\circ\\\\\circ
\end{matrix}    & \begin{matrix} \updownarrow\\\\\updownarrow\end{matrix}  & \begin{matrix}
\times^\dagger\\LG
\\\times^\dagger\\ \times
\end{matrix} 
& \begin{matrix}
A_2^{12}\\ A_3^8, \Lambda_+ \\A_2^{12}\\E_6^4, \Lambda_-
\end{matrix}
\\
 1^33^{-2}9^3  & \Gamma_0(9) & \begin{matrix} \circ \\ \circ \end{matrix}  & \begin{matrix} \circ \\ \circ \end{matrix}  & \begin{matrix}\checkmark
 \\\checkmark
 \end{matrix}
 & \begin{matrix}
\Lambda_+\\ \Lambda_-
\end{matrix} 
\\
 1^{2}2^15^{-2}10^3   & \gmo{10} 
&\begin{matrix}
\circ
\end{matrix}  & \begin{matrix} \updownarrow\end{matrix} & \begin{matrix}\checkmark
\\\checkmark
\end{matrix}
 & \begin{matrix}
\Lambda_+\\ \Lambda_-
\end{matrix} 
\\
 1^{3}2^{-2}5^110^2  & \gmo{10} 
&\begin{matrix}
\circ
\end{matrix}  & \begin{matrix} {}\\\updownarrow\\ {}\end{matrix} &\begin{matrix}\checkmark
\\\checkmark
\end{matrix} 
 & \begin{matrix}
\Lambda_+\\ \Lambda_-
\end{matrix} 
\\
 1^{-2}2^35^210^1  & \gmo{10} 
 &\begin{matrix}
\circ
\end{matrix}  & \begin{matrix} \updownarrow\end{matrix}
 & \begin{matrix}\checkmark
 \\\checkmark
 \end{matrix}
& \begin{matrix}
\Lambda_+\\ \Lambda_-
\end{matrix}
\\
 2^210^2 & \Gamma_0(10)_{|2}  
  & 
 \left[\;\begin{matrix} \circ\\  \circ \\  \circ \\ \circ \end{matrix}\;\right] & 
 \left[\;\begin{matrix} \updownarrow\\  \circ \\  \circ \\ \circ \end{matrix}\;\right] 
  & \begin{matrix}
LG
\\ \times
\end{matrix}
 & \begin{matrix}
A_1^{24}, A_2^{12},\Lambda_+ \\ A_4^6,\Lambda_-
\end{matrix}
\\

 1^211^2 & \Gamma_0(11)   & 
\left[\;\begin{matrix} \circ\\  \circ \\  \circ \end{matrix}\;\right]
 &\left[\;\begin{matrix} \updownarrow\\  \circ \\  \circ \end{matrix}\;\right]
 & \begin{matrix}
\times\\ LG
\end{matrix}
 & \begin{matrix}
 A_1^{24}, \Lambda_+ \\ A_2^{12},\Lambda_-
\end{matrix} 
\\
 1^{2}2^{-2}3^24^{2}6^{-2}12^2  & \Gamma_0(12)  & \begin{matrix}
\circ
\end{matrix}  & \begin{matrix} {\multirow{2}{*}{$\updownarrow$}}\\ {}\end{matrix}  & \begin{matrix}\times\\\times\end{matrix}  
 &  \begin{matrix}
\Lambda_+\\ \Lambda_-
\end{matrix}
\\
 1^{1}2^23^14^{-2}12^2  & \gmo{12} 
 & \begin{matrix}
\circ\\ \circ
\end{matrix}   & \begin{matrix}  \circ\\ \circ\end{matrix}  & \begin{matrix}\checkmark
\\\checkmark
\end{matrix}
  & \begin{matrix}
\Lambda_+\\ \Lambda_-
\end{matrix}
\\
1^{2}3^{-2}4^16^212^1  & \gmo{12} 
& \begin{matrix}
\circ\\ \circ
\end{matrix}  & \begin{matrix}
\circ\\ \circ
\end{matrix}  & \begin{matrix}\checkmark
\\\checkmark
\end{matrix} 
& \begin{matrix}
\Lambda_+\\ \Lambda_-
\end{matrix} 
\\
 1^{-2}2^23^24^112^1  & \gmo{12} 
& \begin{matrix}
\circ
\end{matrix}  & \begin{matrix} \updownarrow\end{matrix}  & \begin{matrix}\checkmark
\\\checkmark
\end{matrix}
& \begin{matrix}
\Lambda_+\\ \Lambda_-
\end{matrix} 
\\

 2^14^16^112^1 & \Gamma_0(12)_{|2}   
  &\left[\;\begin{matrix} \circ\\  \circ \\  \circ \end{matrix}\;\right]  & 
  \left[\;\begin{matrix} \updownarrow\\  \circ \\  \circ \end{matrix}\;\right]
  & \begin{matrix}
\times\\ \times
\end{matrix}
& \begin{matrix}
A_1^{24},\Lambda_+ \\ D_4^6, \Lambda_-
\end{matrix} 
 \\
 1^12^17^114^1 & \Gamma_0(14) 
 &\left[\;\begin{matrix} \circ\\  \circ \\  \circ \end{matrix}\;\right] 
 & \left[\;\begin{matrix} \updownarrow\\  \circ \\  \circ \end{matrix}\;\right]  & \begin{matrix}
\times\\ LG
\end{matrix}

&  \begin{matrix} A_1^{24},\Lambda_+ \\ A_3^8,\Lambda_-\end{matrix}
\\

 1^13^15^115^1  & \Gamma_0(15) & 
\left[\;\begin{matrix} \circ\\ \circ \\  \circ \end{matrix}\;\right]
 &\left[\;\begin{matrix} \updownarrow\\  \circ \\  \circ \end{matrix}\;\right] 
 & \begin{matrix}
\times\\ LG
\end{matrix}
 
&  \begin{matrix}
 A_1^{24},\Lambda_+\\D_4^6, \Lambda_-
\end{matrix}
\\

\bottomrule \caption{\label{last_table_genera}\footnotesize{ 
The conjugacy classes, status of the twining genera, and the corresponding Niemeier moonshine, for all the four-plane preserving Frame shapes.}}
\end{tabularx}
\end{landscape}

\bibliographystyle{utphys}
\bibliography{Refs}

\end{document}